\documentclass[11pt,reqno]{amsart}  

\usepackage{amscd} 
\usepackage{mathrsfs}
\usepackage{amsmath,amsbsy,amsthm,amsfonts,amssymb,esint}
\usepackage{mathtools}
\usepackage[unicode,psdextra]{hyperref}
\usepackage{graphicx}
\usepackage{tikz}
\usepackage{bm}
\usepackage[normalem]{ulem}
\usepackage{soul}
\usepackage{color}

\usepackage[sort&compress,capitalise,nameinlink]{cleveref}
\crefname{section}{\textsection}{\textsection}
\crefname{subsection}{\textsection}{\textsection}
\crefname{assumption}{Assumption}{Assumptions}
\usepackage{dsfont}
\usepackage{enumerate}

\DeclareMathAlphabet\mathbfcal{OMS}{cmsy}{b}{n}

\newtheorem{theorem}{Theorem}[section]
\newtheorem{lemma}{Lemma}[section]
\newtheorem{proposition}{Proposition}[section]
\newtheorem{corollary}{Corollary}[section] 
\theoremstyle{definition}
\newtheorem{assumption}[]{Assumption}

\newtheorem{remark}{Remark}[section] 
\numberwithin{equation}{section}

\newcommand{\N}{\mathbb{N}}

\newcommand{\Q}{\mathbb{Q}}
\newcommand{\R}{\mathbb{R}}
\newcommand{\C}{\mathbb{C}}

\newcommand{\one}{\mathds{1}}
\newcommand{\xv}{\mathbf{x}}

\newcommand{\yv}{\mathbf{y}}

\newcommand{\diff}{\mathrm{d}}
\newcommand{\eps}{\varepsilon}

\newcommand{\supp}{\mathrm{supp}\,}

\def\AA{\bm{\mathrm{A}}}
\def\VV{V}
\def\H0{H_{0}}
\def\Rz{R_{0}^{\lambda}}
\def\G0{G_{0}^{\lambda}}
\def\K0{K_{0}}
\def\Q0{Q_{0}}
\def\QN{Q_{N}}
\def\GaN{\Xi_{N}^{\lambda}}
\def\HN{H_{N}}
\def\RN{R_{N}^{\lambda}}

\def\UU{U}
\def\muU{\mu_\infty}
\def\nuU{\nu_\infty}

\def\pp{p}

\def\Qinf{Q_{\infty}}
\def\Hinf{H_{\infty}}
\def\gs{\zeta_0}

\def\HH{\mathcal{H}}

\def\Gf0{g_{0}^{\lambda}}
\def\Gt0{g_{0}}

\def\hh{h_{\lambda}}
\def\teA{\vartheta_{\AA}}

\def\Bor{\mathcal{B}}
\def\MeC{\mathscr{M}}

\def\Hil{\mathscr{H}}

\def\dom{\mathscr{D}}
\def\GN{G_{N}^{\lambda}}

\renewcommand{\leq}{\leqslant}
\renewcommand{\geq}{\geqslant}

\newcommand{\lf}{\left}
\newcommand{\ri}{\right}
\newcommand{\disp}{\displaystyle}
\newcommand{\tx}{\textstyle}

\newcommand{\braket}[2]{\lf\langle #1|#2 \ri\rangle}
\newcommand{\braketr}[2]{\lf\langle #1\lf|#2\ri. \ri\rangle}
\newcommand{\braketl}[2]{\lf.\lf\langle #1\ri|#2 \ri\rangle}

\newcommand{\meanlrlr}[3]{\lf\langle #1\lf|#2\ri|#3\ri\rangle}

\newcommand{\beq}{\begin{equation}}
\newcommand{\eeq}{\end{equation}}
\newcommand{\bdm}{\begin{displaymath}}
\newcommand{\edm}{\end{displaymath}}
\newcommand{\bdn}{\begin{eqnarray}}
\newcommand{\edn}{\end{eqnarray}}
\newcommand{\bay}{\begin{array}{c}}
\newcommand{\eay}{\end{array}}
\newcommand{\ben}{\begin{enumerate}}
\newcommand{\een}{\end{enumerate}}
\newcommand{\beqn}{\begin{eqnarray}}
\newcommand{\eeqn}{\end{eqnarray}}
\newcommand{\bml}[1]{\begin{multline} #1 \end{multline}}
\newcommand{\bmln}[1]{\begin{multline*} #1 \end{multline*}}

\title[Homogenization of point interactions]{Homogenization of point interactions}

\author{Domenico Cafiero}
\address{Dipartimento di Matematica, Politecnico di Milano, P.zza Leonardo da Vinci, 32, 20133, Milano, Italy}
\email{domenico.cafiero@polimi.it}

\author{Michele Correggi}
\address{Dipartimento di Matematica, Politecnico di Milano, P.zza Leonardo da Vinci, 32, 20133, Milano, Italy}
\email{michele.correggi@polimi.it}
\urladdr{https://sites.google.com/view/michele-correggi}

\author{Davide Fermi}
\address{Dipartimento di Matematica, Politecnico di Milano, P.zza Leonardo da Vinci, 32, 20133, Milano, Italy\\
and Istituto Nazionale di Fisica Nucleare, Sezione di Milano, Italy}
\email{davide.fermi@polimi.it}
\urladdr{https://fermidavide.com}

\begin{document}

\begin{abstract} 
	We consider a non-relativistic quantum particle in $\mathbb{R}^d$, $d=2$ or $d = 3$, interacting with singular zero-range potentials concentrated on a large collection of points. We analyze the homogenization regime where the intensities of the singular potentials and the distances between the points simultaneously go to zero as their number grows, while the total interaction strength remains finite. Assuming that the singular potentials have negative scattering lengths and are uniformly distributed, we prove the strong resolvent convergence as $N \to \infty$ of the family of operators to a Schr\"{o}dinger operator with a regular electrostatic potential. The result is obtained via  $\Gamma$-converge of the associated quadratic forms. Moreover, in presence of an external trapping potential, the convergence is lifted to uniform resolvent sense.
\end{abstract}

\keywords{point interactions, zero-range potentials, quadratic forms, $\Gamma$-convergence, resolvent convergence.}
\subjclass[2020]{35J10,	
				35P05,	
				47A07, 	
				47B93,	
				81Q10,	
				35B27.	
				}

\maketitle

\tableofcontents

\section{Introduction}

\noindent
Zero-range potentials were introduced in the mid `30s to model the scattering of low-energy neutrons by atomic nuclei and to describe related atomic systems \cite{BP35,Th35,Fe36}. In this context, ``potentials'' of Dirac-delta type provide an effective approximation whenever the particle wavelength greatly exceeds the range of very strong forces.
	The first rigorous construction of a point interaction as a singular perturbation of the free Laplacian was given by Berezin and Faddeev  in 1961 \cite{BF61}. Since then the mathematical description of zero-range potentials has been extensively studied within the framework of self-adjoint extensions of symmetric operators, using classical tools such as von Neumann theory, quadratic forms, Krein-type resolvent formulas, and boundary triplets. Notably, such singular interactions can be realized as renormalized limits of resonant short-range potentials. For comprehensive expositions, we refer to the monographs \cite{AGHKH88,AK00} and the references therein.

	In this work, we study the quantum dynamics of a non-relativistic particle moving in a smooth electromagnetic background and interacting with several point-like scatterers. Formally, the system is described by a Schrödinger operator of the form
    \begin{equation}	\label{eq:formalHN}
		(- i \nabla + \AA)^2 + \VV + \mbox{``}\tx\sum_{j=1}^N \gamma_j\,\delta_{\xv_j}\,\mbox{''},
    \end{equation}
where $\AA$ is a smooth vector potential and $\VV$ a regular electrostatic potential. The points $ \xv_j \in \R^d$ identify the positions of the scatterers, although the expression above is purely formal (see below) and the zero-range perturbation should not be thought of as a potential. Our aim is to analyze the homogenization (or mean-field) regime in which the number of scatterers $N$ diverges, while they cluster within a fixed space region and their interaction strengths scale proportionately. In this limit, the collective effect of the scatterers results in an additional electrostatic potential, determined by the distribution and individual strengths of the centers.
	
	This type of problem in three dimensions was first addressed in \cite{FHT88} (see also \cite{FT92,FT93} and references therein), where the authors treated the point interactions as independent and identically distributed random variables. Under a suitable scaling, they showed that the associated family of Hamiltonians converge in probability, in strong resolvent sense, to a Schr\"odinger operator with an additional electrostatic potential, and they also provided quantitative estimates for the fluctuations around this limit. Their approach relies on explicit resolvent formulas, combined with earlier homogenization results for the Laplacian on domains with many small holes (see, e.g., \cite{RT75,PV80,Oz83,FOT85,FOT87}). For related works in one or more dimensions we refer to \cite{BFT98} and \cite{EN03,Oz06}, respectively. A somewhat similar model is the so-called Rayleigh gas, which is supposed to describe the low-energy zero-range interaction of a quantum particle with $ N $ scattering centers \cite{DFT,CDF}, which are harmonically bounded to their equilibrium positions. Let us also mention that the approach originally developed in \cite{FHT88} was applied more recently in \cite{BCT18} to derive a probabilistic characterization of the effective dynamics of a quantum particle in a quantum Lorentz gas with Gross–Pitaevskii-type interactions. 
	
	Explicit Krein-type resolvent formulas play a crucial role in all the works mentioned above. In contrast, here we develop an approach based on quadratic forms and $\Gamma$-convergence techniques, using standard results to deduce strong resolvent convergence as a direct consequence. We refer to \cite{Br02,DM93} for classical references on $\Gamma$-convergence. Our framework is arguably more flexible, allowing the inclusion of smooth electromagnetic background fields and a unified treatment of models in dimensions $d = 2$ and $d = 3$. The one-dimensional case could also be included, but we omit it since it is qualitatively much simpler (see, e.g., \cite{BFT98}). Furthermore, our assumptions seem to be more natural, being closer in spirit to those in \cite{FHT88}, and avoiding some of the technical requests introduced in \cite{FHT88,EN03,Oz06}. We also stress that $\Gamma$-convergence allow us to upgrade the convergence to uniform resolvent sense if the original system has discrete spectrum, which occurs for instance if the external electromagnetic field induces trapping. However, as a drawback we have to restrict to point interactions with negative scattering lengths to ensure equi-coerciveness of the quadratic forms.
	
\medskip
    
    The paper is organized as follows. 
    In \cref{sec:main}, after reporting the main definitions and fixing the notation, we present the precise assumtions and outline a heuristic derivation of the limiting operator. Therein, we also state our main rigorous results. 
    \cref{sec:prel} collects some preliminaries on $\Gamma$-convergence, complex measures, integral kernels and Hilbert scales.
    In \cref{sec:proofs}, we treat separately the proofs of the $\Gamma-\liminf$ and $ \Gamma-\limsup $, and the proof of uniform resolvent convergence under the trapping assumption.
\bigskip

\begin{footnotesize}
\noindent
\textbf{Acknowledgments.}
The present research has been supported by the {\it MUR} grant ``Dipartimento di Eccellenza 2023-2027'' of Dipartimento di Matematica, Politecnico di Milano and 
{\it PRIN 2022} grant ``ONES - OpeN and Effective Quantum Systems'' (n. 2022L45WA3). DF acknowledges the support of
INdAM-GNFM through {\it Progetto Giovani 2020} ``Emergent Features in Quantum Bosonic Theories and Semiclassical Analysis''.
MC acknowledges the support of {\it PNRR Italia Domani} and {\it Next Generation Eu} through the {\it ICSC National Research Centre for High Performance Computing, Big Data and Quantum Computing}. DC and DF are grateful to Alessandro Teta for insightful and stimulating discussions.
\end{footnotesize}

\section{Setting and Main Results}\label{sec:main}

Let us first consider the Schr\"odinger operator
	\begin{equation}\label{eq:H0def}
		\H0 := (- i \nabla + \AA)^2 + \VV ,
	\end{equation}
where $\AA \in C^{\infty}(\R^d;\R^d)$ grows at most polynomially at $ \infty $, and $\VV \in L_{\mathrm{loc}}^{p}(\R^d;\R^{+})$ for some $ p > 3 $ is non-negative. These assumptions are certainly not optimal and there is room for improvement (see \cref{rem: assumptions} below), but we stick to them for the sake of concreteness. Note, in particular, that the cases of uniform external magnetic fields and polynomial trapping electric potentials are included. We also remark that $\H0$ is a self-adjoint operator on $L^2(\R^d)$, with domain
	\begin{equation*}
		\dom(\H0) := \lf\{ \psi \!\in\! L^2(\R^d) \;\big|\; (- i \nabla \!+\! \AA)^2 \psi, \VV \psi  \in\! L^2(\R^d) \ri\} .
	\end{equation*}
It is evident that $\H0$ is positive definite, thus in particular $\sigma(\H0) \subset [0,\infty)$. Taking this into account, in the sequel we will refer to the associated resolvent operator, i.e., for $\lambda > 0$,
	\begin{equation*}
		\Rz := (\H0 +\lambda^2)^{-1} : L^2(\R^d) \to \dom(\H0),
	\end{equation*}
	and to its integral kernel (a.k.a. Green function) $\G0(\xv,\yv)$.
Let us also mention that the quadratic form corresponding to the Hamiltonian $H_0$ is given by
	\begin{eqnarray*}
		\dom[\Q0] & := & \lf\{\psi \!\in\! L^2(\R^d) \;\big|\; (- i \nabla \!+\! \AA) \psi\!, \sqrt{\VV} \psi \in\! L^2(\R^d) \ri\} ,\\
		\Q0[\psi] & := & \big\|(-i \nabla \!+\! \AA) \psi\big\|_2^2 + \big\|\sqrt{V}\psi\big\|_2^2\,,
	\end{eqnarray*}
where $ \lf\|\cdot \ri\|_2 : = \lf\|\cdot \ri\|_{L^2(\R^d)} $ for short.

A proper self-adjoint operator in $L^2(\R^d)$ matching the heuristic expression \eqref{eq:formalHN} can be rigorously defined as a singular perturbation of $\H0$, {\it e.g.}, via the construction of its quadratic form, see \cite{Te90}: for any fixed $\lambda>0$, we set
	\begin{eqnarray}
		\dom[\QN] & := & \lf\{\psi = \phi_\lambda + \tx\sum_{j=1}^N q_j\, \G0(\,\cdot\,, \xv_j) \in L^2(\R^d) \;\big|\; \phi_{\lambda}\!\in\! \dom[\Q0],\;\, q_1,...\,,q_N \!\in\! \C \ri\} , \nonumber  \\
		\QN[\psi] & := & \Q0 \lf[\phi_{\lambda}\ri] + \lambda^2 \lf\|\phi_{\lambda} \ri\|_{2}^2 - \lambda^2 \lf\|\psi\ri\|_{2}^{2} + \tx\sum_{i,j=1}^N  \overline{q_i}\, \big[\GaN\big]_{ij}\, q_j\,, \label{eq:QNdef}
	\end{eqnarray}
where $\overline{q_i}$ is the complex conjugate of $q_i$ and $\GaN$ is the $N\times N$ Hermitian matrix with entries
	\begin{equation}\label{eq:GaN}
		\big[\GaN\big]_{ij} \!:=\! \left(\alpha_j + \lim_{\xv \to \xv_j}\! \big( G_0^{\lambda_0}(\xv,\xv_j) - \G0(\xv,\xv_j) \big)\!\right) \delta_{ij} - \G0(\xv_i,\xv_j)\,(1-\delta_{ij})\,, \quad \mbox{for }i,j \!\in\! \{1,...\,,N\}\,.
	\end{equation}
Here, $\lambda_0 > 0$ is a fixed reference parameter and $\alpha_j \in \R$ are finite constants labeling the form\footnote{For $d = 3$ and zero electromagnetic potentials, the constants $\alpha_j$ and the coefficients $\gamma_j$ in \eqref{eq:formalHN} satisfy the formal identity $\gamma_j = -c\,\varepsilon + \alpha_j \varepsilon^2$, for a universal constant $ c > 0 $ and for $\varepsilon \to 0$, see \cite{BF61} and \cite[\S II.1.2 and Appendix H]{AGHKH88}, {\it i.e.}, the coefficients $ \alpha_j $ are obtained through a suitable renormalization procedure from the ``bare'' parameters $ \gamma_j $. Similar relations hold for $d = 2$ and non-zero electromagnetic fields.} $\QN$. 
We note that the coefficients $\alpha_j$ introduced here differ from those used in \cite{AGHKH88} by a finite additive constant, depending on $\lambda_0$.
By direct generalization of the arguments outlined in \cite{Te90}, it can be inferred that the quadratic form $\QN$ is independent of $\lambda>0$, closed and bounded from below, for both $d = 2$ and $d = 3$. Therefore, it uniquely identifies a self-adjoint operator in $L^2(\R^d)$. This is given by
    	\begin{eqnarray}
		&& \hspace{-1cm}\dom(\HN) =  \lf\{\psi=\phi_\lambda+ \tx\sum_{j=1}^N q_j\, \G0(\,\cdot\,, \xv_j) \in L^2(\R^d) \;\,\big|\; \phi_{\lambda}\!\in\! \dom[\H0],\;\, q_1,...\,,q_N \!\in\! \C\,,\;\, \mbox{ and} \ri. \nonumber \\
		&& \hspace{7cm} \lf. \phi_{\lambda}(\xv_i) = \tx\sum_{j=1}^N\! \big[\GaN\big]_{ij}\, q_j\; \mbox{ for }\, i = 1,...\,,N \ri\} interactio, \nonumber  \\
		 && \hspace{-1cm}\lf(\HN + \lambda^2\ri) \psi =  \lf(\H0 + \lambda^2\ri) \phi_{\lambda}\,. \label{eq:HNdef}
	\end{eqnarray}
The resolvent $\RN := (\HN +\lambda^2)^{-1} : L^2(\R^2) \to \dom(\HN) $ is an integral operator with kernel (see \cite[\S II.1.1, Eq. (1.1.33)]{AGHKH88} and \cite{Po01})
	\begin{equation*}
		\GN(\xv,\yv) = \G0(\xv,\yv) + \tx\sum_{i,j=1}^N \big[\GaN\big]^{-1}_{ij}\, \G0(\xv, \xv_i)\,\G0(\xv_j, \yv)\,.
	\end{equation*}	
As an aside, let us also mention that, upon varying the coefficients $\alpha_j \in \R \cup \{\infty\}$, the corresponding family of Hamiltonians $\HN$ comprises all the so-called {\em local} self-adjoint extensions of the (closable) symmetric operator $\H0 \!\upharpoonright\! C^{\infty}_c(\R^d \setminus \{\xv_1,...\,,\xv_N\})$.

	\begin{remark}[Regularity of $\AA$ and $\VV$]
		\label{rem: assumptions}
		The stated hypotheses on the external potentials $\AA$ and $\VV$ are not meant to be optimal. The main reason why we assume such a regularity on the external potentials is to control the behavior of the Green function $ G_0^\lambda(\xv,\yv) $ on the diagonal, exploiting results proven in \cite{BK13,BGP07} (see \cref{lemma:G0Gfineq,lemma: G regularity} below), which presumably hold also for less regular $ \AA $ and $ \VV $. However, a throughout analysis of the minimal regularity conditions lies beyond the scope of this work.	
		Note also that the positivity assumption on $ \VV $ can be relaxed and  boundedness from below is certainly enough (it suffices to shift the whole quadratic form of a positive energy), but Kato-smallness of the negative part of $ \VV $ is also fine. 
	On the other hand, let us emphasize that the behavior of $\AA,\VV$ at infinity is unconstrained. In particular, these potentials are allowed to diverge at infinity, which may result in a trapping mechanism for the particle whose dynamics is governed by $\H0$.
	\end{remark}

	\begin{remark}[Well-posedness of $\QN$ and $\HN$]\label{rem:wellQNHN}
		We first notice that, for any fixed $ \yv \in \R^d$, the function $ \xv \mapsto \G0(\xv,\yv)$ is indeed square-integrable on $\R^d$, $d \leqslant 3$ \cite[Thm. 19]{BGP07}. Moreover, as a consequence of the second resolvent identity, the difference $G_0^{\lambda_0}(\xv,\yv) - \G0(\xv,\yv)$ is jointly continuous in $(\xv,\yv) \in \R^d \times \R^d$, again if $d \leqslant 3$ \cite[Lemma 9]{BGP05}. This ensures that the limit for $ \xv \to \xv_j$ appearing in the definition \eqref{eq:GaN} of $\GaN$ exists and is finite. We stress that this limit does not depend on $N$.
		On the other hand, in view of the basic identity $\G0(\xv,\yv) = \overline{\G0(\yv,\xv)}$, it is easy to check that the matrix $\GaN$ is Hermitian; in particular, the diagonal limit mentioned above is real-valued. Accordingly, the form $\QN$ is also real-valued. As an aside, let us mention that when $\AA = \bm{0}$ (no magnetic field), the Hamiltonian  $\H0$ commutes with complex conjugation; in this case $\G0(\xv,\yv)$ is real valued and thus symmetric, {\em i.e.}, $\G0(\xv,\yv) = \G0(\yv,\xv)$. Further properties of $\G0(\xv,\yv)$ will be discussed in \cref{subsec:G0prop}.
		Finally, since $\dom(\H0) \subset H^2_{\mathrm{loc}}(\R^d)$, and hence $\dom(\H0) \subset C^0(\R^d)$ by Sobolev embedding \cite[Prop. 2.13]{CP82}, the pointwise evaluation $\phi_\lambda(\xv_i)$ appearing in the definition \eqref{eq:HNdef} of $\dom(\HN)$ is well defined.
	\end{remark}

Our goal in this work is to analyze the limiting behavior of $\HN$ as $N\to \infty$ in a mean-field regime, where the interaction centers accumulate within a fixed and bounded spatial region, while their individual intensities weaken proportionally, so as to keep the total strength finite. 
More precisely, we shall henceforth refer to the following assumptions about the positions of the points $\{ \xv_j\}_{j \in \N}$ and the parameters $\{\alpha_j\}_{j \in \N}$.

	\begin{assumption}[Distribution of points]\label{ass:UU}\mbox{}	\\
         There exists a non-negative function $\UU \in L^\infty(\R^d)$, with compact support and $\|\UU\|_{L^1}=1$, such that $\xv_j \in \supp \UU$ for all $j \in \N$ and\footnote{Here, we denote by $ \mathrm{w} $ the weak convergence of measures, {\it i.e.}, the convergence in the dual of continuous functions with support containing $ \supp(U) $ (see \cref{subsec:meas}).}
         \begin{equation}\label{eq:Vwconv}
             \mu_N := \frac{1}{N}\sum_{j=1}^N\delta_{\xv_j} \xrightarrow[N \to \infty]{\mathrm{w}} \muU\,,
             \qquad \mbox{with }\, \diff \muU(\xv) = \UU(\xv) \, \diff \xv\,.
         \end{equation}
    \end{assumption}

	\begin{assumption}[Minimal distance]\label{ass:xixj}\mbox{}	\\
        There exists a constant $\ell >0$, independent on $N$, such that
        \begin{equation}\label{eq:infxixj}
            \inf_{1 \,\leqslant\, i \,<\, j \,\leqslant\, N} |\xv_i - \xv_j |\geq \ell\, N^{-1/d}\,, \qquad
            \forall N \in \N\,.
        \end{equation}
    \end{assumption}

	\begin{assumption}[Interaction strengths]\label{ass:a}\mbox{}	\\
		There exists a bounded positive function $a \in C^{0}(\R^d)$ 	such that
		\begin{equation}
			\alpha_j = N\,a(\xv_j)\,, \qquad  \forall N \in \N \mbox{ and } \forall j \in \{1,...\,,N\}\,.
		\end{equation}
    \end{assumption}

	\begin{figure}[t!]\label{fig:pic}
		\includegraphics[width=10cm]{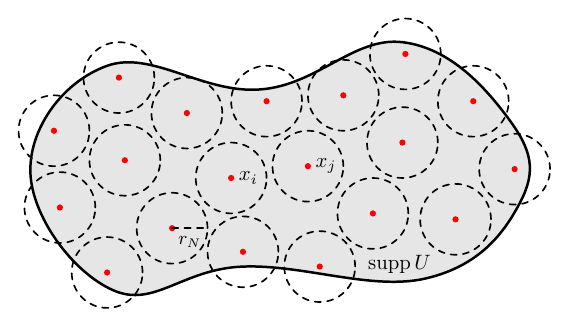}
		\label{fig:pict}
		\caption{A schematic illustration of \cref{ass:UU} and \cref{ass:xixj}. The points $ \xv_j$ cluster within a fixed region following the density distribution $\UU$, while maintaining a minimal mutual distance $r >\ell\,N^{-1/d}$, see \eqref{eq:infxixj}.}
	\end{figure}

	\begin{remark}[Assumptions]
		\cref{ass:UU} indicates that $\UU$ serves as the limit density distribution for the points at which the singular potentials are concentrated. In particular, the support of $\UU$ identifies the region of space where the points accumulate.
		On the other hand, \cref{ass:xixj} constrains the rate at which the points may cluster. This condition is crucial to guarantee the equi-coerciveness of the quadratic forms, a key element to establish $\Gamma$-convergence.
		The latter \cref{ass:a} prescribes a specific scaling for the interaction strengths. Considering that the parameters $\alpha_j$ are inversely related to the scattering lengths of the individual centers \cite[\S I.1.4]{AGHKH88}, this assumption actually implies that each single-center interaction vanishes as $N \to +\infty$. Indeed, it is easy to see that the choice $ \alpha_j = +\infty $, $ \forall j $, in \eqref{eq:GaN} identifies the unperturbed operator $ H_0 $.
	\end{remark}
     
     Before stating our main results, let us heuristically describe the mechanism of convergence. The rescaling of the parameters $ \lf\{ \alpha_j \ri\}_ {j \in \{1,...\,,N\}} $ as in \cref{ass:a} implies that the charges $ \lf\{ q_j \ri\}_ {j \in \{1,...\,,N\}} $ also scales with $ N $ as
	\begin{equation}\label{eq:qjres}
		q_j = \tfrac{1}{N}\,\pp_j \,, \qquad \mbox{for all } j = 1,...\,,N ,
	\end{equation}
for some complex coefficients $\pp_j$ of order 1. Suppose now that there exists a sufficiently regular function $p : \R^2 \to \C$, which is independent of $ N $ and such that
	\begin{equation}\label{eq:pjpx}
		\pp_j = p(\xv_j) \,, \qquad \mbox{for all } j = 1,...\,,N \,.
	\end{equation}
Then, by a Riemann sum approximation, the last term in the expression \eqref{eq:QNdef} becomes
	\bmln{
		\frac{1}{N^2} \sum_{i,j=1}^N  \overline{\pp_i}\, \big[\GaN\big]_{ij}\, \pp_j
		= \frac{1}{N^2} \sum_{i,j=1}^N  \overline{p(x_i)} \,\big[ \left(N a(\xv_j) + \mathcal{O}(1) \right)\delta_{ij} - \G0(\xv_i,\xv_j)\,(1-\delta_{ij}) \big]\, p(\xv_j) \\
		= \frac{1}{N} \sum_{j=1}^N a(\xv_j)\, |p(\xv_j)|^2 - \frac{1}{N^2} \sum_{i \neq j} \, \overline{p(\xv_i)} \,\G0(\xv_i,\xv_j)\, p(\xv_j) + o(1) \\
		\xrightarrow[N \,\to \,\infty]{} \int_{\R^d} a(\xv)\,|p(\xv)|^2\, \UU(\xv)\, \diff \xv - \int_{\R^d \times \R^d} \overline{p(\xv)}\,\G0(\xv,\yv)\,p(\yv)\, \UU(\xv)\,\UU(\yv)\, \diff \xv\,\diff \yv\,.
	}
From here, assuming that $\phi_{N,\lambda} \to \phi_{\infty,\lambda}$ in the appropriate topology, we infer that
	\bml{
		\QN[\psi_N] + \lambda^2 \lf\|\psi_N \ri\|_2^2 \; \xrightarrow[N \,\to \,\infty]{}\; 
		\lf\| (-i \nabla + \AA) \phi_{\infty,\lambda} \ri\|_{2}^{2} + \big\|\sqrt{V} \phi_{\infty,\lambda} \big\|_2^2 + \lambda^2 \, \lf\|\phi_{\infty,\lambda} \ri\|_{2}^2 \\
		 + \braketr{p}{a\,\UU p} - \braketr{\UU p}{\Rz\, \UU p} .\label{eq:QNeuri}
	}
	Furthermore, exploiting the boundary condition \eqref{eq:HNdef} in the operator domain $\dom(\HN)$, we get
	\bmln{
		\phi_{N,\lambda}(\xv_i) 
		= \frac{1}{N} \sum_{j=1}^N \lf[ \lf(N a(\xv_j) + \mathcal{O}(1) \ri)\delta_{ij} - \G0(\xv - \xv_j)\,(1-\delta_{ij}) \ri]\, p(\xv_j)  \xrightarrow[N \to\infty]{}\;  a(\xv_i)\, p(\xv_i)\\
		 - \int_{\R^d} \G0(\xv_i,\yv)\, p(\yv)\,\UU(\yv)\, \diff \yv \,,
	}
which ultimately suggests that
	\begin{equation}\label{eq:psiap}
		\psi_\infty(\xv) = a(\xv)\,p(\xv)\,.
	\end{equation}
	Hence, recalling \eqref{eq:HNdef},
	\beq\label{eq:psiinf}
		\psi_N(\xv) = \phi_{N,\lambda}(\xv) + \frac{1}{N} \sum_{j=1}^N p(\xv_j)\, \G0(\xv, \xv_j)
		\;\xrightarrow[N \,\to \,\infty]{}\; \phi_{\infty,\lambda} + \int_{\R^d} \G0(\xv,\yv) \, p(\yv)\,\UU(\yv)\, \diff \yv \,,
	\eeq
	so that
	\begin{align*}
		\meanlrlr{\psi_\infty}{\H0 \!+\! \lambda^2}{\psi_\infty}
		& = \Q0[\phi_{\infty,\lambda}] + \lambda^2 \lf\|\phi_{\infty,\lambda}\ri\|_{2}^2
			+ 2\Re \braketr{\phi_{\infty,\lambda}}{\UU p} + \braketl{ \Rz\, \UU p}{\UU p}	\\
		& = \Q0[ \phi_{\infty,\lambda}] + \lambda^2 \lf\|\phi_{\infty,\lambda} \ri\|_{2}^2
			+ \braketr{ p }{a\,\UU p} - \braketr{\UU p }{\Rz\, \UU p}  + \braketr{\psi_\infty}{\tfrac{\UU}{a} \,\psi_\infty}\,,
	\end{align*}
where we have exploited the fact that 
\bdm
	\braketr{\psi_{\infty}}{\UU p} = \braketr{\psi_{\infty}}{\tfrac{\UU}{a}\, \psi_\infty} = \braketr{p}{a\, \UU p}.
\edm
Summing up, 
	\begin{align}
		\meanlrlr{\psi_\infty}{\H0 - \tfrac{\UU}{a}}{\psi_\infty}
		+ \lambda^2\, \lf\|\psi_\infty \ri\|_2^2 
		& = \Q0[ \phi_{\lambda,\infty}] + \lambda^2 \lf\|\phi_{\infty,\lambda} \ri\|_{2}^2
			+ \braketr{p}{a\, \UU p} - \braketr{\UU p}{\Rz\, \UU p} , \label{eq:Qinfeuri}
	\end{align}
	which yields the expession of the limiting quadratic form by comparison of \eqref{eq:QNeuri} with \eqref{eq:Qinfeuri}:
	\begin{equation}
		\Qinf[\psi_\infty] = \Q0[\psi_\infty] - \meanlrlr{\psi_\infty}{\tfrac{\UU}{a}}{\psi_\infty}, \label{eq:Qinf}
	\end{equation}
where $\UU$ and $a$ are the functions introduced previously in \cref{ass:UU} and \cref{ass:xixj}. Note that, under these assumptions, the ratio $\UU/a$ belong to $L^\infty(\R^d)$, which ensures that, if the guess if correct, $\Qinf$ is a Kato-small perturbation of the initial quadratic form $\Q0$ and $ \dom[\Qinf] = \dom[\Q0] $. Furthermore, $\Qinf$ is closed and bounded from below, and therefore it uniquely identifies a self-adjoint operator, namely,
	\begin{equation}\label{eq:Hinf}
		\Hinf = \H0 - \tfrac{\UU}{a}\,, \qquad 
		\dom(\Hinf) = \dom(\H0)\,.
	\end{equation}

We are now ready to state our main result. Here and in the sequel, we understand that the quadratic forms $\QN$ and $\Qinf$ are extended to the entire Hilbert space $L^2(\R^d)$ by setting them equal to $+\infty$ outside of their respective domains of definition.

    \begin{theorem}[$\Gamma-$convergence]	\label{thm:main}
    		\mbox{}	\\
		Let \cref{ass:UU,ass:xixj,ass:a} hold. Then, the sequence of quadratic forms $\lf\{ \QN \ri\}_{N \in \N}$ $\Gamma$-converges to $\Qinf$ with respect to both the weak and strong topology of $L^2(\R^d)$. More precisely, the following two conditions are fulfilled:
		\begin{enumerate}[(i)] 
			\item $(\Gamma-\liminf)$ for any sequence $ \lf\{\psi_N \ri\}_{N \in \N} \subset L^2(\R^d)$ such that $\psi_N \xrightarrow[N \to +\infty]{\mathrm{w}} \psi_\infty \in L^2(\R^d) $, there holds
				\begin{equation}\label{eq:liminf}
					\Qinf[\psi_{\infty}] \leq \liminf_{N \,\to\, +\infty} \QN[\psi_N]\,;
				\end{equation}
			\item $(\Gamma-\limsup)$ for any $\psi_\infty \in \dom[\Qinf]$, there exists a sequence $ \lf\{\psi_N\ri\}_{N\in\N} \subset L^2(\R^d)$, with $\psi_N\in\dom[\QN]$ for all $N \in \N$, such that $\psi_N \xrightarrow[N \to +\infty]{} \psi_\infty \in L^2(\R^d)$ and
				\begin{equation}\label{eq:limsup}
					\Qinf[\psi_{\infty}] \geq \limsup_{N \,\to\,+\infty} \QN[\psi_N]\,.
				\end{equation}
		\end{enumerate}				
	\end{theorem}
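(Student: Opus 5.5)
We prove the two halves separately, using throughout the decomposition of $\QN$ from \eqref{eq:QNdef} with a fixed $\lambda>0$ chosen large, and writing $q_j=\pp_j/N$ as in \eqref{eq:qjres}. Each $\psi\in\dom[\QN]$ is identified with the pair $(\phi_\lambda,\pp)$, where $\pp=(\pp_1,\dots,\pp_N)$. We also introduce the discrete measure $\nu_N:=\frac1N\sum_j \pp_j\,\delta_{\xv_j}$, so that $\psi=\phi_\lambda+\Rz\nu_N$, where $\Rz$ acts on measures through its kernel $\G0$.

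\textbf{Step 1: equi-coercivity.} The key quantitative input is a lower bound on the charge term. We aim to show
\begin{equation*}
	\frac1{N^2}\sum_{i,j}\overline{\pp_i}\,[\GaN]_{ij}\,\pp_j\;\geq\;\frac{c}{N}\sum_j|\pp_j|^2-\braketl{\nu_N}{\Rz\nu_N}_{\mathrm{reg}},
\end{equation*}
where the last term is the off-diagonal double sum. The diagonal contributes $\frac1N\sum_j\big(a(\xv_j)+O(1/N)\big)|\pp_j|^2$. For the off-diagonal part we will bound $|\G0(\xv_i,\xv_j)|$ by the free Green function up to a bounded error, using \cref{lemma:G0Gfineq} (diamagnetic inequality and pointwise Green function bounds). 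We then use \cref{ass:xixj}: for $d=3$ we have $\frac1{N}\sum_{j\neq i}|\xv_i-\xv_j|^{-1}\lesssim 1$ uniformly, since the points are $N^{-1/3}$-separated inside a bounded set; for $d=2$ the logarithmic singularity gives the analogous bound. By Schur's test the off-diagonal matrix therefore has operator norm $O(N)$ on $\ell^2$, i.e. $O(1)$ after dividing by $N^2$ and normalizing $\frac1N\sum|\pp_j|^2$. It remains to show this norm is strictly less than $\inf a$. The minimal-distance constant alone only gives boundedness, so we use instead the exact identity
\begin{equation*}
	\textstyle\sum_{i\neq j}\overline{q_i}\,\G0(\xv_i,\xv_j)\,q_j=\lim_{\varepsilon\to0}\big\|\sqrt{\Rz}\,\nu^\varepsilon\big\|^2-\text{(self-energies)},
\end{equation*}
together with the fact that $\G0$ decays like $e^{-\lambda|\xv-\yv|}$ (resp. $K_0$). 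This makes the Schur sum $O(\lambda^{-2+\delta})$ small for $\lambda$ large, giving $\frac{1}{2}\inf_{\supp U}a$ as the coercivity constant. A quick way to see the decay is to split the sum into $|\xv_i-\xv_j|<\rho$ (bounded by $\rho^{d-1}$ or $\rho\log$) and the rest (bounded by $\lambda^{-2}$). I expect this step to be the main obstacle, because positivity of $a$ is needed only on $\supp U$ and the diagonal constant from \eqref{eq:GaN} must be uniformly controlled via \cref{lemma: G regularity}. The outcome is the uniform bound
\begin{equation*}
	\QN[\psi_N]+\lambda^2\|\psi_N\|^2\gtrsim \Q0[\phi_{N,\lambda}]+\lambda^2\|\phi_{N,\lambda}\|^2+\frac1N\sum_j|\pp_j|^2-\braketl{\nu_N}{\Rz\nu_N}.
\end{equation*}

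\textbf{Step 2: $\Gamma$-liminf.} Assume $\liminf\QN[\psi_N]<\infty$ and pass to a subsequence. Step 1 gives $\phi_{N,\lambda}$ bounded in $\dom[\Q0]$ and $\frac1N\sum|\pp_j|^2$ bounded. From the latter, the measures $\nu_N$ are tight with total variation bounded, and any weak limit is absolutely continuous: by Cauchy–Schwarz and $\mu_N\to\UU\,\diff\xv$, $|\nu_N(f)|^2\leq \big(\frac1N\sum|\pp_j|^2\big)\mu_N(|f|^2)$. Hence $\nu_N\rightharpoonup \UU p$ with $\int|p|^2\UU<\infty$ (cf.\ \cref{subsec:meas}), and $\phi_{N,\lambda}\rightharpoonup\phi_\infty$ weakly in $\dom[\Q0]$. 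Because $\G0$ is locally in $L^2$, $\Rz\nu_N\to\Rz\UU p$ weakly, which identifies $\psi_\infty=\phi_\infty+\Rz\UU p$. For the lower semicontinuity we write the quantity to minimize as a convex functional of $(\phi,\nu)$ and use a completing-the-square argument: for any test function $g\in C_c$,
\begin{equation*}
	\frac1N\sum_j a(\xv_j)|\pp_j|^2\geq \frac1N\sum_j\Big(2\Re\,\overline{g(\xv_j)}\,\pp_j-\frac{|g(\xv_j)|^2}{a(\xv_j)}\Big)\to 2\Re\braketr{g}{\UU p}-\braketr{g}{\tfrac{\UU}{a}g}.
\end{equation*}
Taking the supremum over $g$ recovers $\braketr{p}{a\UU p}$. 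The off-diagonal quadratic term converges by a diagonal-excision argument: the smooth part of the kernel converges by weak-$*$ convergence of $\nu_N\otimes\nu_N$, and the near-diagonal part is controlled by the Step 1 Schur estimate restricted to $|\xv_i-\xv_j|<\rho$, followed by $\rho\to0$. We then conclude with \eqref{eq:Qinfeuri}.

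\textbf{Step 3: $\Gamma$-limsup.} By density and continuity of $\Qinf$ on $\dom[\Q0]$, together with a standard diagonal argument for $\Gamma$-limits, it suffices to take $\psi_\infty\in C_c^\infty$. Set $p:=\psi_\infty/a$ on $\supp\UU$, extended continuously, $\pp_j:=p(\xv_j)$, and $\phi_{N,\lambda}:=\psi_\infty-\Rz\nu_N$. The difficulty is that this $\phi_{N,\lambda}$ is not in $\dom[\Q0]$. We fix it by taking $\phi_{N,\lambda}:=\phi_\infty+(\Rz\UU p-\Rz\nu_N)$ with $\phi_\infty:=\psi_\infty-\Rz\UU p$. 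Then $\psi_N=\psi_\infty$ exactly, and we are left with checking that $\Q0+\lambda^2$ applied to the fluctuation converges. That form equals $\braketl{\UU p-\nu_N}{\Rz(\UU p-\nu_N)}$, which is infinite because of the self-energies, and these self-energies are precisely what the regularized diagonal in $\GaN$ absorbs. We therefore expand $\QN[\psi_\infty]$ directly as the Riemann-sum computation preceding \eqref{eq:QNeuri}. The cross terms converge by weak convergence of $\mu_N$ and continuity of $\Rz(\UU p)$; the off-diagonal double sum converges to $\braketr{\UU p}{\Rz\UU p}$ by the same excision argument as in Step 2. This yields equality in \eqref{eq:limsup}.
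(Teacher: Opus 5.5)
Steps 1 and 2 are workable and differ from the paper mainly in technique. You control the off-diagonal part of $\Xi_N^\lambda$ by Schur's test plus a packing count from \cref{ass:xixj}, which does give $\sup_i \frac1N\sum_{j\neq i} g_0^\lambda(\xv_i-\xv_j)\lesssim \lambda^{-2}$; the paper uses the Hilbert--Schmidt norm and Riesz-energy bounds instead. You then reuse the same Schur estimate to excise the near-diagonal part of the double sum, in place of the mean-value/Hardy--Littlewood--Sobolev argument of \cref{lem:consingm}. Two points need cleaning. First, the final display of Step 1 should not retain $-\langle \nu_N, R_0^\lambda \nu_N\rangle$: that term is infinite, and the point of the Schur bound is that the whole charge term is $\geq \frac{c}{N}\sum_j |p_j|^2$. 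Second, in Step 2 you cannot simply conclude with \eqref{eq:Qinfeuri}. That identity used $\psi_\infty = a p$, whereas the weak limit $p$ of the charges is unconstrained. What you need is
\begin{equation*}
Q^\lambda_\infty[\phi_\infty + R_0^\lambda U p] = Q_0[\phi_\infty] + \lambda^2\|\phi_\infty\|_2^2 + \langle p, aUp\rangle - \langle Up, R_0^\lambda Up\rangle - \big\|\sqrt{aU}\,p - \sqrt{U/a}\,\psi_\infty\big\|_2^2 ,
\end{equation*}
whose last term has the favourable sign. Your ``convex functional'' remark should be turned into exactly this.

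The genuine gap is Step 3. Your ``fix'' $\phi_{N,\lambda} := \phi_\infty + (R_0^\lambda U p - R_0^\lambda \nu_N)$ is literally $\psi_\infty - R_0^\lambda\nu_N$, the same function you had just observed is not in $\dom[Q_0]$, so nothing is repaired. The conclusion $\psi_N = \psi_\infty$ is fatal on its own. Since $\psi_\infty\in\dom[Q_0]$ and a nonzero combination $\sum_j q_j G_0^\lambda(\cdot,\xv_j)$ never lies in $\dom[Q_0]$ for $d=2,3$, the decomposition of $\psi_\infty$ in $\dom[Q_N]$ (unique, which is what makes \eqref{eq:QNdef} well defined) has all charges zero. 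Hence $Q_N[\psi_\infty] = Q_0[\psi_\infty]$ for every $N$, which exceeds $Q_\infty[\psi_\infty]$ by $\langle\psi_\infty, (U/a)\psi_\infty\rangle$. Expanding $Q_N[\psi_\infty]$ by Riemann sums with nonzero charges is therefore meaningless.

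A recovery sequence must leave $\dom[Q_0]$. Keep the regular part fixed, $\phi := \psi_\infty - R_0^\lambda U p \in \dom(H_0)$ with $p = \psi_\infty/a$, and set $\psi_N := \phi + \frac1N\sum_j p(\xv_j)G_0^\lambda(\cdot,\xv_j) = \psi_\infty + R_0^\lambda\nu_N - R_0^\lambda U p$. Then $Q_N[\psi_N]+\lambda^2\|\psi_N\|_2^2 = Q_0[\phi] + \lambda^2\|\phi\|_2^2 + N^{-2}\sum_{i,j}\overline{p(\xv_i)}[\Xi_N^\lambda]_{ij}p(\xv_j)$. Your Step 2 tools give convergence of the charge term to $\langle p, aUp\rangle - \langle Up, R_0^\lambda Up\rangle$. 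The identity above, with vanishing remainder since $\psi_\infty = ap$, then yields $Q^\lambda_\infty[\psi_\infty]$.

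The ingredient your plan lacks entirely is the \emph{strong} convergence $\psi_N\to\psi_\infty$, i.e.\ $R_0^\lambda\nu_N\to R_0^\lambda U p$ in $L^2$. It is needed both for (ii) itself and to remove the shift $\lambda^2\|\psi_N\|_2^2$. Weak convergence follows from $\nu_N\rightharpoonup Up\,\diff\xv$. Convergence of norms follows from $\|R_0^\lambda\nu_N\|_2^2 = N^{-2}\sum_{i,j}\overline{p(\xv_i)}\,p(\xv_j)\,k(\xv_i,\xv_j)$, where $k$ is the bounded continuous kernel of $(R_0^\lambda)^2$.
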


As a direct consequence of \cref{thm:main}, by standard arguments of $\Gamma$-convergence theory, we obtain the following result regarding the family of self-adjoint operators $ \lf\{\HN \ri\}_{N \in \N}$ and the limit Hamiltonian $\Hinf$.

	\begin{corollary}[Operator convergence]	\label{cor:main}	\mbox{}	\\
		Let \cref{ass:UU,ass:xixj,ass:a} hold. Then, the sequence $\{\HN\}_{N \in \N}$ converges to $\Hinf$ in  strong resolvent sense as $N \to \infty$.
		Moreover, if $\H0$ has compact resolvent, then $\{\HN\}_{N \in \N}$ converges to $\Hinf$ in  norm resolvent sense.
	\end{corollary}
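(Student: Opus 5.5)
The plan is to deduce both statements from \cref{thm:main} via the standard link between $\Gamma$-convergence of quadratic forms and resolvent convergence, as in \cite[Thm. 13.6]{DM93}. For closed forms that are bounded from below uniformly in $N$, $\Gamma$-convergence in both the weak and strong topologies of $L^2(\R^d)$ (Mosco convergence) is equivalent to strong resolvent convergence of the associated operators. The first step is therefore a uniform lower bound $\QN \geq -C$ with $C$ independent of $N$. This is the equi-coerciveness that \cref{ass:xixj} is meant to provide, and I expect it to be the delicate point.

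To obtain the bound, I will shift all forms by a constant $\lambda^2$ large enough and work with $\QN + \lambda^2\|\cdot\|_2^2$. By \eqref{eq:QNdef}, this equals $\Q0[\phi_\lambda] + \lambda^2 \|\phi_\lambda\|_2^2 + \langle q, \GaN q\rangle$. It then suffices to show that $\GaN$ is positive definite for one fixed large $\lambda$, uniformly in $N$. The diagonal entries are of size $N a(\xv_j)$ minus an $\mathcal{O}(1)$ term. I will need $a$ bounded below on $\supp \UU$ by a positive constant; this follows from continuity, positivity and compactness of the support.

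For the off-diagonal part I will use a Schur test. The task is to bound $\sup_i \sum_{j \neq i} |\G0(\xv_i,\xv_j)|$ by $o(N)$, or at least by $\varepsilon N$ with $\varepsilon$ small relative to $\min a$. The Green function satisfies $|\G0(\xv,\yv)| \lesssim g(|\xv-\yv|)\, e^{-c\lambda|\xv-\yv|}$, with $g$ logarithmic for $d = 2$ and Coulombic for $d = 3$; this estimate should follow from diamagnetic-type comparison with the free kernel and $\VV \geq 0$. By \cref{ass:xixj}, each shell $\{ k\ell N^{-1/d} \leq |\xv - \xv_i| < (k+1)\ell N^{-1/d}\}$ contains $\mathcal{O}(k^{d-1})$ points. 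Summing over shells, in $d=3$ the sum is of order $\sum_k k^{2} N^{1/3}k^{-1}$ restricted by exponential decay at scale $(\lambda N^{-1/3})^{-1}$. This gives roughly $N/\lambda^2$, which is small compared with $N$ once $\lambda$ is large. The case $d = 2$ is analogous, with a $\log$ correction. Hence $\GaN \geq c N \one$ and $\QN \geq -\lambda^2$ uniformly. This bound also shows that minimizing sequences of bounded energy are bounded in $L^2$.

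With the uniform bound and \cref{thm:main} in hand, the Mosco theorem gives $(\HN+\lambda^2)^{-1} \to (\Hinf+\lambda^2)^{-1}$ strongly for the large $\lambda$ chosen. Strong convergence at one point of the resolvent set then extends to all nonreal spectral parameters in the standard way, which is exactly strong resolvent convergence.

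For norm resolvent convergence when $\H0$ has compact resolvent, I will upgrade the strong convergence via compactness. Take $f_N \rightharpoonup f$ weakly in $L^2$ and set $u_N = (\HN+\lambda^2)^{-1} f_N$. These are minimizers of $\QN[u] + \lambda^2\|u\|_2^2 - 2\Re\langle f_N, u\rangle$, so their energies are uniformly bounded. I then need $\QN[u_N] + \lambda^2\|u_N\|_2^2 \leq C$ to imply precompactness of $\{u_N\}$ in $L^2$. For this I will write $u_N = \phi_N + \sum_j q_j \G0(\cdot,\xv_j)$. Positivity of $\GaN$ bounds both $\Q0[\phi_N] + \lambda^2\|\phi_N\|_2^2$ and $N\sum_j|q_j|^2$. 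The family $\{\phi_N\}$ is compact because $\dom[\Q0]$ embeds compactly. The singular part is supported near the compact set $\supp \UU$, with uniform $L^2$ control of its tails from the decay of $\G0$; it is locally compact because $\G0(\cdot,\yv) \in H^{s}_{\mathrm{loc}}$ for some $s>0$, uniformly in $\yv$. It follows that $u_N \to (\Hinf+\lambda^2)^{-1} f$ strongly. This is the standard characterization of norm convergence of compact operators through weakly convergent sequences, and it yields norm resolvent convergence.
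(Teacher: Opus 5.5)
Your proposal is correct and follows the same architecture as the paper. The uniform lower bound comes from $\GaN\geq cN\one$. Strong resolvent convergence then follows from Mosco convergence (\cref{thm:main}) together with \cref{thm:Gares}, and the norm upgrade comes from asymptotic compactness of bounded-energy sequences. Instead of citing \cref{thm:normres} you reprove it through the weak-to-strong characterization; this is legitimate once you note that $(\Hinf+\lambda^2)^{-1}$ is compact, since $\UU/a$ is bounded.

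The two technical ingredients, however, are obtained differently.

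\emph{Positivity of $\GaN$.} The paper (\cref{lem:posGamma}) bounds the off-diagonal block in Hilbert--Schmidt norm. It controls $\frac{1}{N^2}\sum_{i\neq j}|\xv_i-\xv_j|^{-s}$ through the Riesz-energy limsup of \cref{lem:rieszen}, so the bound holds only for $N\geq N_\lambda$. Your Schur test with shell counting uses only \cref{ass:xixj} and \cref{lemma:G0Gfineq}. It gives $\sup_i\sum_{j\neq i}|\G0(\xv_i,\xv_j)|\lesssim N(\lambda^{-2}+\lambda^{-1})$ for every $N$, with constants depending only on $\ell$ and $d$. It therefore dispenses with \cref{lem:rieszen} altogether and is more elementary.

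\emph{Compactness of the singular part $\Rz\nu_N$.} The paper uses that $\nu_N$ is bounded in $\HH^{-(2-\varepsilon)}$, together with the compact extension of $\Rz$ from \cref{lem:compres}, which requires the compact resolvent. Your argument combines a uniform local Sobolev bound, Rellich, and exponential tails from \cref{lemma:G0Gfineq} and $\sum_j|q_j|\leq C$. It shows this piece is precompact in $L^2(\R^d)$ unconditionally, so compactness of the resolvent is needed only for $\phi_{N,\lambda}$.

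The uniform $H^s_{\mathrm{loc}}$ bound on $\G0(\cdot,\yv)$, which you only assert, does follow from the paper's tools. For $r\in(d/2,2)$, \cref{lem:HHdual,lem:HHemb} give that $\delta_{\yv}$ is bounded in $\HH^{-r}$ uniformly in $\yv$. Hence $\G0(\cdot,\yv)=\Rz\delta_{\yv}$ is uniformly bounded in $\HH^{2-r}\subset H^{2-r}_{\mathrm{loc}}(\R^d)$.
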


	\begin{remark}[Bound states and dynamics]
		Since $\UU$ and $a$ are both non-negative, the limiting Hamiltonian $\Hinf$ comprises an attractive electrostatic potential which may support negative-energy bound states. It is worth noting that also the Hamiltonians $\HN$ typically possess a non-empty discrete spectrum below the continuous threshold \cite[\S II.1.1, Thm. 1.1.4]{AGHKH88}. The result of $\Gamma$-convergence established in \cref{thm:main} guarantees any sequence of ground states of $\{\HN\}_{N \in \N}$ converges to the ground state of $\Hinf$ \cite[Chpt.\! 7]{DM93}. On top of that, when $\H0$ has compact resolvent, the uniform resolvent convergence stated in \cref{cor:main} implies that the purely discrete spectrum of $\H0$ actually consists of eigenvalues which arise as accumulation points of sequences of eigenvalues of $\HN$ \cite[Prop.\! 10.2.4 and Cor. 10.2.5]{DeO09}.
		Let us finally recall that strong resolvent convergence is in fact equivalent to strong convergence of the unitary operators $e^{-i t \HN}$ to $e^{-i t \Hinf}$, for any $t \in \R$, and thus it also provides the convergence of the corresponding quantum dynamics \cite[Prop. 10.1.8]{DeO09}.
	\end{remark}
	
	\begin{remark}[Relaxing the assumptions]
		Let us briefly comment on possible relaxations of \cref{ass:UU,ass:xixj,ass:a}. On the one hand, the theory developed here is stable under mild violations of these hypotheses. The same conclusions would hold if a small number of interactions, say of order $\mathcal{O}(1)$ as $N\to\infty$, did not satisfy the stated assumptions.
		Likewise, the compact-support requirement on the density $\UU$ could be replaced by a sufficiently fast decay at infinity, at the price of some additional effort. Subleading corrections in \cref{ass:a} could also be incorporated; for instance, one may allow $\alpha_j = N a(\xv_j) + \mathcal{O}(1)$ as $N\to\infty$.
		On the other side, it would pose major challenges to consider a large number of non-positive interaction parameters $\alpha_j \leqslant 0$ or to permit the centers to accumulate faster than allowed by the lower bound \eqref{eq:infxixj} on the mutual distances. In these cases, the equi-coerciveness of the quadratic forms $\QN$ would fail. As a result, the behavior of the system could become significantly different, potentially exhibiting new phenomena that are not captured by the current analysis.
	\end{remark}

	\begin{remark}[Domains with boundaries and curved geometries] 
	Our arguments could be extend with minor effort to flat spatial domains with boundaries. More precisely, the proofs remain essentially unchanged if, instead of working on $\R^d$, one were to consider a domain $\Omega$, bounded or unbounded, with a sufficiently regular boundary $\partial \Omega$.
	The only structural requirement is that the differential operator $\H0$ admits a self-adjoint realization on $L^2(\Omega)$, which in turn relies on specifying appropriate boundary conditions on $\partial\Omega$. To avoid technical complications, one should perhaps ensure that the points where the singular potentials are supported do not accumulate near the boundary.
	In the same spirit, one could even extend our results to analogous models living on curved Riemannian manifolds, with the obvious adjustments to the definitions of the basic quadratic forms and related Hamiltonian operators.
	\end{remark}

\section{Preliminaries}\label{sec:prel}

    In this section, we recall some basic facts on $\Gamma$-convergence, complex measures, integral kernels and Hilbert scales, which will be used in the proofs to be presented in the next section. For more details on these topics we refer to \cite{Br02,DM93}, \cite{Bo18,Fo99,Ru87}, \cite{BGP05,BGP07} and \cite{AKN07,Fe16}, respectively (see also \cite[App. B]{FP17} and \cite{FP23}).

\subsection{$\Gamma-$convergence}\label{subsec:Gammaconv}
Consider a sequence $ \lf\{F_N \ri\}_{N \in \N}$ of closed and lower-bounded quadratic forms on a Hilbert space $\Hil$ with  domains $\dom[F_N]$. Let $F_\infty$  on $\dom[F_\infty]$ be another lower-bounded quadratic form on $\Hil$, to be regarded as the candidate limit of the sequence. We denote by $A_N$ ($N\in \N$) and $A_{\infty}$ the self-adjoint operators associated with $F_N$ and $F_\infty$, respectively. In the sequel, the forms $F_N$ ($N \in \N$) and $F_{\infty}$ are implicitly extended (relaxed) to the whole Hilbert space $\Hil$ by setting their values equal to $+\infty$ outside of their respective domains.

Let $\tau$ denote either the strong or weak topology on $\mathscr{H}$. The sequence $ \lf\{F_N \ri\}_{N \in \N}$ is said to \emph{$\Gamma$-converge} to $F_{\infty}$ w.r.t. $\tau$ on $\Hil$ if the following two conditions are fulfilled:
		\begin{enumerate}[(i)]
            \item ($\Gamma-\liminf$) for any sequence $ \lf\{\psi_N\ri\}_{N \in\N} \subset \Hil$ such that $\psi_N \xrightarrow{\tau} \psi_{\infty} \in\mathscr{H}$, there holds
			\begin{equation*}
				F_{\infty}[\psi_{\infty}] \leqslant \liminf_{N \,\to\, \infty} F_{N}[\psi_N]\,;
			\end{equation*}
			\item ($\Gamma-\limsup$) for any $\psi_{\infty} \in \dom[F_{\infty}]$, there exists a sequence $ \lf\{\psi_N\ri\}_{N\in\N} \subset \dom[F_N]$, such that $\psi_N \xrightarrow{\tau} \psi_{\infty}$ and
			\begin{equation*}
				F_{\infty}[\psi_{\infty}] \geqslant \limsup_{N \,\to\, \infty} F_{N}[\psi_N]\,.
			\end{equation*}
		\end{enumerate}
Since $F_{\infty}$ is lower semicontinuous and bounded from below, without loss of generality we may replace condition (ii) with the following weaker requirement, see \cite[Rem. 1.29]{Br02} and \cite[Lemma 4.2(ii)]{DFT94}:
		\begin{enumerate}
			\item[(ii')] for any $\psi_{\infty} \!\in\! \dom(A_{\infty})$, there exists a sequence $ \lf\{\psi_N \ri\}_{N\in\N} \subset \dom[F_N]$, such that $\psi_N \!\xrightarrow{\tau}\! \psi_{\infty}$ and
			\begin{equation}\label{eq:limsup2}
				F_{\infty}[\psi_{\infty}] = \lim_{N\to \infty} F_{N}[\psi_N]\,.
			\end{equation}
		\end{enumerate}

The main result relating $\Gamma$-convergence of quadratic forms and strong convergence of the corresponding resolvent operators is the following (see, \emph{e.g.}, \cite[Thm. 13.6]{DM93} and \cite[Thm. 1]{BdOV14}):

	\begin{theorem}[$\Gamma-$covergence and operator convergence 1]\label{thm:Gares}	\mbox{}	\\
		Assume the quadratic forms $\{F_N\}_{N\in\N}$ and $F_{\infty}$ to be lower semicontinuous and uniformly bounded from below. Then, the following statements are equivalent:
		\begin{enumerate}[(i)]
			\item $\{F_N\}_{N\in\N}$ $\Gamma$-converges to $F_{\infty}$ w.r.t. the strong topology and condition (i) is valid w.r.t. the weak topology;
			\item $ \lf\{F_{N} + \lambda^2 \| \cdot \|^2 \ri\}_{N\in\N}$ $\Gamma$-converges to $F_{\infty} + \lambda^2\,\| \cdot \|^2$ w.r.t. both the strong and weak topology, for some $\lambda \geqslant 0$;
			\item $ \lf\{ A_{N} \ri\}_{N \in \N}$ converges to $A_{\infty}$ in the strong resolvent sense.
		\end{enumerate}
	\end{theorem}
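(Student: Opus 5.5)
The plan is to prove the cycle (i)$\Rightarrow$(ii)$\Rightarrow$(iii)$\Rightarrow$(i). Throughout I write $T_N := A_N + \lambda^2$ and $T_\infty := A_\infty + \lambda^2$. The key tool is the variational characterization of the resolvent: for $f\in\Hil$, the unique minimizer of
\begin{equation*}
\Phi_N^f[u] := F_N[u] + \lambda^2\|u\|^2 - 2\Re\langle f,u\rangle
\end{equation*}
is $u_N = T_N^{-1} f$, with minimum value $-\langle f, T_N^{-1} f\rangle$, and similarly for $N=\infty$.

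For weak liminf statements I will assume, after adding a constant to all forms (this changes nothing in (iii)), that the uniform lower bound is $F_N\geq 0$. I flag that some such normalization is implicit in (i): a weak liminf for a form such as $-\|\cdot\|^2$ cannot hold.

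\emph{(i)$\Rightarrow$(ii).} The map $\lambda^2\|\cdot\|^2$ is strongly continuous, and $\Gamma$-convergence is stable under continuous additive perturbations, so strong $\Gamma$-convergence transfers directly. For the weak topology:
\begin{itemize}
\item the liminf inequality follows from the weak liminf in (i) together with weak lower semicontinuity of the norm;
\item the limsup inequality is obtained from the strong recovery sequence of (i), since strong convergence implies weak convergence and the norms converge.
\end{itemize}

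\emph{(ii)$\Rightarrow$(iii).} Fix $f$. Since $\Phi_N^f[u]\geq \lambda^2\|u\|^2 - 2\|f\|\|u\|$ uniformly in $N$ (taking $\lambda>0$), the functionals $\Phi_N^f$ are equi-coercive in the weak topology. Adding the weakly continuous linear term $-2\Re\langle f,\cdot\rangle$ preserves weak $\Gamma$-convergence. The fundamental theorem of $\Gamma$-convergence (\cite{DM93}) then gives convergence of minima, $\langle f,T_N^{-1}f\rangle\to\langle f,T_\infty^{-1}f\rangle$, and weak convergence of the minimizers $T_N^{-1}f \rightharpoonup T_\infty^{-1}f$. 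By polarization this is weak resolvent convergence at $-\lambda^2$.

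It remains to upgrade to strong convergence. Using self-adjointness,
\begin{equation*}
\|T_N^{-1}f\|^2=\langle f,T_N^{-2}f\rangle = -\tfrac{\diff}{\diff\lambda^2}\langle f,T_N^{-1} f\rangle .
\end{equation*}
Alternatively, and more simply, weak resolvent convergence at one point extends to all non-real $z$ by the standard analytic argument. For self-adjoint operators one then has
\begin{equation*}
\|(A_N-z)^{-1}f\|^2=\Im\langle f,(A_N-z)^{-1}f\rangle/\Im z ,
\end{equation*}
so the norms of $(A_N-z)^{-1}f$ converge. Weak convergence together with convergence of norms gives strong convergence.

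\emph{(iii)$\Rightarrow$(i).} I treat the two halves separately.
\begin{itemize}
\item \emph{Strong limsup.} Use (ii'). For $u\in\dom(A_\infty)$ set $u_N:=T_N^{-1}T_\infty u$. Then $u_N\to u$ strongly by (iii), and
\begin{equation*}
F_N[u_N]+\lambda^2\|u_N\|^2=\langle u_N,T_\infty u\rangle\to\langle u,T_\infty u\rangle ,
\end{equation*}
which yields $F_N[u_N]\to F_\infty[u]$.
\item \emph{Weak liminf.} Use the Yosida-type bounded forms
\begin{equation*}
B_N^\mu:=\mu A_N(A_N+\mu)^{-1}, \qquad \langle u,B_N^\mu u\rangle\leq F_N[u],
\end{equation*}
which are uniformly bounded by $\mu$ and satisfy $B_N^\mu\to B_\infty^\mu$ strongly by (iii). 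Let $u_N\rightharpoonup u$. Positivity gives
\begin{equation*}
\langle u_N,B_N^\mu u_N\rangle\geq 2\Re\langle B_N^\mu u,u_N\rangle-\langle u,B_N^\mu u\rangle ,
\end{equation*}
and the right-hand side tends to $\langle u,B_\infty^\mu u\rangle$, because $B_N^\mu u\to B^\mu_\infty u$ strongly while $u_N\rightharpoonup u$. Hence $\liminf_N F_N[u_N]\geq\langle u,B_\infty^\mu u\rangle$, and letting $\mu\to\infty$ (monotone convergence in the spectral representation) gives $F_\infty[u]$.
\end{itemize}
The strong liminf follows a fortiori.

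The main obstacle I expect is the weak-topology bookkeeping: whether the weak liminf survives removal or addition of the $\lambda^2\|\cdot\|^2$ term. This is exactly why I route (ii)$\Rightarrow$(i) through (iii) and use the Yosida forms rather than shifting by the norm, and it is where the normalization $F_N\geq0$ is genuinely needed.
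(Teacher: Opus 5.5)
Your steps (i)$\Rightarrow$(ii) and (iii)$\Rightarrow$(i) are fine. The latter, via the bounded forms $\mu A_N(A_N+\mu)^{-1}$, is essentially the Moreau--Yosida argument behind Dal Maso's Thm.~13.6. The paper does not prove this theorem; it only cites that result. Your caveat is also justified: with $F_N=F_\infty=-\|\cdot\|^2$, (iii) holds but the weak liminf fails, so a normalization such as $F_N\geq 0$ is needed.

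The genuine gap is the weak-to-strong upgrade in (ii)$\Rightarrow$(iii). Weak resolvent convergence at the single real point $-\lambda^2$ does \emph{not} extend to non-real $z$. The Neumann-series argument needs the powers $R_N(z_0)^k$, and these need not converge weakly when $R_N(z_0)$ only does. Concretely, in $\ell^2$ let $P_N$ be the orthogonal projection onto $e_0-e_N$, let $Q$ be the one onto $e_0$, and set $A_N=2MP_N$. Then
\[
(A_N+s)^{-1}\rightharpoonup s^{-1}(1-Q)+\tfrac{M+s}{s(2M+s)}\,Q .
\]
At a fixed $s_0$ this equals $(A+s_0)^{-1}$ for $A=\frac{Ms_0}{M+s_0}Q$. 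It differs from $(A+s)^{-1}$ for every $s\neq s_0$, and the convergence at $s_0$ is not strong.

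In this example $F_N+s_0\|\cdot\|^2$ even $\Gamma$-converges \emph{weakly} to $u\mapsto \frac{Ms_0}{M+s_0}|u_0|^2+s_0\|u\|^2$, which is the shifted form of $A$. So the weak half of (ii) at one shift cannot give (iii). What excludes the example is the strong half: the strong $\Gamma$-limit of $F_N$ is $M|u_0|^2$. Your other suggestion, differentiating in $\lambda^2$, is not carried out either. It would need convergence on a neighbourhood of $\lambda^2$ and a reason why derivatives converge.

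The missing ingredient is any use of the \emph{strong} half of (ii). A short fix: let $u_N=T_N^{-1}f$ and $u_\infty=T_\infty^{-1}f$. The quadratic structure gives the exact identity
\[
\Phi^f_N[v]-\Phi^f_N[u_N]=F_N[v-u_N]+\lambda^2\|v-u_N\|^2\geq\lambda^2\|v-u_N\|^2 .
\]
Insert $v=w_N$, a strong recovery sequence for $u_\infty$ (for $F_N+\lambda^2\|\cdot\|^2$). Then $\limsup_N\Phi^f_N[w_N]\leq\Phi^f_\infty[u_\infty]=\min\Phi^f_\infty$. By the step you already proved, $\min\Phi^f_N\to\min\Phi^f_\infty$. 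Hence $\|w_N-u_N\|\to 0$, so $u_N\to u_\infty$ strongly. Strong convergence at one point of the common resolvent set then extends to all non-real $z$ by the usual argument.

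Alternatively, strong recovery sequences show that (ii) at $\lambda$ implies (ii) at every $\mu\geq\lambda$. This gives weak resolvent convergence on a whole half-line. Vitali's theorem applied to $z\mapsto\langle g,(A_N-z)^{-1}f\rangle$ on $\mathbb{C}\setminus[0,\infty)$ then yields weak convergence at all non-real $z$, and your imaginary-part identity finishes the proof.
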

	
To relate $\Gamma-$convergence and uniform resolvent convergence, it is necessary to introduce a notion of compactness \cite{Mo94}. 
A sequence of quadratic forms $ \lf\{F_N \ri\}_{N \in \N}$, uniformly bounded from below by $\lambda_* \in \R$, is called \emph{asymptotically compact} in $\mathscr{H}$, if for any sequence $ \lf\{\psi_N \ri\}_{N \in \N} \subset \Hil$ such that
	\begin{equation}\label{defn:asymcomp}
		\liminf_{N\,\to\, \infty} \lf(F_N[\psi_N] + \lambda_*^2 \, \lf\|\psi_N \ri\|^2 \ri) < \infty\,,
	\end{equation}
there exists a subsequence $ \lf\{\psi_{N_n} \ri\}_{n\in\N}$, which converges strongly in $\mathscr{H}$.
By a simple variation of \cite[Prop. 4]{DeO11}  (see also \cite{KS03}), we obtain the following:

	\begin{theorem}[$\Gamma-$covergence and operator convergence 2]\label{thm:normres} \mbox{}	\\
		Assume the quadratic forms $ \lf\{F_N\ri\}_{N\in\N}$ and $F_{\infty}$ to be lower semicontinuous and uniformly bounded from below. Moreover, assume that
		\begin{enumerate}[(i)]
			\item $ \lf \{F_N \ri \}_{N \in \N}$ $\Gamma-$converges to $F$ w.r.t. both the strong and weak topology;		
			\item $ \lf\{ F_N \ri\}_{N \in \N}$ is asymptotically compact in $\mathscr{H}$;
			\item $(A_{\infty} + \lambda^2)^{-1}$ is compact for some $\lambda > 0$ large enough;
		\end{enumerate}
		then, $ \lf\{A_N\ri\}_{N \in \N}$ converges to $A_{\infty}$ in norm resolvent sense.
	\end{theorem}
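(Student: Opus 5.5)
The plan is to reduce norm resolvent convergence to a \emph{weak-to-strong} continuity property of the resolvents, and to prove that property by a variational argument that combines $\Gamma$-convergence with asymptotic compactness. Let $\lambda_*$ be the uniform lower bound of the forms and fix $\lambda > 0$ large enough: so that $F_N + \lambda^2\|\cdot\|^2 \geq c\,\|\cdot\|^2$ uniformly with some $c>0$, so that (iii) holds, and so that \eqref{defn:asymcomp} applies. Set $R_N := (A_N+\lambda^2)^{-1}$ and $R_\infty := (A_\infty+\lambda^2)^{-1}$.

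\textbf{Step 1 (key claim).} We show: if $f_N \xrightarrow{\mathrm{w}} f$ in $\Hil$, then $R_N f_N \to R_\infty f$ strongly. Write $u_N := R_N f_N$. It is the unique minimizer of
\[
E_N[u] := F_N[u] + \lambda^2\|u\|^2 - 2\Re\langle f_N, u\rangle .
\]
Comparing with $u=0$ gives $E_N[u_N]\leq 0$, hence $c\|u_N\|^2 \leq F_N[u_N]+\lambda^2\|u_N\|^2 \leq 2\|f_N\|\,\|u_N\|$. Since weakly convergent sequences are bounded, both $\|u_N\|$ and $F_N[u_N]+\lambda^2\|u_N\|^2$ are uniformly bounded. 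Asymptotic compactness (ii) then yields, along any subsequence, a further subsequence with $u_N \to u$ strongly.

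\textbf{Step 2 (identifying the limit).} Along that subsequence, the $\Gamma$-$\liminf$ inequality, together with $\langle f_N,u_N\rangle \to \langle f,u\rangle$ (weak times strong convergence), gives $E_\infty[u] \leq \liminf E_N[u_N]$. For any $v\in\dom[F_\infty]$, the strong recovery sequence $v_N\to v$ from the $\Gamma$-$\limsup$ satisfies $\limsup E_N[v_N] \leq E_\infty[v]$, again because $\langle f_N,v_N\rangle\to\langle f,v\rangle$. Minimality $E_N[u_N]\leq E_N[v_N]$ then gives $E_\infty[u]\leq E_\infty[v]$, so $u$ minimizes $E_\infty$. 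The limit functional $E_\infty$ is strictly convex with unique minimizer $R_\infty f$, so $u = R_\infty f$. Since the limit is independent of the subsequence, the whole sequence converges.

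\textbf{Step 3 (upgrade to norm convergence).} Suppose $\|R_N - R_\infty\| \not\to 0$. Then along a subsequence there are unit vectors $f_N$ with $\|(R_N-R_\infty)f_N\|\geq\varepsilon$. Extract a further subsequence with $f_N \xrightarrow{\mathrm{w}} f$. By (iii), $R_\infty$ is compact, so $R_\infty f_N \to R_\infty f$ strongly. By Step 1, $R_N f_N\to R_\infty f$ strongly. Hence $\|(R_N-R_\infty)f_N\|\to 0$, a contradiction. Norm convergence at one $\lambda$ extends to all points of the resolvent set by the standard resolvent identities.

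The main obstacle is Step 1. One must make sure that the uniform lower bound is strong enough (this is the reason for choosing $\lambda$ large relative to $\lambda_*$) to make the energy bound truly uniform, so that \eqref{defn:asymcomp} applies. One must also check that only the weak convergence of $f_N$ is needed in the linear term, which works because it is always paired with a strongly convergent sequence.
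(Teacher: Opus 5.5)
Your proof is correct, and it follows essentially the route the paper relies on. The paper gives no argument of its own; it invokes a variation of \cite[Prop.~4]{DeO11} and \cite{KS03}, where $\Gamma$-convergence plus asymptotic compactness gives compact convergence of the resolvents ($R_N f_N \to R_\infty f$ strongly whenever $f_N \rightharpoonup f$), and this is then upgraded to norm convergence using the compactness of $R_\infty$, which is precisely your Steps~1--3. The points you leave implicit are routine: the $\Gamma$-$\liminf$ inequality and asymptotic compactness pass to subsequences, and Step~1 applies to a subsequence $f_{N_k}$ once you fill the missing indices with $f$.
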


\subsection{Measures and distribution of centers}\label{subsec:meas}	

We discuss here some implications of \cref{ass:UU,ass:xixj} for complex measures related to the empiric measure $\mu_N$ introduced in \eqref{eq:Vwconv}. To this purpose we first briefly recall some basic facts on complex measures to fix the notation.

For any complex measure $ \mu $ on the standard Borel $\sigma-$algebra $\Bor(\R^d)$ of $\R^d$, its {variation} $|\mu|$ is given by
	\begin{equation*}
		|\mu|(E)=\sup \sum_{n=1}^{\infty} |\mu(E_n)|\,, \qquad
		\mbox{for all } E \in \Bor(\R^d)\,,
	\end{equation*}
where the supremum is taken over all partitions $\{E_n\}_{n=1}^{\infty}$ of $E$. 
The total variation $ \lf\|\mu \ri\|_{\mathrm{TV}} := |\mu|(\R^d)$ defines a norm on the space of complex measures $\MeC_{\C}(\R^d)$, which makes the latter a Banach space \cite[Prop. 7.16]{Fo99}. 

A sequence of (complex) measures $ \lf\{\mu_N \ri\}_{N \in \N} \subset \MeC_{\C}(\R^d)$ converges in the weak sense to $\mu \in \MeC_{\C}(\R^d) $ (denoted as $ \mu_N \xrightarrow[N \to +\infty]{\mathrm{w}} \mu$), if
	\begin{equation}\label{eq:weakconv}
		 \lim_{N \to \infty} \int_{\R^d}f\, \diff \mu_{N} = \int_{\R^d}f\, \diff \mu,\qquad \mbox{for all } f\in C_{\mathrm{b}}(\R^d)\,,
	\end{equation}
where $C_{\mathrm{b}}(\R^d)$ is the space of all continuous and bounded complex-valued functions on $\R^d$. Finally, we recall that a sequence of measures $ \lf\{\mu_N \ri\}_{N \in \N} \subset \MeC_{\C}(\R^d)$ is uniformly tight if $ \forall \varepsilon > 0$ there exists a compact set $K_\varepsilon \subset \R^d$ such that $|\mu_N|(\R^d \setminus K_\varepsilon) \leqslant \varepsilon$ for all $N \in \N$.
		    
    \begin{lemma}\label{lem:compmes}
		Let \cref{ass:UU,ass:xixj} hold and let $ \lf\{ \nu_N \ri\}_{N \in \N}$ be the sequence of complex measures 
		\begin{equation}\label{eq:rhoN}
			\nu_N = \frac{1}{N} \sum_{j=1}^N  \pp_j \, \delta_{\xv_j}\,,
		\end{equation}
		where $\pp_j \in \C$ are so that there exists a constant $c > 0$ independent of $N$ such that
		\begin{equation}\label{eq:assaj}
			\frac{1}{N} \sum_{j=1}^N|\pp_j|^2\leqslant c\,.
		\end{equation}
		Then, there exists a weakly convergent subsequence $ \lf\{\nu_{N_k} \ri\}_{k\in\N}$, with $\nu_{N_k} \xrightarrow{\mathrm{w}} \nuU \in  \MeC_{\C}(\R^d)$. Moreover, the limit $\nuU$ is absolutely continuous with respect to $\muU$, {\it i.e.}, there exists $ \pp \in L^2(\R^d,\diff\muU) $ such that
		\begin{equation}\label{eq:nupmu}
			\diff\nuU = \pp\,\diff\muU\,.
		\end{equation}
	\end{lemma}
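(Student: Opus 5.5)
The argument has two parts: compactness of the measures $\nu_N$, then a Cauchy–Schwarz bound that controls the limit by $\muU$.

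\emph{Compactness.} By Cauchy–Schwarz and \eqref{eq:assaj}, the total variations are uniformly bounded:
\begin{equation*}
\|\nu_N\|_{\mathrm{TV}} \leqslant \frac{1}{N}\sum_{j=1}^N |\pp_j| \leqslant \Big(\frac1N\sum_{j=1}^N|\pp_j|^2\Big)^{1/2} \leqslant \sqrt{c}.
\end{equation*}
All $\nu_N$ are supported in the compact set $K:=\supp \UU$ (\cref{ass:UU}), so the family is trivially uniformly tight. The real and imaginary parts split into positive and negative parts, each a positive measure with mass at most $\sqrt c$ on $K$. Since $C(K)$ is separable, Banach–Alaoglu (equivalently Prokhorov) gives a subsequence along which all four converge weakly. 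Testing against $f \in C_{\mathrm b}(\R^d)$ only involves $f|_K$, so $\nu_{N_k}\xrightarrow{\mathrm{w}}\nuU$ for some complex measure $\nuU$ supported in $K$, with $\|\nuU\|_{\mathrm{TV}}\leqslant \sqrt c$.

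\emph{Absolute continuity with $L^2$ density.} This is the main step. Fix a nonnegative $f\in C_{\mathrm b}(\R^d)$ and a continuous $g$ with $|g|\leqslant 1$. Cauchy–Schwarz gives
\begin{equation*}
\Big|\int f g\,\diff\nu_N\Big| \leqslant \frac1N\sum_{j=1}^N |\pp_j|\, f(\xv_j) \leqslant \Big(\frac1N\sum_j|\pp_j|^2\Big)^{1/2}\Big(\frac1N\sum_j f(\xv_j)^2\Big)^{1/2} \leqslant \sqrt c\,\Big(\int f^2\,\diff\mu_N\Big)^{1/2}.
\end{equation*}
Passing to the limit along $N_k$ with \eqref{eq:Vwconv} yields
\begin{equation*}
\Big|\int fg\,\diff\nuU\Big|\leqslant \sqrt c\,\|f\|_{L^2(\diff\muU)}.
\end{equation*}
Taking the supremum over such $g$ (continuous functions are dense in $L^1(|\nuU|)$, so they approximate the polar density of $\nuU$) gives
\begin{equation*}
\int f\,\diff|\nuU| \leqslant \sqrt c\,\|f\|_{L^2(\muU)} \qquad \text{for all } f \in C_{\mathrm b},\ f \geqslant 0.
\end{equation*}
By regularity of the finite Borel measures involved, this extends to indicators of compact sets and then of Borel sets $E$:
\begin{equation*}
|\nuU|(E)\leqslant \sqrt{c\,\muU(E)}.
\end{equation*}
Hence $\nuU\ll\muU$, and Radon–Nikodym provides $\pp\in L^1(\diff\muU)$ with $\diff\nuU=\pp\,\diff\muU$.

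\emph{$L^2$ bound on the density.} To show $\pp\in L^2(\diff\muU)$, note that for every $f\in C_{\mathrm b}$ the functional $f\mapsto\int f\,\diff\nuU=\int f\pp\,\diff\muU$ satisfies
\begin{equation*}
\Big|\int f\pp\,\diff\muU\Big|\leqslant\sqrt c\,\|f\|_{L^2(\muU)}.
\end{equation*}
Continuous bounded functions are dense in $L^2(\diff\muU)$ (finite Radon measure), so this functional extends boundedly to $L^2(\muU)$. Riesz representation then gives $\pp\in L^2(\muU)$ with $\|\pp\|_{L^2(\muU)}\leqslant\sqrt c$.

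The one delicate point is the passage from continuous test functions to Borel sets and densities. It is handled by outer/inner regularity, or directly by the Riesz-duality argument in the last step, which avoids needing the intermediate inequality $|\nuU|(E)\leqslant\sqrt{c\,\muU(E)}$ for all Borel $E$ beyond establishing $\nuU\ll\muU$. \cref{ass:xixj} is not needed for this lemma.
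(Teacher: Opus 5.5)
Your proof is correct and follows essentially the same route as the paper: you bound the total variation via Cauchy--Schwarz and use tightness from the compact support of $\UU$ to extract a weakly convergent subsequence, then pass to the limit in the estimate $|\nuU(f)|^2\leqslant c\,\muU(|f|^2)$ and apply the Riesz representation in $L^2(\R^d,\diff\muU)$. The only difference is your intermediate step $|\nuU|(E)\leqslant\sqrt{c\,\muU(E)}$ followed by Radon--Nikodym; it is valid but redundant, because (as you note yourself) the Riesz representative $\pp$ already satisfies $\int f\,\diff\nuU=\int f\pp\,\diff\muU$ for all $f\in C_{\mathrm b}(\R^d)$, and this identifies $\diff\nuU=\pp\,\diff\muU$ as measures.
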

	
	\begin{proof}
		Let us firstly notice that \cref{ass:UU} ensures the uniform tightness of the sequence $ \lf\{\nu_N \ri\}_{N\in\N}$.
		Furthermore, by Cauchy-Schwartz inequality and condition \eqref{eq:assaj}, it follows that
			\begin{equation*}
				\lf\|\nu_N \ri\|_{\mathrm{TV}}^2 \leqslant \left(\tfrac{1}{N} \tx\sum_{j=1}^N |\pp_j|\right)^{2}
				\leqslant \tfrac{1}{N} \tx\sum_{j=1}^N |\pp_j|^2 \leqslant c\,,
			\end{equation*}
		which shows that the sequence $ \lf\{\nu_N \ri\}_{N \in \N}$ is also uniformly bounded in total variation. Then, by Prokhorov theorem \cite[Thm. 1.4.11]{Bo18}, we infer the existence of a weakly convergent subsequence $ \lf\{\nu_{N_k} \ri\}_{k \in \N}$, with $\nu_{N_k} \xrightarrow[k \to + \infty]{\mathrm{w}} \nuU \in \MeC_{\C}(\R^d)$.
		
		Now, consider the linear functionals on $C_{\mathrm{b}}(\R^d)$ defined by
			\bdm
				\mu_{N_k}(f) = \int_{\R^d}f\,\mathrm{d}\mu_{N_k}\,,\qquad \nu_{N_k}(f) = \int_{\R^d}f\,\mathrm{d}\nu_{N_k}\,,\qquad  \muU(f)=\int_{\R^d}f\,\mathrm{d}\muU\,, \qquad \nuU(f)=\int_{\R^d}f\,\mathrm{d}\nuU\,.
			\edm
		Again by Cauchy-Schwartz inequality, for any $k \in \N$ and for all $f \in C_{\mathrm{b}}(\R^d)$,
		\bdm
			|\nu_{N_k}(f)|^2 = \left|\tfrac{1}{N_k} \tx\sum_{j=1}^{N_k} \pp_j\,f(\xv_{j}) \right|^2 \\
			 \leqslant \left(\tfrac{1}{N_k} \tx\sum_{j=1}^{N_k} |\pp_j|^2\right) \left(\tfrac{1}{N_k} \tx\sum_{j=1}^{N_k} |f(\xv_j)|^2\right)
			\leqslant c \int_{\R^d}|f|^2\, \diff \mu_{N_k} 
			= c\; \mu_{N_k}\big(|f|^2\big)\,. 
		\edm
		On the other hand, by weak convergence of $ \lf\{ \nu_{N_k} \ri\}_{k \in \N} $ and $ \lf\{ \mu_{N_k} \ri\}_{k \in \N} $, we deduce that  $|\nu_{N_k}(f)| \to |\nuU(f)|$ and $|\mu_{N_k}(|f|^2)| \to |\muU(|f|^2)|$, as $k \to \infty$, since $|f|^2\in C_{\mathrm{b}}(\R^d)$. So, we can pass to the limit and obtain
		\begin{equation*}
			|\nuU(f)|^2 \leqslant c\; \mu_{\infty}\big(|f|^2\big) = c\,\|f\|_{L^2(\R^d,\, \diff \muU)}^2\,.
		\end{equation*}
		By standard density arguments, this implies that $\nuU$ extends to a bounded linear functional on $L^2(\R^d, \diff\muU)$. Therefore, by Riesz theorem \cite[Thm. 7.17]{Fo99} (see also \cite[Thm. 6.19]{Ru87}), we infer the existence of a function $\pp \in L^2(\R^d, \diff \muU)$ such that
		\begin{equation*}
			\nuU(f) = \int_{\R^d} f\,p\, \diff \muU,\qquad \mbox{for all } f\!\in\! L^2(\R^d, \diff \muU)\,.
		\end{equation*}
		Finally, since $\UU$ has compact support by \cref{ass:UU}, we also have $\pp\in L^1(\R^d,\diff \muU)$, which proves that $\nuU$ is absolutely continuous w.r.t. $\muU$.
	\end{proof}

Under the assumptions of \cref{lem:compmes}, by linearity, the sequence of complex conjugate measures $\lf\{\overline{\nu_N} \ri\}_{N \in \N}$ given by
		\begin{equation*}
			\overline{\nu_N} = \frac{1}{N} \sum_{j=1}^N \overline{\pp_j}\, \delta_{\xv_j}\,,
		\end{equation*}
admits a subsequence which weakly converges to $\overline{\nuU}$. Of course, the latter fulfills $\diff \overline{\nuU} = \overline{p}\, \diff \mu_\UU$, where $\pp \in L^2(\R^d,\diff \muU)$ is the function identified in \eqref{eq:nupmu}.

\subsection{Green functions}\label{subsec:G0prop}
Consider a Schr\"odinger operator of the form \eqref{eq:H0def} and let $\G0(\xv,\yv)$, with $\lambda > 0$ and $(\xv,\yv) \in \R^d \!\times \R^d$, be the integral kernel associated to its resolvent. Some key properties of $\G0(\xv,\yv)$ were already discussed in \cref{rem:wellQNHN}. Here we report some additional information on pointwise bounds and on the singular behavior of $\G0(\xv,\yv)$ near the diagonal $\{ \xv = \yv\}$.

Standard results on Carleman kernels ensure that $\G0(\xv,\yv)$ is jointly continuous away from the diagonal \cite[Thm. 21]{BGP07}:
	\begin{equation*}
		\G0 \in C^0(\R^d \!\times\! \R^d \setminus \{\xv = \yv\})\,.
	\end{equation*}
Futher properties of $\G0(\xv,\yv)$ are conveniently formulated by comparison with the (real-valued) Green function $\Gf0(\xv-\yv)$ for the free Laplacian:
	    \begin{equation}\label{eq:freegreen}
    			\Gf0(\xv) = \begin{cases}
    				\tfrac{1}{2\pi} K_0\big(\lambda|\xv|\big)\,, & \mbox{if } d=2\,, \vspace{0.1cm} \\
	    			\tfrac{e^{- \lambda|\xv|}}{4\pi|\xv|}\,, & \mbox{if } d=3\,.
			\end{cases}    		 
		\end{equation}
Here $K_0$ denotes the modified Bessel function of the second kind and order zero (\emph{a.k.a.} Macdonald function) \cite[Ch. 10]{OLBC10}.

Exploiting the positivity of the electrostatic potential $\VV$ and the diamagnetic inequality, one infers the following pointwise estimate \cite[Lemma 15]{BGP07}.
	
	\begin{lemma}\label{lemma:G0Gfineq}
		For any $\lambda > 0$ there holds
		\begin{equation}
			|\G0(\xv,\yv)| \leqslant \Gf0(\xv-\yv) \qquad \mbox{for all } (\xv,\yv) \in \R^d \!\times \R^d \setminus \{\xv = \yv\}\,.
		\end{equation}
	\end{lemma}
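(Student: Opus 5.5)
The plan is to reduce the estimate to the free case in two steps: first a Feynman--Kac--It\^o representation of the resolvent kernel, then the diamagnetic inequality to remove the phase. For $\lambda>0$ we write
\begin{equation*}
	\Rz = \int_0^\infty e^{-\lambda^2 t}\, e^{-t\H0}\,\diff t ,
\end{equation*}
so it suffices to control the heat kernel $e^{-t\H0}(\xv,\yv)$ pointwise. The Feynman--Kac--It\^o formula represents this kernel as the expectation over Brownian bridges from $\xv$ to $\yv$ in time $t$ of a product of two factors. The first is a unimodular phase $e^{-i\int \AA\cdot \diff \omega}$, where the stochastic integral is in the Stratonovich sense. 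The second is the weight $e^{-\int_0^t \VV(\omega(s))\diff s}$. This weight lies in $(0,1]$ because $\VV\ge 0$. Taking absolute values inside the expectation gives
\begin{equation*}
	|e^{-t\H0}(\xv,\yv)| \le e^{t\Delta}(\xv,\yv) = (4\pi t)^{-d/2} e^{-|\xv-\yv|^2/4t},
\end{equation*}
which is the pointwise form of the diamagnetic inequality.

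Integrating this bound against $e^{-\lambda^2 t}$ and using $\int_0^\infty e^{-\lambda^2 t} e^{t\Delta}(\xv,\yv)\,\diff t=\Gf0(\xv-\yv)$ yields the claim off the diagonal. This last identity holds because $\Gf0$ in \eqref{eq:freegreen} is exactly the kernel of $(-\Delta+\lambda^2)^{-1}$, and the integral converges for $\xv\neq\yv$.

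The main obstacle is justifying that the Feynman--Kac--It\^o representation holds pointwise, and not only almost everywhere, under the present hypotheses, namely smooth $\AA$ with polynomial growth and $\VV\in L^p_{\mathrm{loc}}$, $\VV\geq0$. I would first prove the inequality in the weak form $|\langle f,\Rz g\rangle|\le\langle |f|,(-\Delta+\lambda^2)^{-1}|g|\rangle$ for $f,g\in C_c^\infty$. This follows from the semigroup domination $|e^{-t\H0}g|\le e^{t\Delta}|g|$, which is the Kato--Simon diamagnetic inequality, proved via Trotter's formula or Kato's inequality. Since both kernels are continuous off the diagonal (\cite[Thm. 21]{BGP07}), the pointwise bound follows by approximating $\delta_{\xv}$ and $\delta_{\yv}$ with mollifiers supported in small disjoint balls. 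This is the argument of \cite[Lemma 15]{BGP07}, which I would cite for the details.
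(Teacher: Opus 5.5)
Your proposal is correct and follows the same route as the paper, which gives no independent argument but simply invokes the diamagnetic inequality together with $\VV \geq 0$ and cites \cite[Lemma 15]{BGP07}. That is exactly your argument: semigroup domination $|e^{-t\H0}g| \leq e^{t\Delta}|g|$, Laplace transform to the resolvent, and passage from the weak bound to the pointwise one using continuity of both kernels off the diagonal. The Feynman--Kac--It\^o representation is not needed once you have the weak-form argument.
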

	
On the other hand, $\G0(\xv,\yv) $ shares the same leading order singularity as $\Gf0(\xv-\yv)$ for $|\xv - \yv| \to 0$. A more refined asymptotic expansion can be derived using the standard Laplace-transform representation
	\begin{equation*}
		\G0(\xv,\yv)  = \int_0^{\infty} e^{-t \lambda^2} \K0(t;\xv,\yv) \,\diff t\,,
	\end{equation*}
where $\K0(t;\xv,\yv)$ is the integral kernel associated to the heat operator $e^{-t \H0}$. Combining this with the small-time asymptotics of $\K0(t;\xv,\yv) $ for $t \to 0^+$ \cite{BK13}, one infers the following:

	\begin{lemma}
		\label{lemma: G regularity}
		For any fixed $\lambda > 0$ there holds
			\begin{equation}\label{eq:G0sing}
				\G0(\xv,\yv)  = \gs(\xv-\yv)\,e^{i \teA(\xv,\yv) } + \hh(\xv,\yv) \,,
			\end{equation}
		where
			\begin{eqnarray}
				& \gs(\xv) := \begin{cases}
						- \frac{1}{2\pi} \log|\xv|\,, & \text{if } d=2\,,\vspace{0.1cm}\\
						\frac{1}{4\pi|\xv|}\,, & \text{if } d=3\,,
						\end{cases} 
						\label{eq:gs} \\
				& \displaystyle{\teA(\xv,\yv)  := (\xv-\yv) \cdot\! \int_0^1\! \AA\big(\yv + s(\xv-\yv)\big)\, \diff s\,.}
						\label{eq:teA}
			\end{eqnarray}
		and $\hh(\xv,\yv) $ is a jointly continuous function on $\R^d \!\times \R^d$.
	\end{lemma}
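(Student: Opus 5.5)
The plan is to start from the heat-kernel representation $\G0(\xv,\yv)=\int_0^\infty e^{-t\lambda^2}\K0(t;\xv,\yv)\,\diff t$. The integral is split at a fixed small time $T>0$. The large-time part $\int_T^\infty$ will be shown to be jointly continuous on $\R^d\times\R^d$. Indeed, by the diamagnetic inequality and $\VV\geq 0$ we have $|\K0(t;\xv,\yv)|\leq (4\pi t)^{-d/2}e^{-|\xv-\yv|^2/4t}$, which is bounded uniformly for $t\geq T$. Combined with the factor $e^{-t\lambda^2}$, this gives integrability in $t$. Continuity of $\K0(t;\cdot,\cdot)$ for $t>0$ (a Carleman/ultracontractivity argument, or the semigroup property $\K0(t)=\K0(t/2)\K0(t/2)$ with $\K0(t/2;\xv,\cdot)\in L^2$ continuous in $\xv$) then yields joint continuity by dominated convergence.

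For the small-time part, the aim is to use the short-time asymptotics of \cite{BK13} in the form of a first-order parametrix:
\begin{equation*}
\K0(t;\xv,\yv)=(4\pi t)^{-d/2}e^{-|\xv-\yv|^2/4t}\,e^{i\teA(\xv,\yv)}\bigl(1+r(t;\xv,\yv)\bigr),
\end{equation*}
with $|r|\leq C t^{\delta}$ locally uniformly for some $\delta>0$ and $r$ continuous. Here the phase $e^{i\teA}$ is exactly the gauge factor given by the line integral of $\AA$ along the segment, which is why it appears in \eqref{eq:teA}. The condition $\VV\in L^p_{\mathrm{loc}}$ with $p>3$ is what ensures that the remainder is $O(t^\delta)$ rather than merely $o(1)$ in a weak sense.

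With this expansion, the leading term becomes $e^{i\teA}\int_0^T e^{-t\lambda^2}(4\pi t)^{-d/2}e^{-|\xv-\yv|^2/4t}\,\diff t = e^{i\teA}\bigl(\Gf0(\xv-\yv)-\text{(continuous tail)}\bigr)$. The classical expansions $\Gf0(\xv)=\gs(\xv)+c_d+o(1)$ (for $d=2$, $K_0(z)=-\log(z/2)-\gamma+o(1)$) and $\Gf0(\xv)=\gs(\xv)-\lambda/4\pi+O(|\xv|)$ (for $d=3$) show that $\Gf0-\gs$ extends continuously across the diagonal. Since $e^{i\teA}$ is continuous and the products $(\Gf0-\gs)e^{i\teA}$ remain continuous, the leading term contributes $\gs e^{i\teA}$ plus a continuous function. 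The remainder $\int_0^T e^{-t\lambda^2}(4\pi t)^{-d/2}e^{-|\xv-\yv|^2/4t}r\,\diff t$ is dominated by $C\int_0^T t^{\delta-d/2}\,\diff t$. This is finite for $d=2$; for $d=3$ it is finite if $\delta>1/2$.

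The main obstacle is therefore the remainder estimate in $d=3$. A bound $|r|\leq Ct^\delta$ with $\delta\le 1/2$ would only give a weaker singularity of order $|\xv-\yv|^{2\delta-1}$, which is not continuous. One fix is to rely on the precise statement of \cite{BK13}, which gives the error as $O(t)$ times the Gaussian when $\AA$ is smooth, and to treat the electric potential separately. For the electric part, the plan is to use a Duhamel/second-resolvent argument: write $\G0=G^\lambda_{\AA}-\int G^\lambda_{\AA}(\xv,\mathbf z)\VV(\mathbf z)\G0(\mathbf z,\yv)\,\diff\mathbf z$. Locally, the integral is bounded by $\int |\xv-\mathbf z|^{-1}\VV(\mathbf z)|\mathbf z-\yv|^{-1}\,\diff\mathbf z$. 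Since $\VV\in L^p_{\mathrm{loc}}$ with $p>3$, Hölder's inequality makes this bounded and jointly continuous, because $|\cdot|^{-1}\ast|\cdot|^{-1}$-type kernels lie in $L^{q}$ for all $q<3/2$. Locality is ensured by the exponential decay of $\Gf0$ via \cref{lemma:G0Gfineq}. This reduces the problem to the purely magnetic Green function, for which the smooth-coefficient parametrix yields the required $O(t)$ remainder.
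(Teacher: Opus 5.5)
Your argument works once the two points below are fixed. It refines the paper's route rather than reproducing it. The paper only sketches the proof: it inserts the small-time asymptotics of the full heat kernel $\K0(t;\xv,\yv)$ of $\H0$, electric potential included and taken from the cited reference, into $\G0=\int_0^\infty e^{-t\lambda^2}\K0\,\diff t$, and reads off the decomposition.

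You instead use the Laplace representation only for the purely magnetic operator $(-i\nabla+\AA)^2$. There the classical smooth parametrix leaves an error dominated by $t^{1-d/2}$, which is integrable also for $d=3$. You then remove $\VV$ with the second resolvent identity and H\"older.

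The paper's version is shorter. However, for a potential that is merely $L^p_{\mathrm{loc}}$, it silently needs a near-diagonal remainder of relative order $o(t^{1/2})$ when $d=3$, which is exactly the threshold you identified. Your version needs only the smooth parametrix and shows where $p>3$ enters. It is the condition $p'<3/2$, which makes $\mathbf z\mapsto|\xv-\mathbf z|^{-1}|\mathbf z-\yv|^{-1}$ bounded in $L^{p'}_{\mathrm{loc}}$ uniformly in $(\xv,\yv)$, including at $\xv=\yv$. Together with Vitali's theorem, this gives continuity of $\G0-G^\lambda_{\AA}$ across the diagonal. Your first-pass claim that $p>3$ yields $|r|\leq Ct^\delta$ in the full heat-kernel expansion is unsupported, but your final argument does not use it.

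The first point is the sign of the gauge factor. For $\VV=0$ and constant $\AA$ one computes exactly $\G0(\xv,\yv)=\Gf0(\xv-\yv)\,e^{-i\teA(\xv,\yv)}$. In general the leading parametrix coefficient of $(-i\nabla+\AA)^2$ is $e^{-i\teA}$, not $e^{+i\teA}$. With the sign as written, for $d=3$ the remainder $\hh$ would contain $-2i\,\gs(\xv-\yv)\sin\teA(\xv,\yv)\approx-\tfrac{i}{2\pi}\,\tfrac{(\xv-\yv)\cdot\AA(\yv)}{|\xv-\yv|}$. This term is bounded but discontinuous across the diagonal. The slip is inherited from the statement itself and is immaterial for $d=2$ or $\AA=\bm{0}$. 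Your argument actually proves the lemma with $e^{-i\teA}$.

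The second point is that locality cannot come from the decay of $\Gf0$ alone, because no growth condition is imposed on $\VV$ at infinity. For $\VV(\xv)=e^{|\xv|^2}$, the majorant $\int\Gf0(\xv-\mathbf z)\VV(\mathbf z)\Gf0(\mathbf z-\yv)\,\diff\mathbf z$ diverges. The fix is to apply the resolvent identity only to $\VV\one_B$, with $B$ a large ball, so that the $\mathbf z$-integral runs over a compact set. Then compare the Green function of $(-i\nabla+\AA)^2+\VV\one_{\R^d\setminus B}$ with the purely magnetic one near the diagonal inside $B$ by heat-kernel localization. For example, by Feynman--Kac--It\^o the two heat kernels differ there by at most $Ce^{-c/t}$ times the Gaussian, which contributes a continuous term after Laplace transform.
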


	\begin{remark}[Local singularity of $\G0(\xv,\yv) $]
	The term $\gs(\xv-\yv)$ in \eqref{eq:G0sing} captures a universal singular behavior on the diagonal. The magnetic phase $\teA \in C^\infty(\R^d \times \R^d)$ is essential when $d = 3$ and $\AA \neq \bm{0}$. In fact, it accounts for a subleading contribution proportional to $(\xv-\yv)/|\xv-\yv|$ in the asymptotic expansion, which would otherwise spoil the continuity of the reminder $\hh(\xv,\yv) $ across the diagonal. In all other cases ({\em i.e.}, $d = 2$ for any $\AA$, or $d = 3$ with $\AA = \bm{0}$), the reminder is continuous regardless of the phase correction.
	\end{remark}
	
\subsection{Hilbert scales associated to $\H0$}\label{subsec:Hilsca}
Exploiting the positivity and self-adjointness of $\H0$, we can define the fractional powers $(\H0 + \one)^{r/2}$ for any $r \in \R$, and, equipping the domain $\dom\big((\H0 + \one)^{r/2}\big)$ with the inner product
	\begin{equation}\label{eq:innr}
		\braketr{\psi_1}{\psi_2}_r := \braketr{(\H0 + \one)^{r/2}\psi_1}{(\H0 + \one)^{r/2}\psi_2},
	\end{equation}
	we can construct the scale of Hilbert spaces
	\begin{equation*}
		\HH^r := \overline{\dom\big((\H0+\one)^{r/2}\big)}^{\,\|\cdot \|_{\HH^r}} , \qquad \mbox{for } r \in \R\,,
	\end{equation*}
where $\|\cdot\|_{\HH^r}$ indicates the norm induced by the modified inner product \eqref{eq:innr}. Then, we have the following property \cite[Prop. 2.16 and Prop. 2.18]{Fe16}.
	\begin{lemma}\label{lem:HHdual}
		For any $r > 0$, $
			\HH^{-r} = (\HH^r)'\,$.
	\end{lemma}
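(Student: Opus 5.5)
The plan is to realize the duality through the $L^2$ inner product. The map will be an isometry on a dense subspace, which extends to the completion, and surjectivity will come from a density/Riesz argument. Throughout we fix $r>0$ and write $T := (\H0+\one)^{r/2}$, a positive self-adjoint operator with $T \geqslant \one$. Hence $T^{-1} = (\H0+\one)^{-r/2}$ is bounded, everywhere defined, and has range $\dom(T)$.

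\emph{Identifying the two spaces.} Since $T$ is closed and $\|T\psi\|_2 \geqslant \|\psi\|_2$, the graph norm on $\dom(T)$ is equivalent to $\|\cdot\|_{\HH^r}$. Hence $\dom(T)$ is already complete and $\HH^r = \dom(T)$. On the other side, $\dom(T^{-1}) = L^2(\R^d)$, so $\HH^{-r}$ is the completion of $L^2(\R^d)$ with respect to $\|\psi\|_{\HH^{-r}} = \|T^{-1}\psi\|_2$. Moreover, $T: \HH^r \to L^2(\R^d)$ is a unitary map: it is isometric by definition of $\|\cdot\|_{\HH^r}$, and it is onto because $T T^{-1} = \one$ on $L^2$.

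\emph{Isometric embedding.} For $\psi \in L^2(\R^d)$ define $J\psi \in (\HH^r)'$ by $(J\psi)(\phi) := \braket{\psi}{\phi}$ for $\phi \in \HH^r$. Writing $\braket{\psi}{\phi} = \braket{T^{-1}\psi}{T\phi}$ and using that $T$ maps the unit ball of $\HH^r$ onto the unit ball of $L^2$, we get
\[
\|J\psi\|_{(\HH^r)'} = \sup_{\|\phi\|_{\HH^r}\leqslant 1} \big|\braket{T^{-1}\psi}{T\phi}\big| = \|T^{-1}\psi\|_2 = \|\psi\|_{\HH^{-r}} .
\]
Thus $J$ is a (conjugate-)linear isometry from a dense subspace of $\HH^{-r}$ into the Banach space $(\HH^r)'$. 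It therefore extends uniquely to an isometry $\overline{J}: \HH^{-r} \to (\HH^r)'$, whose range is closed because $\HH^{-r}$ is complete.

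\emph{Surjectivity.} It remains to show that $J(L^2)$ is dense in $(\HH^r)'$; combined with the closed range, this gives $\overline{J}$ onto. Since $\HH^r$ is a Hilbert space, hence reflexive, a non-dense range would produce a nonzero $\phi \in \HH^r$ annihilated by every $J\psi$, i.e. $\braket{\psi}{\phi} = 0$ for all $\psi \in L^2(\R^d)$, which forces $\phi = 0$. Alternatively, one can argue directly with the Riesz theorem: any $\ell \in (\HH^r)'$ has the form $\ell(\phi) = \braket{T\chi}{T\phi}$ for some $\chi \in \HH^r$. Approximating $T\chi$ in $L^2$ by $T^{-1}\eta_n$ with $\eta_n \in L^2$ then exhibits $\ell$ as a limit of the functionals $J(\eta_n)$.

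None of these steps is a genuine obstacle. The only points requiring care are the identification $\HH^r = \dom(T)$ and the surjectivity of $T: \dom(T) \to L^2$, on which the isometry computation rests. One must also fix the (anti)linearity convention of the pairing consistently, so that $\overline{J}$ is the identification stated in \cref{lem:HHdual}.
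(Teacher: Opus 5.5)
Your proof is correct and complete. With $T=(H_0+1)^{r/2}$ unitary from $\HH^r=\dom(T)$ onto $L^2(\R^d)$, the $L^2$ pairing defines a functional on $\HH^r$ of norm exactly $\|T^{-1}\psi\|_2=\|\psi\|_{\HH^{-r}}$. It therefore extends to an isometry $\HH^{-r}\to(\HH^r)'$ whose range is both closed and dense, i.e.\ all of $(\HH^r)'$. You also correctly flag that the (anti)linearity convention of the pairing has to be fixed consistently.

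The paper gives no proof of its own here and simply cites \cite[Prop.~2.16 and Prop.~2.18]{Fe16}. Your argument (isometric extension of the $L^2$ pairing plus a density or Riesz step for surjectivity) is the standard Hilbert-scale proof that this citation stands for.
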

Next, combining local elliptic regularity for positive even orders with complex interpolation theory and standard Sobolev embedding theorem \cite[p. 98, Prop. 2.13]{CP82}, one obtains the following regularity properties \cite[Prop. 2.42 and Cor. 2.43]{Fe16} (see also \cite{FP23}).
	
	\begin{lemma}\label{lem:HHemb}
		The following inclusions are continuous embeddings:
		\begin{enumerate}[(i)]
		\item $\HH^r \subset H^{r}_{\mathrm{loc}}(\R^d)$ for every $r \geqslant 0$;
		\item $\HH^r \subset C_0(\R^d)$ for every $r > d/2$.
		\end{enumerate}
	\end{lemma}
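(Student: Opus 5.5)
The plan is to prove the two inclusions separately. Statement (i) comes from a local elliptic estimate at the even order $r=2$ combined with complex interpolation. Statement (ii) comes from Sobolev embedding for local continuity, plus a domination argument via the diamagnetic inequality to get decay at infinity.

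For (i), fix a cutoff $\chi\in C^\infty_c(\R^d)$ and consider the localization map $T_\chi:\psi\mapsto\chi\psi$. The case $r=0$ is trivial. For $r=2$, take $\psi\in\dom(\H0)$. By Kato's inequality (or the form domain) we have $\psi\in H^1_{\mathrm{loc}}$, with $\|\chi\nabla\psi\|_2\lesssim \|\psi\|_{\HH^1}$, since $\AA$ is smooth and hence bounded on $\supp\chi$. Expanding $(-i\nabla+\AA)^2$ gives
\begin{equation*}
-\Delta(\chi\psi)=\chi(\H0\psi-\VV\psi)-2\nabla\chi\cdot\nabla\psi-(\Delta\chi)\psi+\chi\bigl(2i\AA\cdot\nabla\psi+i(\nabla\!\cdot\!\AA)\psi-|\AA|^2\psi\bigr).
\end{equation*}
Every term on the right-hand side is in $L^2$ with norm $\lesssim\|\psi\|_{\HH^2}$. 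For the potential term we need $\|\VV\psi\|_2\lesssim\|\H0\psi\|_2+\|\psi\|_2$. This follows because $\VV\in L^p_{\mathrm{loc}}$ with $p>3$ makes $\VV$ locally $(-\Delta)$-bounded with relative bound $0$ when $d\le3$; I would localize this bound and absorb via a standard bootstrap. Hence $T_\chi:\HH^2\to H^2$ is bounded. Since $\HH^{2\theta}=[\HH^0,\HH^2]_\theta$ (complex interpolation of domains of powers of a positive self-adjoint operator) and $H^{2\theta}=[L^2,H^2]_\theta$, we get $T_\chi:\HH^r\to H^r$ for all $r\in[0,2]$, which is (i) in that range. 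For $r>2$, I would iterate the same commutator argument on $\H0^k\psi$ and interpolate between consecutive even orders. Regularity of $\VV$ enters here, and I would state honestly that only $r\le 2$ is needed later, since $d/2<2$.

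For (ii), continuity is immediate: for $r>d/2$, (i) and the Sobolev embedding $H^r_{\mathrm{loc}}\subset C^0$ give continuity. To get vanishing at infinity together with a global sup bound, write $\psi=(\H0+\one)^{-r/2}\phi$ with $\phi\in L^2$ and use subordination:
\begin{equation*}
(\H0+\one)^{-r/2}=\Gamma(r/2)^{-1}\int_0^\infty t^{r/2-1}e^{-t}e^{-t\H0}\,\diff t .
\end{equation*}
The diamagnetic inequality together with $\VV\ge0$ gives $|e^{-t\H0}\phi|\le e^{t\Delta}|\phi|$ pointwise, the same mechanism as in \cref{lemma:G0Gfineq}. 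Therefore
\begin{equation*}
|\psi|\le(-\Delta+\one)^{-r/2}|\phi|.
\end{equation*}
The right-hand side belongs to $H^r(\R^d)\subset C_0(\R^d)$, with sup norm $\lesssim\|\phi\|_2=\|\psi\|_{\HH^r}$. So $\psi$ is continuous, vanishes at infinity, and satisfies $\|\psi\|_\infty\lesssim\|\psi\|_{\HH^r}$, which gives the continuous embedding.

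I expect the main obstacle to be the $r=2$ local elliptic estimate, in particular controlling $\VV\psi$ with only $\VV\in L^p_{\mathrm{loc}}$. The point is that $\VV\ge0$ only yields the form bound, not a priori $\VV\psi\in L^2$ locally in terms of $\|\H0\psi\|_2$. My first attempt would be to establish $\chi\psi\in H^2$ by a difference-quotient argument with $\VV\chi\psi$ treated as an $L^2$ perturbation via relative boundedness. If that fails, I would fall back on Kato's inequality $\Delta|\psi|\ge -\Re(\mathrm{sgn}\,\overline\psi\,\H0\psi)$.
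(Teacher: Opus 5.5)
For (i) with $0\le r\le 2$ you follow the paper's route: a local elliptic estimate at order two, then complex interpolation between $\HH^0$ and $\HH^2$. For (ii) your route is different. The paper composes (i) with the Sobolev embedding, which only gives continuity with sup bounds on compact sets, since the local constants depend on $\sup_{\supp\chi}|\AA|$ and $\|\VV\|_{L^p(\supp\chi)}$. In effect that is $\HH^r\hookrightarrow C^0(\R^d)$, which is all the paper uses later. Your subordination formula with the diamagnetic domination $|\psi|\le(1-\Delta)^{-r/2}|\phi|$ gives, using only $\VV\ge0$, the global bound $\|\psi\|_\infty\lesssim\|\psi\|_{\HH^r}$ and decay at infinity. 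That is a genuine $C_0$ statement and a real gain over the paper. One small adjustment: you prove (i) only up to $r=2$, so for $r>2$ get continuity from $\HH^r\hookrightarrow\HH^{r'}$ with some $r'\in(d/2,2]$.

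The step you leave open, local control of $\VV\psi$, closes in one line, and the remedies you propose miss it. Absorption via relative boundedness presupposes $\Delta(\chi\psi)\in L^2$. Kato's inequality controls $|\psi|$, not second derivatives of $\psi$. Instead, since $\psi\in\HH^2\subset\HH^1=\dom[Q_0]$ and $\AA$ is locally bounded, you have $\chi\psi\in H^1(\R^d)$ with $\|\chi\psi\|_{H^1}\lesssim\|\psi\|_{\HH^1}$, hence $\chi\psi\in L^q$ for $2\le q\le 6$ when $d\le 3$. Take $\tilde\chi\in C^\infty_c$ with $\tilde\chi=1$ on $\supp\chi$. Hölder gives
\begin{equation*}
\|\chi\VV\psi\|_2\le\|\tilde\chi\VV\|_{L^p}\,\|\chi\psi\|_{L^{2p/(p-2)}}\lesssim\|\psi\|_{\HH^1},
\end{equation*}
because $2p/(p-2)<6$ exactly when $p>3$. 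This is what the hypothesis $p>3$ is for. You then need no a priori $H^2$ information, no difference quotients, and no global bound $\|\VV\psi\|_2\lesssim\|H_0\psi\|_2+\|\psi\|_2$. Local relative boundedness would not give that global bound anyway, since $\VV$ may grow at infinity.

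Your refusal to claim $r>2$ is justified, not just cautious. With only $\VV\in L^p_{\mathrm{loc}}$, the inclusion in (i) fails for larger $r$. Take $d=3$, $\AA=0$, $\VV=\one_B$ for a ball $B$, and $\psi=\Rz f$ with $0\le f\in C^\infty_c(\R^3)$, $f\neq0$.
\begin{itemize}
\item Since $f\in\dom(H_0)=H^2$, we have $\psi\in\dom(H_0^2)=\HH^4$, and $\psi>0$ everywhere.
\item However $\Delta\psi=\lambda^2\psi+\one_B\psi-f$ jumps by $\psi|_{\partial B}>0$ across $\partial B$.
\item So $\Delta\psi\notin H^1_{\mathrm{loc}}$, and hence $\psi\notin H^3_{\mathrm{loc}}$.
\end{itemize}
The even-order iteration behind the paper's citation therefore tacitly needs a smooth $\VV$. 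The paper itself only uses $r\le2$.
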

We conclude by recalling the following result \cite[Prop. 2.12]{Fe16}.

	\begin{lemma}\label{lem:Rzext}
		For any $\lambda > 0$, the resolvent $\Rz$ can be uniquely extended to a bounded operator
		\begin{equation}
			\Rz : \HH^r \to \HH^{r+2}\,, \qquad \mbox{for any } r>0\,.
		\end{equation}
	\end{lemma}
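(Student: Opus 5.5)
The plan is to argue entirely through the spectral theorem for the non-negative self-adjoint operator $\H0$. The key observation is that $\Rz$, $(\H0+\one)^{r/2}$ and $(\H0+\one)^{(r+2)/2}$ are all Borel functions of the same operator, so they commute on the relevant domains. The desired bound will then reduce to a bound on a scalar function on $\sigma(\H0)\subset[0,\infty)$.

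\emph{Step 1 (mapping of domains).} Take $\psi\in\dom\big((\H0+\one)^{r/2}\big)$ and let $E$ be the spectral measure of $\H0$. I will check that $\Rz\psi\in\dom\big((\H0+\one)^{(r+2)/2}\big)$. By functional calculus this amounts to the finiteness of
\begin{equation*}
\int_{[0,\infty)} \frac{(t+1)^{r+2}}{(t+\lambda^2)^2}\,\diff\langle \psi, E(t)\psi\rangle .
\end{equation*}
This integral is finite because
\begin{equation*}
\frac{(t+1)^{r+2}}{(t+\lambda^2)^2} = \Big(\frac{t+1}{t+\lambda^2}\Big)^2 (t+1)^{r} \leqslant C_\lambda^2\,(t+1)^r ,
\qquad C_\lambda := \sup_{t\geqslant 0}\frac{t+1}{t+\lambda^2} = \max\{1,\lambda^{-2}\}<\infty ,
\end{equation*}
and $\int (t+1)^r\,\diff\langle\psi,E(t)\psi\rangle<\infty$ by the assumption on $\psi$.

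\emph{Step 2 (the estimate).} The same computation yields
\begin{equation*}
\|\Rz\psi\|_{\HH^{r+2}} = \big\|(\H0+\one)^{(r+2)/2}\Rz\psi\big\|_2 \leqslant C_\lambda\,\big\|(\H0+\one)^{r/2}\psi\big\|_2 = C_\lambda\,\|\psi\|_{\HH^r}.
\end{equation*}
Hence $\Rz$ is bounded from $\dom\big((\H0+\one)^{r/2}\big)$, equipped with the $\HH^r$-norm, into $\HH^{r+2}$.

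\emph{Step 3 (extension and uniqueness).} By definition, $\HH^r$ is the closure of $\dom\big((\H0+\one)^{r/2}\big)$ in the norm $\|\cdot\|_{\HH^r}$; for $r>0$ the domain is already complete, so the closure is just the domain itself. In either case, the bounded linear map of Step 2 extends by continuity (BLT theorem) to a unique bounded operator $\HH^r\to\HH^{r+2}$ with the same norm bound $C_\lambda$. Uniqueness holds because any two bounded extensions coincide on a dense subspace.

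I do not expect a real obstacle here. The only point needing care is to justify the commutation of the functional-calculus expressions, which I will handle by working with spectral integrals directly as in Step 1 rather than composing unbounded operators formally. Incidentally, the reverse inequality with constant $\min\{1,\lambda^{-2}\}$ also shows that the extension is an isomorphism, though the statement does not require this.
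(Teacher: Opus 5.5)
Your argument is correct. The paper gives no proof of its own and simply cites \cite[Prop.~2.12]{Fe16}; your spectral-calculus bound followed by extension by continuity is the standard argument for this kind of statement.

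One thing worth making explicit: Steps 1--3 never actually use $r>0$. For $r\le 0$ one has $\dom\big((\H0+\one)^{r/2}\big)=L^2(\R^d)$, and Step 3 becomes a genuine completion argument. So you in fact obtain the bounded extension $\Rz:\HH^r\to\HH^{r+2}$ for every $r\in\R$. This is the version the paper really needs, e.g.\ with $r=-2$ to get $\Rz\nu_N\in L^2(\R^d)$ for $\nu_N\in\HH^{-2}$. For $r>0$, by contrast, the statement is merely a restriction of $\Rz$.
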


\section{Proofs}\label{sec:proofs}

For later convenience let us modify the domain expression \eqref{eq:QNdef} of the quadratic forms $\QN$ according to \eqref{eq:qjres} and use the charges $ \lf\{ p_j = N\, q_j\ri\}_{j \in \lf\{1, \ldots, N \ri\}} $ instead of $ \lf\{ q_j \ri\}_{j \in \lf\{1, \ldots, N \ri\}} $, {\it i.e.},
\begin{eqnarray}
		\dom[\QN] & := & \lf\{\psi = \phi_\lambda + \tfrac{1}{N} \tx\sum_{j=1}^N p_j\, \G0(\,\cdot\,, \xv_j) \in L^2(\R^d) \;\big|\; \phi_{\lambda}\!\in\! \dom[\Q0],\;\, p_1,...\,,p_N \!\in\! \C \ri\} \,, \nonumber  \\
		\QN[\psi] & := & \Q0 \lf[\phi_{\lambda}\ri] + \lambda^2 \lf\|\phi_{\lambda} \ri\|_{2}^2 - \lambda^2 \lf\|\psi\ri\|_{2}^{2} + \tfrac{1}{N^2} \tx\sum_{i,j=1}^N  \overline{p_i}\, \big[\GaN\big]_{ij}\, p_j\,. \nonumber
	\end{eqnarray} 
Note that the operator domain \eqref{eq:HNdef} may be changed accordingly. Let also $\Qinf$ be the alleged $\Gamma$-limit of $ \QN $, as defined in \eqref{eq:Qinf}. In view of \cref{thm:Gares}, it is enough to prove the analogue of \cref{thm:main} for the {shifted} quadratic forms 
			\begin{gather}
				\QN^\lambda := \QN + \lambda^2 \,\|\cdot\|_{2}^2\,, \qquad \dom[\QN^\lambda] = \dom[\QN]\,, \label{eq:lambdaNform} \\
				Q^\lambda_\infty := \Qinf + \lambda^2\, \|\cdot\|_{2}^2\,, \qquad \dom[Q^\lambda_\infty] = \dom[\Qinf]\,,\label{eq:lambdalimitform}
			\end{gather}
where $\lambda > 0$ is a fixed spectral parameter, independent of $N$, to be chosen large enough. As before, we extend these forms to the whole Hilbert space $L^2(\R^d)$ by setting them equal to $+\infty$ outside of their respective domains.

In the sequel, we establish the $ \Gamma-\liminf $ and $ \Gamma-\limsup $ inequalities for the sequence $\{\QN^\lambda\}_{N \in \N}$, which correspond to items (i) and (ii) in \cref{thm:main}. These ultimately account for the $\Gamma$-convergence result reported in \cref{thm:main}. We will finally derive \cref{cor:main} exploiting asymptotic compactness and \cref{thm:normres}.

\subsection{Proof of the $ \Gamma-\liminf $ inequality}
		In this section, we establish item (i) in \cref{thm:main}. To this avail, we begin by collecting a few preliminary auxiliary lemmas.

The following result provides a generalization of some arguments, originally presented in \cite{PV80}.
	
	\begin{lemma}\label{lem:consingm}
		Suppose \cref{ass:UU,ass:xixj} hold and consider the sequence of complex measures $ \lf\{\nu_N \ri\}_{N \in \N}$  \eqref{eq:rhoN}, fulfilling \eqref{eq:assaj} and converging weakly to $\nuU \in \MeC_{\C}(\R^d)$. Moreover, let $\gs$ be defined as in \eqref{eq:gs} and $\xi$ be a Lipschitz continuous function on $\R^d \!\times \R^d$. Then,
		\begin{equation}\label{eq:lemconv}
			\lim_{N \to \infty} \int_{(\R^d\times\R^d) \setminus \Delta}\!\! \gs(\xv-\yv)\,\xi(\xv,\yv)\,\diff \overline{\nu_N}(\xv)\,d\nu_N(\yv)
			= \int_{\R^d \times \R^d}\!\! \gs(\xv-\yv)\,\xi(\xv,\yv)\,\diff\overline{\nuU}(\xv)\,\diff\nuU(\yv)\,,
		\end{equation}
		where $\Delta = \lf\{ (\xv,\yv) \in \R^d \!\times\! \R^d \: \big| \: \xv = \yv \ri\} $.
	\end{lemma}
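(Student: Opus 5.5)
Our approach is to truncate the singular kernel: the bounded part converges by weak convergence of measures, and the remainder near the diagonal is shown to be uniformly small using \cref{ass:xixj} and the $\ell^2$ bound \eqref{eq:assaj}. Fix $\delta>0$ and split $\gs = \gs^{\delta} + r^\delta$, where $\gs^\delta(\xv) := \min\{\gs(\xv), \gs(\delta)\}$ for $d=3$ (for $d=2$ we use the analogous cap of $-\tfrac{1}{2\pi}\log|\xv|$ at $|\xv|=\delta$, with $\delta<1$). Thus $\gs^\delta$ is continuous and bounded, and $r^\delta := \gs - \gs^\delta$ is nonnegative and supported in $\{|\xv|\le \delta\}$. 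Since all measures are supported in the compact set $\supp \UU$, the function $\gs^\delta(\xv-\yv)\xi(\xv,\yv)$ may be replaced by a bounded continuous function agreeing with it there. Weak convergence of $\nu_N$ implies weak convergence of the product measures $\overline{\nu_N}\otimes\nu_N \to \overline{\nuU}\otimes\nuU$, by density of finite sums of products $f(\xv)g(\yv)$ in $C(K\times K)$ via Stone--Weierstrass, together with the uniform total variation bound from the proof of \cref{lem:compmes}. The diagonal contributes $\frac{1}{N^2}\sum_j |\pp_j|^2 \gs^\delta(0)\xi(\xv_j,\xv_j) = O(\gs(\delta)/N)\to 0$, so the truncated integral over the off-diagonal region converges to the full truncated integral in the limit.

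We next show the limit remainder vanishes. Since $\diff\nuU = \pp\,\UU\,\diff\xv$ with $\pp\in L^2(\diff\muU)$ and $\UU\in L^\infty$, we have $|\pp|\UU \in L^2(\R^d)$ with compact support, hence also in $L^1$. Young's inequality gives $\big|\int r^\delta \xi \,\diff\overline{\nuU}\diff\nuU\big| \le \|\xi\|_\infty \|r^\delta\|_{L^1}\,\||\pp|\UU\|_2^2$. Since $\gs$ is locally integrable in $d\le 3$, $\|r^\delta\|_{L^1}\to 0$ as $\delta\to0$. This also shows that the right-hand side of \eqref{eq:lemconv} is well defined.

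The main obstacle is the uniform bound on the discrete remainder
\[
R_N^\delta := \frac{1}{N^2}\sum_{i\neq j} |\pp_i||\pp_j|\, r^\delta(\xv_i-\xv_j).
\]
By Schur's test, or simply $|\pp_i||\pp_j|\le \tfrac12(|\pp_i|^2+|\pp_j|^2)$ and symmetry, we get $R_N^\delta \le \big(\frac1N\sum_i|\pp_i|^2\big)\,\sup_i \frac1N\sum_{j\neq i} r^\delta(\xv_i-\xv_j)$. By \cref{ass:xixj}, balls of radius $\ell N^{-1/d}/2$ around the points are disjoint. Moreover, $r^\delta$ is radially decreasing and comparable on each such ball, since for $j\ne i$ all points of the ball around $\xv_j$ lie at distance at least half of $|\xv_i-\xv_j|$ from $\xv_i$. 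Hence $r^\delta(\xv_i-\xv_j)\le C\,\frac{N}{\ell^d}\int_{B_j} \tilde r^{2\delta}(\xv_i-\yv)\,\diff\yv$, where $\tilde r$ is a majorant comparable to $\gs$ on $\{|\yv|\le 2\delta\}$ (for $d=2$, with a $|\log|$ bound). Summing over $j$ gives $\frac1N\sum_{j\ne i} r^\delta(\xv_i-\xv_j) \le C\ell^{-d}\int_{|\yv|\le 2\delta}|\gs(\yv)|\,\diff\yv$. This is $O(\delta^2)$ for $d=3$ and $O(\delta^2|\log\delta|)$ for $d=2$, uniformly in $N$ and $i$. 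The lower cutoff at distance $\ell N^{-1/d}$ is what makes this Riemann-sum comparison legitimate, and handling it is the delicate step.

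Combining: $\limsup_N |{\rm LHS}_N - {\rm RHS}| \le C\|\xi\|_\infty(c\,\omega(\delta) + \|r^\delta\|_{L^1}\||\pp|\UU\|_2^2)$, where $\omega(\delta)\to0$ is the uniform bound just obtained. Letting $\delta\to0$ yields \eqref{eq:lemconv}. Lipschitz continuity of $\xi$ is more than we need here; boundedness and continuity on the compact support suffice.
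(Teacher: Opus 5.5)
Your argument is correct, but it takes a genuinely different route from the paper.

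\emph{The paper's route.} It uses the harmonicity of $\gs$ away from the origin. Applying the mean value property twice on the disjoint balls $B_{N,j}$ of radius $\frac{\ell}{2}N^{-1/d}$, it rewrites the off-diagonal sum as a continuum bilinear form $\int_{\R^d\times\R^d}\gs(\xv-\yv)\,\overline{\eta_N}(\xv)\,\eta_N(\yv)\,\diff\xv\,\diff\yv$ in the smeared densities $\eta_N\,\diff\xv=\diff(\nu_N*\omega_N)$. These converge weakly to $\nuU$ and are uniformly bounded in $L^2$; this is where \cref{ass:xixj} enters. The paper then regularizes the kernel (e.g.\ $\gs^\varepsilon(\xv)=1/(4\pi(|\xv|+\varepsilon))$ for $d=3$) and controls the error uniformly in $N$ by the Hardy--Littlewood--Sobolev inequality. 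A general $\xi$ is handled through a telescoping remainder $\Theta_N$ bounded via the Lipschitz constant.

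\emph{Your route.} You keep the point masses and cap the kernel at scale $\delta$. The near-diagonal part is then controlled by the discrete Schur-type bound $\sup_i\frac{1}{N}\sum_{j\neq i}r^\delta(\xv_i-\xv_j)\lesssim\ell^{-d}\int_{|\mathbf{z}|\le 2\delta}|\gs(\mathbf{z})|\,\diff\mathbf{z}$, a packing argument close in spirit to \cref{lem:rieszen}.

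\emph{What each buys.} Your approach is more elementary and somewhat more general:
\begin{enumerate}[(i)]
\item it does not use harmonicity, so Riesz kernels $|\xv|^{-s}$, $s<d$, would work the same way;
\item it needs only continuity of $\xi$ on $\supp\UU\times\supp\UU$, not Lipschitz continuity;
\item it accounts explicitly for the self-interaction terms.
\end{enumerate}
The paper's approach, in exchange, gives a clean continuum reformulation and an explicit $O(\varepsilon)$ regularization error.

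\emph{One small correction.} To dominate $r^\delta(\xv_i-\xv_j)$ by its average over $B_j$, you need the \emph{upper} bound $|\xv_i-\yv|\le\frac{3}{2}|\xv_i-\xv_j|$ for $\yv\in B_j$, since $\gs$ is decreasing in $|\xv|$. The lower bound you quote is the wrong direction. Both bounds follow from $\frac{\ell}{2}N^{-1/d}\le\frac{1}{2}|\xv_i-\xv_j|$, so the estimate stands.
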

	
	\begin{proof}
		For the sake of clarity, we first prove the result with $\xi = 1$. We will comment on the minor adjustments needed to handle a generic Lipschitz continuous function $\xi$ at the end of this proof.
		
		For each $j \in \{1,...\,,N\}$, let $B_{N,j}$ be the open ball of center $\yv_j$ and radius $r_N = \frac{\ell}{2}\,N^{-1/d}$ (recall \eqref{eq:infxixj}). Due to \cref{ass:xixj}, for any $i \neq j$, we have $B_{N,i} \cap B_{N,j} = \varnothing$ and the function $y \mapsto \gs(\yv_i - \yv)$ is harmonic on $B_{N,j}$. Applying the mean value property twice, we obtain
			\begin{equation*}
				\gs(\yv_i - \yv_j) = \frac{1}{|B_{N,i}|\,|B_{N,j}|}\int_{B_{N,i} \times B_{N,j}}\! \gs(\xv- \yv)\,\diff \xv\, \diff \yv \,, \qquad \mbox{for all } i \neq j\,.
			\end{equation*} 
		Using the explicit expression \eqref{eq:rhoN} for $\nu_N$ and the above identity, we get
			\bml{\label{eq:proofDelta}
				\int_{(\R^d\times\R^d) \setminus \Delta} \gs(\xv-\yv)\,\diff \overline{\nu_N}(\xv)\,\diff\nu_N(\yv)
				= \frac{1}{N^2}\sum_{\substack{i,j=1,\\ i\neq j}}^N\,\gs(\xv_i - \xv_j)\,\overline{\pp_i}\,\pp_j \\
			 = \frac{1}{N^2}\sum_{\substack{i,j=1,\\ i\neq j}}^N\,\frac{\overline{\pp_i}\,\pp_j}{|B_{N,i}|\,|B_{N,j}|} \int_{B_{N,i} \times B_{N,j}}\! \gs(\xv-\yv)\,\diff \xv\, \diff \yv
				= \int_{(\R^d \times \R^d)\setminus \Delta} \gs(\xv-\yv) \,\overline{\eta_N}(\xv)\,\eta_N(\yv)\,\diff \xv\, \diff \yv, 
			}
		where
			\begin{equation*}
				\eta_N := \frac{1}{N} \sum_{j = 1}^N\, \pp_j\, \frac{\one_{B_{N,j}}}{|B_{N,j}|}\,.
			\end{equation*}
		It can be checked by a straightforward computation that $\eta_N\, \diff x = \diff (\nu_N * \omega_N)$, where $\omega_N$ is the uniform measure on the ball with center at the origin and radius $r_N$. By assumption and direct inspection,
		\bdm
			\nu_N \xrightarrow[N \to +\infty]{\mathrm{w}} \nuU\,,	\qquad	\omega_N \xrightarrow[N \to +\infty]{\mathrm{w}} \delta_{\bm{0}}\,. 
		\edm
		Then, since the convolution of two weakly convergent sequences of measures also weakly converges to the convolution of the limit measures \cite[Ex. 4.8.49]{Bo18}, we get that $\nu_N * \omega_N \xrightarrow{\mathrm{w}} \nuU * \delta_0 = \nuU$, or, equivalently,
			\begin{equation}
				\label{eq: measures conv}
				\lim_{N \to \infty} \int_{(\R^d\times\R^d) \setminus \Delta}\! f(\xv,\yv)\,\diff \overline{\nu_N}(\xv)\, \diff \nu_N(\yv)
				= \int_{\R^d\times\R^d} f(\xv,\yv)\,\diff\overline{\nuU}(\xv)\,\diff\nuU(\yv)\,, \qquad
				\forall f \in C_b(\R^d \!\times \!\R^d) \,.
			\end{equation}
			
		Next, we introduce a continuous and bounded regularization of the singular function $ \gs $. Let us first discuss the three-dimensional case. Setting, for $ \varepsilon > 0 $,
			\begin{equation*}
				\gs^\varepsilon(\xv) := \frac{1}{4\pi (|\xv| + \varepsilon)}\,,
			\end{equation*}
			we have
			\bml{\label{eq:proof1}
				\int_{(\R^3\times\R^3) \setminus \Delta}  \gs(\xv-\yv)\, \diff \overline{\nu_N}(\xv)\, \diff \nu_N(\yv)
				= \int_{(\R^3\times\R^3) \setminus \Delta}  \gs^{\varepsilon}(\xv-\yv)\, \diff \overline{\nu_N}(\xv)\, \diff \nu_N(\yv)
					\\
				+ \int_{(\R^3\times\R^3) \setminus \Delta}  \lf[\gs^{\varepsilon}(\xv-\yv)- \gs(\xv-\yv) \ri] \diff \overline{\nu_N}(\xv)\, \diff \nu_N(\yv)\,. 
			}
		However, since $\gs^{\varepsilon}(\xv - \yv) \in C_b(\R^3 \times \R^3)$, by \eqref{eq: measures conv},
			\begin{equation}
				\lim_{N \to \infty} \int_{\R^3\times\R^3 \setminus \Delta} \gs^{\varepsilon}(\xv-\yv)\, \diff \overline{\nu_N}(\xv)\, \diff \nu_N(\yv)
				= \int_{\R^3\times\R^3} \gs^{\varepsilon}(\xv-\yv)\, \diff \overline{\nuU}(\xv)\, \diff \nuU(\yv) \,.
				\label{eq:proof2}
			\end{equation}
			On the other hand, the sequence $ \lf\{\eta_N \ri\}_{N \in \N}$ is uniformly bounded in $L^2(\R^3)$: recalling \eqref{eq:assaj}, 
			\begin{align*}
				\lf\|\eta_N \ri\|_2^2
				= \frac{1}{N^2} \sum_{j = 1}^N \frac{|\pp_j|^2}{|B_{N,j}|}
				= \frac{6}{\pi \ell^3}\; \frac{1}{N} \sum_{j = 1}^N |\pp_j|^2
				\leqslant \frac{6c}{\pi \ell^3}\,.
			\end{align*}
		Furthermore, by \cref{ass:UU}, $ \supp(\eta_N) $ is contained inside a fixed bounded domain independent of $ N $, which implies that the sequence $ \lf\{\eta_N \ri\}_{N \in \N}$ is uniformly bounded in $L^p(\R^3)$, for any $1 \leqslant p \leqslant 2$. Then, by using the Hardy-Littlewood-Sobolev inequality \cite[Thm. 4.3]{LL01},
			\bmln{
				\lf| \int_{(\R^3\times\R^3) \setminus \Delta}\!\!  \lf[\gs^{\varepsilon}(\xv-\yv) - \gs(\xv-\yv) \ri]\, \diff \overline{\nu_N}(\xv)\, \diff\nu_N(\yv) \ri| 
				\leqslant \frac{\varepsilon}{4\pi}\! \int_{\R^3\times\R^3} \frac{1}{|\xv- \yv|^2}\,|\eta_N(\xv)|\,|\eta_N(\yv)|\,\diff \xv\, \diff \yv	\\
				\leqslant C \varepsilon\,\|\eta_N\|_{L^{2/3}(\R^3)}^2 \leqslant C' \varepsilon \xrightarrow[\eps \to 0^+]{} 0\,,
			}
		where $C, C'> 0$ are suitable constants independent of $N$ and $\varepsilon$.
		It can be shown in a similar fashion that
			\begin{equation}
				\lim_{\varepsilon \to 0^+} \int_{(\R^3\times\R^3) \setminus \Delta} \lf[\gs^{\varepsilon}(\xv-\yv) - \gs(\xv-\yv) \ri]\,d\overline{\nuU}(\xv)\,d\nuU(\yv) = 0\,. \label{eq:proof4}
			\end{equation}	
		Combining \eqref{eq:proof1}-\eqref{eq:proof4} ultimately proves the thesis \eqref{eq:lemconv} for $\xi = 1$ and $d = 3$.
			
		The analogous result with $\xi = 1$ and $d = 2$ can be derived retracing the same arguments, considering the regularized version of $g$ defined as
			\begin{equation*}
				\gs^{\varepsilon}(\xv) := -\tfrac{1}{2\pi} \log(|\xv|+\varepsilon) \qquad (\varepsilon > 0)\,.
			\end{equation*}
		Strictly speaking, such a function is continuous but not bounded, as it diverges at infinity. However, this poses no real issue, since all the measures involved have supports contained within a fixed compact set.

		We conclude by outlining the modifications required for a generic Lipschitz continuous function $\xi$. In this case, a straightforward telescopic argument shows that the analogue of \eqref{eq:proofDelta} becomes
			\bdm
				\int_{(\R^d\times\R^d) \setminus \Delta}\!\! \gs(\xv-\yv) \,\xi(\xv,\yv)\,\diff \overline{\nu_N}(\xv)\, \diff\nu_N(\yv)
				 = \int_{(\R^d \times \R^d)\setminus \Delta}\!\! \gs(\xv-\yv) \,\xi(\xv,\yv)\,\overline{\eta_N}(\xv)\,\eta_N(\yv)\,\diff \xv\, \diff \yv + \Theta_N \,, 
			\edm
		where
			\begin{equation*}
				\Theta_N := \frac{1}{N^2}\sum_{\substack{i,j=1,\\ i\neq j}}^N\,\frac{\overline{\pp_i}\,\pp_j}{|B_{N,i}|\,|B_{N,j}|} \int_{B_{N,i} \times B_{N,j}} \gs(\xv-\yv)\,[\xi(\xv_i,\xv_j) - \xi(\xv,\yv)]\,\diff \xv\, \diff \yv\,.
			\end{equation*}
		By arguments similar to those outlined in the first part of the proof, we readily obtain
			\begin{equation*}
				\lim_{N \to \infty}\int_{(\R^d \times \R^d)\setminus \Delta}\!\! \gs(\xv-\yv) \,\xi(\xv,\yv)\,\overline{\eta_N}(\xv)\,\eta_N(\yv)\,\diff \xv\, \diff \yv
				= \int_{\R^3\times\R^3} \gs(\xv-\yv) \,\xi(\xv,\yv)\,\diff\overline{\nuU(\xv)}\,\diff\nuU(\yv)\,.
			\end{equation*}
		Concerning the reminder $\Theta_N$, using the Lipschitz continuity of $\xi$ and the fact that $|\xv_i-\xv|,|\xv_j - \yv| \leqslant \frac{\ell}{2} N^{-1/d}$ for $\xv \in B_{N,i}$ and $\yv \in B_{N,j}$, we have
			\bdm
				|\Theta_N| 
				\leqslant \frac{\ell}{\sqrt{2}\,N^{1/d}}\,\frac{\|\xi\|_{C^{0,1}}}{N^2}\sum_{\substack{i,j=1,\\ i\neq j}}^N\,\frac{|\pp_i|\,|\pp_j|}{|B_{N,i}|\,|B_{N,j}|} \int_{B_{N,i} \times B_{N,j}} \gs(\xv-\yv)\,\diff \xv\, \diff \yv
				\leqslant \frac{C}{N^{1/d}}\,,
			\edm
		for some constant $C > 0$, which shows that $\Theta_N \to 0$ as $N \to \infty$ and thus concludes the proof.
	\end{proof}

Let us now pass to discuss the positivity of the matrix $\GaN$. To this purpose, let us first report a result proven in \cite[\S 2.4, pp. 25-27 and Remark 2.19]{Se15}. 

	\begin{lemma}\label{lem:rieszen}
		Let \cref{ass:UU,ass:xixj} hold. Then, for any fixed $s \in (0,d)$, there exists a constant $c_s > 0$ such that
		\begin{equation}\label{eq:riesz}
			\limsup_{N\to \infty}\, \frac{1}{N^2}\sum_{\substack{i,j=1\\i\neq j}}^N \frac{1}{|\xv_i - \xv_j|^s}
			\leqslant \int_{\R^d \times \R^d} \frac{1}{|\xv-\yv|^s}\; \diff \muU(\xv)\, \diff \muU(\yv)
			\leqslant c_s\,.
		\end{equation}
	\end{lemma}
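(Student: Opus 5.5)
The plan is to prove the two inequalities in \eqref{eq:riesz} separately. The second one is elementary; the first is a discrete-to-continuum lower semicontinuity statement in the spirit of Lemma~\ref{lem:consingm}.

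\emph{Second inequality.} Since $\UU \in L^\infty$ has compact support, we estimate
\[
\int\!\!\int |\xv-\yv|^{-s}\,\UU(\xv)\,\UU(\yv)\,\diff\xv\,\diff\yv \leqslant \|\UU\|_{L^1}\,\|\UU\|_{L^\infty}\,\sup_{\xv}\int_{B_R(\xv)} |\zv|^{-s}\,\diff \zv ,
\]
where $R$ is the diameter of $\supp\UU$. The right-hand side is finite because $s<d$.

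\emph{First inequality.} We first rewrite the sum using balls. Let $B_{N,j}$ be the disjoint balls of radius $r_N=\frac{\ell}{2}N^{-1/d}$ from the proof of Lemma~\ref{lem:consingm}, and set $\eta_N := \frac1N\sum_j \one_{B_{N,j}}/|B_{N,j}|$. For $s \leqslant d-2$, the function $|\cdot|^{-s}$ is superharmonic on $\R^d\setminus\{0\}$. The mean value inequality applied twice then gives
\[
|\xv_i-\xv_j|^{-s} \geqslant \frac{1}{|B_{N,i}|\,|B_{N,j}|}\int_{B_{N,i}\times B_{N,j}} |\xv-\yv|^{-s}\,\diff\xv\,\diff\yv .
\]
This inequality goes the wrong direction for an upper bound. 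So instead of superharmonicity we use the separation directly. For $\xv\in B_{N,i}$, $\yv\in B_{N,j}$ with $i\ne j$, we have $|\xv-\yv| \leqslant |\xv_i-\xv_j| + \ell N^{-1/d} \leqslant 2|\xv_i-\xv_j|$. Since $|\xv-\yv|\geqslant |\xv_i-\xv_j|-\ell N^{-1/d}$, we also get the comparison $|\xv_i-\xv_j|^{-s}\leqslant (1+\delta)^s|\xv-\yv|^{-s}$ whenever $|\xv_i-\xv_j| \geqslant \ell N^{-1/d}/\delta$. Truncation handles the remaining pairs.

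\emph{Splitting into near and far pairs.} Fix $\varepsilon>0$ and decompose $|\zv|^{-s} = \min(|\zv|^{-s},\varepsilon^{-s}) + (|\zv|^{-s}-\varepsilon^{-s})_+$.
\begin{enumerate}[(a)]
\item The bounded continuous part converges. By weak convergence of $\mu_N\otimes\mu_N$ to $\muU\otimes\muU$ (the diagonal terms contribute $\varepsilon^{-s}/N\to0$), its contribution tends to $\int\!\!\int \min(|\xv-\yv|^{-s},\varepsilon^{-s})\,\diff\muU\,\diff\muU$, which is at most the full integral.
\item The near-pair part, with $|\xv_i-\xv_j|<\varepsilon$, needs a bound uniform in $N$ that vanishes as $\varepsilon\to0$. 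The packing bound from Assumption~\ref{ass:xixj} controls it: for fixed $i$, the number of $j$ with $2^{-k-1}\varepsilon\leqslant |\xv_i-\xv_j|<2^{-k}\varepsilon$ is at most $C N (2^{-k}\varepsilon)^d/\ell^d$ by a volume count of the disjoint balls. Summing over $k\geqslant 0$ gives
\[
\frac1{N^2}\sum_{0<|\xv_i-\xv_j|<\varepsilon} |\xv_i-\xv_j|^{-s} \leqslant \frac{C}{N}\sum_i\sum_k \frac{(2^{-k}\varepsilon)^{d-s}}{\ell^d} \leqslant C_{\ell,d,s}\,\varepsilon^{d-s},
\]
and the geometric series converges because $s<d$.
\end{enumerate}

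Taking $\limsup_N$ and then $\varepsilon\to0$ yields the first inequality of \eqref{eq:riesz}. Strictly, this argument gives the limsup with equality to the limit integral.

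The main obstacle is step (b): the uniform-in-$N$ control of near pairs. It is exactly where the minimal distance assumption enters, through the counting of separated points in dyadic shells. This is the argument of \cite{Se15}, and the rest is routine weak convergence.
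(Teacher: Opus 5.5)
Your proof is correct. The truncated kernel is handled by weak convergence of $\mu_N\otimes\mu_N$, where the diagonal costs only $\varepsilon^{-s}/N$. The near pairs are handled by the dyadic packing bound from \cref{ass:xixj}; it is worth stating that shells with $2^{-k}\varepsilon\leqslant \ell N^{-1/d}$ are empty, since that is what makes the count $\lesssim N(2^{-k}\varepsilon)^d/\ell^d$ valid. The second inequality follows from $\UU\in L^\infty$ having compact support. The superharmonicity/$(1+\delta)$ digression is never used and can be deleted; its justification also cites the lower triangle bound where the upper bound $|\xv-\yv|\leqslant|\xv_i-\xv_j|+\ell N^{-1/d}$ is needed.

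The paper gives no proof of its own here, only a pointer to the $\Gamma$-convergence estimates in \cite{Se15}. Your truncation-plus-separation argument is a self-contained version of that standard reasoning, and together with the trivial $\liminf$ bound it even shows that the limit exists and equals $\iint |\xv-\yv|^{-s}\,\diff\muU(\xv)\,\diff\muU(\yv)$.
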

	
The arguments presented therein actually establish that the inequalities in \eqref{eq:riesz} remain valid when the function $1/|\xv|^{s}$ is replaced by any $g \in L^1_{\mathrm{loc}}(\R^d) \cap C^0(\R^d \setminus \{\mathbf{0}\})$ which is monotone radial and positive near the origin.

	\begin{lemma}\label{lem:posGamma}
		Let \cref{ass:UU,ass:xixj,ass:a} hold. Then, for any fixed $\lambda > 0$ large enough, there exist $N_\lambda > 0$ and a constant $\gamma_\lambda > 0$, independent of $N$, such that
		\begin{equation}\label{eq:Gammainf}
	    		\tfrac{1}{N}\,\GaN \geqslant \gamma_\lambda\,\one\,, \qquad \mbox{for all } N \geqslant N_\lambda\,.
	    	\end{equation} 
	\end{lemma}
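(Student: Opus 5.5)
We split $\tfrac1N\GaN$ into its diagonal and off-diagonal parts and show that, for $\lambda$ large, the off-diagonal part is small compared with the diagonal one. By \cref{ass:a}, the diagonal entries are
\[
\tfrac1N[\GaN]_{jj}=a(\xv_j)+\tfrac1N\,c_\lambda(\xv_j),
\qquad c_\lambda(\xv):=\lim_{\yv\to\xv}\bigl(G_0^{\lambda_0}(\yv,\xv)-\G0(\yv,\xv)\bigr).
\]
The function $c_\lambda$ is continuous (see \cref{rem:wellQNHN}), hence bounded on the compact set $\supp\UU$. Since $a$ is continuous and positive, it satisfies $a\geq a_0>0$ on $\supp\UU$. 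Therefore, for $N\geq N_\lambda$, every diagonal entry is at least $a_0/2$.

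It then suffices to show that the off-diagonal matrix $M_{ij}:=-\tfrac1N\G0(\xv_i,\xv_j)(1-\delta_{ij})$ satisfies $\|M\|\leq a_0/4$ for $\lambda$ large and $N$ large. This gives \eqref{eq:Gammainf} with $\gamma_\lambda=a_0/4$.

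To bound $\|M\|$ we use the Schur test. By \cref{lemma:G0Gfineq}, $|M_{ij}|\leq \tfrac1N\Gf0(\xv_i-\xv_j)$, so it is enough to control $\sup_i \tfrac1N\sum_{j\neq i}\Gf0(\xv_i-\xv_j)$. For this we bound the sum by an integral using \cref{ass:xixj}. As in the proof of \cref{lem:consingm}, place disjoint balls $B_{N,j}$ of radius $r_N=\tfrac\ell2N^{-1/d}$ around the points, so that $|B_{N,j}|\simeq N^{-1}$. The function $\Gf0$ is radially decreasing, and $|\xv_i-\yv|\leq 2|\xv_i-\xv_j|$ for $\yv\in B_{N,j}$, $j\neq i$. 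Hence
\[
\Gf0(\xv_i-\xv_j)\leq \frac{C}{|B_{N,j}|}\int_{B_{N,j}}\Gf0\bigl((\xv_i-\yv)/2\bigr)\,\diff\yv .
\]
In $d=3$ this is immediate from $\Gf0(\xv/2)\lesssim e^{-\lambda|\xv|/2}|\xv|^{-1}$. In $d=2$ we use the monotonicity of $K_0$, together with $K_0(t/2)\lesssim K_0(t)+1$ near $0$ and exponential decay elsewhere. Summing over $j\neq i$, we obtain
\[
\frac1N\sum_{j\neq i}\Gf0(\xv_i-\xv_j)\;\leq\; C\int_{\R^d}\Gf0(\yv/2)\,\diff\yv\;=\;C'\,\|\Gf0\|_{L^1} = \frac{C'}{\lambda^2},
\]
uniformly in $i$ and $N$, since $\int\Gf0=\lambda^{-2}$. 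Choosing $\lambda$ so large that $C'\lambda^{-2}\leq a_0/4$ finishes the bound, because $M$ is Hermitian and the Schur test then gives $\|M\|\leq \sup_i\sum_j|M_{ij}|$.

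The main obstacle is making the sum-to-integral comparison uniform in $i$ and $N$, including the $d=2$ logarithmic case. The integral is over all of $\R^d$, so boundedness of the support plays no role there, and what is needed is uniformity near $\xv_i$. This is exactly where the minimal-distance \cref{ass:xixj} enters: it guarantees that nearest neighbours contribute only through integrable singularities. One must also note that $N_\lambda$ depends on $\lambda$ through $c_\lambda$. The order of quantifiers is therefore to fix $\lambda$ first, using the $\lambda^{-2}$ bound, and then take $N$ large, which matches the statement.
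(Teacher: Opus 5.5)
Your proof is correct, but you handle the off-diagonal part by a different route from the paper.

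The paper writes $\Xi_N^\lambda=N(A_N^\lambda+B_N^\lambda)$ and bounds $\|B_N^\lambda\|$ by the Hilbert--Schmidt norm, i.e.\ by $\bigl(\frac{1}{N^2}\sum_{i\neq j}g_0^\lambda(\xv_i-\xv_j)^2\bigr)^{1/2}$. It then trades the exponential decay of $g_0^\lambda$ for a small power of $\lambda$ via $e^{-t}\leq t^{-\beta}$. The resulting discrete Riesz energy is controlled with the imported bound \cref{lem:rieszen}, using $s=2+\beta$ for $d=3$ and $s=4\beta$ for $d=2$. This gives $\|B_N^\lambda\|\leq C\lambda^{-\beta/2}$ (resp.\ $C\lambda^{-2\beta}$), and only for large $N$, because of the $\limsup$ in that lemma.

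You instead use the Schur row-sum test together with a direct sum-to-integral comparison on the disjoint balls of radius $\frac{\ell}{2}N^{-1/d}$. This buys several things:
\begin{enumerate}[(i)]
\item The argument is self-contained, with no appeal to discrete-energy asymptotics.
\item The off-diagonal estimate $\|M\|\leq C_{d,\ell}\,\lambda^{-2}$ holds for every $N$.
\item It uses only \cref{ass:xixj} and \cref{lemma:G0Gfineq}, not the weak convergence of $\mu_N$ or the boundedness of the support.
\item It needs only $g_0^\lambda\in L^1(\R^d)$ rather than local square integrability of $g_0^\lambda$.
\item It has a better decay rate in $\lambda$.
\end{enumerate}
The paper's version is shorter only because the Riesz-energy bound is quoted. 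In both approaches $N\geq N_\lambda$ is still needed for the $\mathcal{O}(1/N)$ diagonal correction, exactly as you note.

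One simplification of your write-up: the detour through $K_0(t/2)\lesssim K_0(t)+1$ and $e^{-\lambda|\xv|/2}|\xv|^{-1}$ is unnecessary. Radial monotonicity of $g_0^\lambda$ together with $|\xv_i-\yv|\leq 2|\xv_i-\xv_j|$ already gives your averaged inequality with $C=1$. Scaling then gives $\int_{\R^d}g_0^\lambda(\yv/2)\,\diff\yv=2^d\|g_0^\lambda\|_{L^1}=2^d\lambda^{-2}$.
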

    
    \begin{proof}
		Let us recall the definition \eqref{eq:GaN} of $\GaN$ and set $ \GaN = N(A_N^{\lambda} + B_N^{\lambda}) $, where
			\beq
				\big[A_N^{\lambda}\big]_{ij} := \frac{1}{N}\! \left(\alpha_j + \displaystyle{\lim_{\xv \to \xv_j}}\! \lf( G_0^{\lambda_0}(\xv,\xv_j) - \G0(\xv,\xv_j) \ri)\!\right) \delta_{ij}\,, \qquad
					\big[B_N^{\lambda}\big]_{ij} := -\frac{1}{N}\,\G0(\xv_i,\xv_j)\,(1-\delta_{ij})\,.
	    		\eeq
		Taking into account \cref{ass:a} and the fact that the difference $G_0^{\lambda_0}(\xv,\yv) - \G0(\xv,\yv)$ is jointly continuous on $\R^d \times \R^d$ for all $\lambda,\lambda_0 > 0$ (see \cref{rem:wellQNHN}), it is easy to see that
		\begin{equation*}
			A_N^{\lambda} \geqslant (\min a)\,\one + o(1)\,, \qquad \mbox{as $N \to \infty$}\,.
		\end{equation*}
		Then, the thesis follows as soon as we can establish that the operator norm of $B_N^{\lambda}$ can be made arbitrarily small, uniformly with respect to $N$, by picking $\lambda$ sufficiently large. To this purpose, it is convenient to consider the Hilbert-Schmidt norm, noting that
		\begin{equation*}
			\lf\|B_N^{\lambda}\ri\|^2 \leqslant \lf\|B_N^{\lambda}\ri\|_{\mathrm{HS}}^2 = \frac{1}{N^2} \sum_{\substack{i,j=1\\i\neq j}} \big| \G0(\xv_i,\xv_j) \big|^2 \leqslant \frac{1}{N^2} \sum_{\substack{i,j=1\\i\neq j}} \lf| \Gf0(\xv_i-\xv_j) \ri|^2 ,
		\end{equation*}
		where the last inequality follows from \cref{lemma:G0Gfineq}. Let us also recall the explicit expression \eqref{eq:freegreen} for the free Green function $\Gf0$ and proceed to discuss separately the cases $d = 3$ and $d = 2$.
		
		For $d = 3$, using the elementary inequality $e^{-t} \leqslant t^{-\beta}$, for $t > 0$ and $\beta \in (0,1)$, together with \cref{lem:rieszen} for $s = 2 + \beta$, we obtain
		\begin{equation*}
			\big\|B_N^{\lambda}\big\|^2 
			\leqslant \frac{1}{N^2} \sum_{\substack{i,j=1\\i\neq j}} \left|\frac{e^{- \lambda|\xv_i-\xv_j|}}{4\pi |\xv_i-\xv_j|}\right|^2
			\leqslant \frac{(2\lambda)^{-\beta}}{16\pi^2 N^2} \sum_{\substack{i,j=1\\i\neq j}} \frac{1}{|\xv_i-\xv_j|^{2+ \beta}}
			\leqslant C \lambda^{-\beta} .
		\end{equation*}
		
		For $d = 2$, using a well-known integral representation of the Bessel function $K_0$ \cite[Eq. 10.32.10]{OLBC10}, alongside with the inequality $e^{-t} \leqslant t^{-\beta}$ and \cref{lem:rieszen}, now with $s = 4\beta$ and $\beta \in (0,1/2)$, we get
		\bmln{
			\big\|B_N^{\lambda}\big\|^2 
			\leqslant \frac{1}{16 \pi^2 N^2} \sum_{\substack{i,j=1\\i\neq j}} \left| K_0\big(\lambda|\xv_i-\xv_j|\big)\right|^2
			= \frac{1}{16\pi^2 N^2} \sum_{\substack{i,j=1\\i\neq j}} \left|\int_0^{\infty} \exp\left({-t - \tfrac{\lambda^2|\xv_i-\xv_j|^2}{4t}}\right) \frac{\diff t}{t} \right|^2 \\
			 = \frac{\lambda^{-4\beta}}{2^{4(1-\beta)} \pi^2} \left( \int_0^{\infty} t^{\beta - 1} e^{-t} dt \right)^2 \frac{1}{N^2} \sum_{\substack{i,j=1\\i\neq j}} \frac{1}{|\xv_i-\xv_j|^{4\beta}}
			\leqslant C\, \lambda^{-4\beta} .
		}
		
		Choosing $\lambda > 0$ large enough, the above arguments imply \eqref{eq:Gammainf} for both $d = 3$ and $d = 2$.
    \end{proof}

We are now ready to prove the limit inferior inequality, namely, claim (i) in \cref{thm:main}.
	
	\begin{proposition}[$ \Gamma-\liminf $] \label{prop:liminf}
		\mbox{}	\\
		Let \cref{ass:UU,ass:xixj,ass:a} hold and let $\lambda > 0$ be fixed large enough. Then, for any sequence $ \lf\{\psi_N \ri\}_{N \in \N} \subset L^2(\R^d)$ such that  $ \psi_N \xrightarrow[N \to +\infty]{\mathrm{w}} \psi_\infty$, there holds
		\begin{equation}\label{eq:liminfQN}
			Q^\lambda_\infty[\psi_\infty] \leqslant \liminf_{N \,\to\, \infty} \QN^{\lambda}[\psi_N]\,.
		\end{equation}
	\end{proposition}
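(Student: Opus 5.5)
We may assume $\liminf_N \QN^\lambda[\psi_N]<\infty$, and pass to a subsequence along which the $\liminf$ is a limit and $\QN^\lambda[\psi_N]\leqslant C$; in particular $\psi_N\in\dom[\QN]$. Write
\[
\psi_N=\phi_{N,\lambda}+\tfrac1N\tx\sum_j p_j^{(N)}\G0(\cdot,\xv_j),
\]
so that
\[
\QN^\lambda[\psi_N]=\Q0[\phi_{N,\lambda}]+\lambda^2\|\phi_{N,\lambda}\|_2^2+\tfrac1{N^2}\langle p^{(N)},\GaN p^{(N)}\rangle_{\C^N}.
\]
By \cref{lem:posGamma}, for $\lambda$ large and $N\geqslant N_\lambda$, all three terms are nonnegative and $\tfrac1N\sum_j|p_j^{(N)}|^2\leqslant C/\gamma_\lambda$. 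This is the equi-coerciveness step. It gives two consequences. First, $\phi_{N,\lambda}$ is bounded in the form space $\HH^1$, so up to a subsequence $\phi_{N,\lambda}\rightharpoonup\phi_{\infty,\lambda}$ weakly in $\HH^1$. Second, by \cref{lem:compmes}, $\nu_N=\tfrac1N\sum_j p_j\delta_{\xv_j}\xrightarrow{\mathrm w}\nuU=p\,\muU$ with $p\in L^2(\diff\muU)$.

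Next we identify the weak limit of $\psi_N$. For a test function $f$, we have $\langle f,\tfrac1N\sum_j p_j\G0(\cdot,\xv_j)\rangle=\int (\Rz f)^*\,\diff\nu_N$ (up to conjugation). By \cref{lem:Rzext,lem:HHemb}, $\Rz f$ is continuous and bounded, so this converges to $\langle f,\Rz(\UU p)\rangle$. Hence
\[
\psi_\infty=\phi_{\infty,\lambda}+\Rz(\UU p).
\]
Since $\UU p\in L^2$, we have $\Rz(\UU p)\in\dom(\H0)$, and therefore $\psi_\infty\in\dom[\Q0]$.

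For the lower bound, weak lower semicontinuity gives
\[
\Q0[\phi_{\infty,\lambda}]+\lambda^2\|\phi_{\infty,\lambda}\|^2\leqslant\liminf\bigl(\Q0[\phi_{N,\lambda}]+\lambda^2\|\phi_{N,\lambda}\|^2\bigr).
\]
For the charge term we split $\tfrac1{N^2}\langle p,\GaN p\rangle$ into a diagonal and an off-diagonal part. The diagonal part is
\[
\tfrac1N\sum_j a(\xv_j)|p_j|^2+O\Bigl(\tfrac1{N^2}\sum_j|p_j|^2\Bigr).
\]
The error term vanishes. The main term has $\liminf\geqslant\int a|p|^2\diff\muU$, by lower semicontinuity of $\nu\mapsto\int a|\diff\nu/\diff\mu|^2\diff\mu$. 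This follows from the duality bound $|\int f\,\diff\nu_N|^2\leqslant(\tfrac1N\sum a|p_j|^2)(\int |f|^2/a\,\diff\mu_N)$, passed to the limit and then optimized over $f$, exactly as in \cref{lem:compmes}; here we use that $a$ is continuous and positive. The off-diagonal part is $-\int_{\Delta^c}\G0(\xv,\yv)\,\diff\overline{\nu_N}\,\diff\nu_N$. Using \cref{lemma: G regularity}, write $\G0=\gs(\xv-\yv)e^{i\teA}+\hh$. The term with $\hh$ is continuous, so weak convergence of the product measures (as in \eqref{eq: measures conv}) gives its limit directly. The singular term is handled by \cref{lem:consingm} with $\xi=e^{i\teA}$, which is Lipschitz on the compact support of $\UU$. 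Together these give the limit $-\langle \UU p,\Rz\UU p\rangle$. In $d=2$ one should first subtract a constant to make $\gs$ positive near the origin if needed; that is harmless.

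We obtain
\[
\liminf\QN^\lambda[\psi_N]\geqslant \Q0[\phi_{\infty,\lambda}]+\lambda^2\|\phi_{\infty,\lambda}\|^2+\langle p,a\UU p\rangle-\langle\UU p,\Rz\UU p\rangle.
\]
The last step is the algebra of \eqref{eq:Qinfeuri}, now without the heuristic identity $\psi_\infty=ap$. Expanding $Q_\infty^\lambda[\psi_\infty]$ with $\psi_\infty=\phi_{\infty,\lambda}+\Rz\UU p$, the difference between the right-hand side above and $Q^\lambda_\infty[\psi_\infty]$ equals
\[
\int \UU\,\bigl|\sqrt a\,p-\psi_\infty/\sqrt a\bigr|^2\geqslant0
\]
(completing the square). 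This proves \eqref{eq:liminfQN}. Finally, because the bound holds along every subsequence, it holds for the full sequence.

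The main obstacle is the off-diagonal limit. \cref{lem:consingm} requires exactly the $L^2$ charge bound coming from \cref{lem:posGamma}, and it requires controlling the magnetic phase and the continuous remainder uniformly. The other step needing care is the lower semicontinuity of the diagonal term, which supplies the inequality rather than an equality.
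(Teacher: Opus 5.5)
Your proposal is correct and follows essentially the same route as the paper: equi-coercivity from \cref{lem:posGamma}, compactness of the charges via \cref{lem:compmes}, identification $\psi_\infty=\phi_{\infty,\lambda}+\Rz\,\UU p$ by duality, weak lower semicontinuity for the regular part, \cref{lem:consingm} combined with \cref{lemma: G regularity} for the off-diagonal sum, a duality-based lower bound for the diagonal sum, and completing the square. One small difference is that you treat the bounded diagonal correction as an explicit $\mathcal{O}(1/N)$ error. That is slightly cleaner than the paper's use of $\alpha_j \leqslant [\GaN]_{jj}$, which implicitly requires a sign on that correction.
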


	\begin{proof}
		The statement is trivial if the limit inferior on the right-hand side of \eqref{eq:liminfQN} is infinite. Hence, we  assume that $ \psi_N \in \dom[\QN] $. In fact, possibly passing to a subsequence, we can also assume that
        \begin{equation}
        	\label{eq: finite energy}
            \liminf_{N\,\to\,\infty}\QN^{\lambda}[\psi_N] 
            = \lim_{N\,\to\,\infty}\QN^{\lambda}[\psi_N]  \leqslant c\,,
        \end{equation}
        for some finite constant $c > 0$ independent of $ N $. To avoid a too heavy notation however we still label the subsequence by $ N \in \N $.

        In view of \eqref{eq:QNdef}, \eqref{eq:lambdaNform} and \cref{lem:posGamma}, it appears that each term in $\QN^{\lambda}[\psi_N]$ is positive definite for any sufficiently large $N$. Then, the uniform boundedness of $\QN^{\lambda}[\psi_N]$ yields
		\begin{equation}
			\Q0[\phi_{N,\lambda}] + \lambda^2\, \|\phi_{N,\lambda}\|_{2}^2 \leqslant c \,, \qquad
			\tfrac{1}{N^2}\tx\sum_{i,j=1}^N  \overline{\pp_i}\, \big[\GaN\big]_{ij}\, \pp_j \leqslant c\,. \label{eq:unibound}
		\end{equation}
		Since $\Q0 + \lambda^2\|\,\cdot\,\|_2^2$ is a squared norm on $\dom[\Q0]$, the first condition in \eqref{eq:unibound} ensures that, up to extraction of a subsequence, $\{\phi_{N,\lambda}\}_{N \in \N}$ converges weakly in $\dom[\Q0]$ to  $\phi_{\infty,\lambda}$. In other words, by a diagonal argument, we have that the following weak convergences in $L^2(\R^d)$ hold simultaneously:
		\begin{equation*}
			\phi_{N,\lambda} \xrightarrow[N \to + \infty]{\mathrm{w}} \phi_{\infty,\lambda}\,, \qquad
			(-i\nabla + \AA)\phi_{N,\lambda}\xrightarrow[N \to + \infty]{\mathrm{w}} (-i\nabla + \AA)\phi_{\infty,\lambda}\,, \qquad
			\sqrt{\VV}\phi_{N,\lambda} \xrightarrow[N \to + \infty]{\mathrm{w}}\sqrt{\VV}\phi_{\infty,\lambda}\,.
		\end{equation*}
			
		On the other hand, \cref{lem:posGamma} together with the second condition in \eqref{eq:unibound} imply that
		\begin{equation*}
			\tfrac{1}{N} \tx\sum_{j=1}^N |\pp_j|^2 \leqslant \tfrac{c}{\gamma_\lambda} \,.
		\end{equation*}
		We can now apply \cref{lem:compmes}, to deduce that the sequence of complex measures
		\begin{equation*}
			\nu_N = \frac{1}{N} \sum_{j=1}^N  \pp_j \, \delta_{\xv_j} \,,
		\end{equation*}
		admits a weakly convergent subsequence. Furthermore, the limit measure $\nuU \in \MeC_{\C}(\R^d)$ is absolutely continuous with respect to the density $\muU$ introduced in \cref{ass:UU}, and
		\begin{equation*}
			\diff\nuU = \pp\, \diff\muU\,, \qquad \mbox{for some } \pp \in L^2(\R^d,\diff\muU)\,.
		\end{equation*}
		
			We now observe  that $\dom(\H0) = \HH^2 \subset H^2_{\mathrm{loc}}(\R^d) \hookrightarrow C^0(\R^d)$ (see \cref{rem:wellQNHN} and \cref{lem:HHemb}). Furthermore, considering that the supports of the measures $\nu_N$ and $\nu_\infty$ are all contained within a fixed bounded set, we have that  $\nu_N \in (\HH^2)' \equiv \HH^{-2}$, for both $d = 2$ and $d = 3$ (see \cref{lem:HHdual}). Hence, by  \cref{lem:Rzext}, $\Rz \nu_N \in L^2(\R^d)$ and, for any $f \in L^2(\R^d)$, 				\bdm
			\braketl{\Rz \nu_N}{f} = \frac{1}{N} \sum_{j=1}^N  \overline{\pp_j} \,(\Rz f)(\xv_j) \\
			= \frac{1}{N} \sum_{j=1}^N \overline{\pp_j} \displaystyle{\int_{\R^d}} \G0(\xv_j, \yv)\, f(\yv)\, \diff \yv
			= \braketl{\tfrac{1}{N} \tx\sum_{j=1}^N \pp_j\, \G0(\,\cdot\,, \xv_j)}{f}\,,
		\edm
		which shows that
		\begin{equation}\label{eq:RznuN}
			\lf( \Rz \nu_N \ri)(\xv) = \frac{1}{N} \sum_{j=1}^N \pp_j\, \G0(\xv\,,\xv_j) \,.
		\end{equation}
		In addition, from the weak convergence of $\nu_{N}$ it follows that 
		\begin{equation}\label{eq:wlimproof}
			\lim_{N \to+\infty} \braketl{\Rz \nu_N}{f}
			=\braketl{\Rz \nuU}{f},
		\end{equation}
		which, by the same arguments outlined above and \cref{ass:a}, implies that
		\begin{align*}
	 		\lf( \Rz \nuU \ri)(\xv)
	 		= \int_{\R^d}\! \G0(\xv\,,\yv)\,\pp(\yv)\,\UU(\yv)\, \diff \yv
	 		= \lf(\Rz\, \UU \pp\ri)(\xv) \,.
		\end{align*}
		Notice that, since $U \in L^\infty(\R^d)$ has compact support and $\pp \in L^2(\R^d,\UU\,\diff \xv)$, we have indeed $\pp \in L^2(\R^d)$, $\UU \pp \in L^2(\R^d)$ and $\Rz\,\UU \pp \in \HH^2 = \dom(\H0) \subset \dom[\Q0]$.
		Summing up, for any weakly convergent sequence $ \lf\{ \psi_N \ri\}_{N \in \N} $ satisfying \eqref{eq: finite energy}, we get 
		\bdm
			\psi_N  = \phi_{N,\lambda} + \frac{1}{N}\sum_{j=1}^N \pp_j\, \G0(\, \cdot \,, \xv_j)
			\;\xrightarrow[N \,\to \,\infty]{\mathrm{w}} \;
			\psi_\infty = \phi_{\infty,\lambda} + \Rz\,\UU \pp  \in \dom[\Q0]\,.	
		\edm

		Recalling the definition of $Q^\lambda_\infty$ in \eqref{eq:Qinf} and \eqref{eq:lambdalimitform}, by direct calculations, we further obtain
		\begin{align*}
			Q^\lambda_\infty[\psi_\infty] & = \lf\|(-i \nabla + \AA) \psi_{\infty} \ri\|_2^2 + \lf\|\sqrt{\VV}\psi_{\infty} \ri\|_2^2 + \lambda^2 \lf\|\psi_\infty\ri\|_2^2 - \meanlrlr{\psi_{\infty}}{\tfrac{\UU}{a}}{\psi_{\infty}} \\
			& = \|(-i \nabla + \AA) \phi_{\infty,\lambda}\|_2^2 
				+ \| \sqrt{\VV} \phi_{\infty,\lambda}\|_2^2 
				+ \lambda^2\,\|\phi_{\infty,\lambda}\|_2^2 \\
			& \qquad 
				+ 2 \Re \big\langle \phi_{\infty,\lambda}, (\H0 + \lambda^2) \Rz\,\UU \pp \big\rangle 
				+ \langle \Rz\, \UU \pp, (\H0 + \lambda^2)\, \Rz\, \UU \pp \rangle 
				- \left\langle\psi_{\infty},\tfrac{\UU}{a}\,\psi_{\infty}\right\rangle \\
			& = \Q0[\phi_{\infty,\lambda}] + \lambda^2\,\lf\|\phi_{\infty,\lambda}\ri\|_2^2
				- \braketr{ \UU \pp}{ \Rz\,\UU \pp }
				+ \braketr{ \pp}{a\, \UU \pp }
				- \Big\| \sqrt{a\, \UU} \pp - \sqrt{\tfrac{\UU}{a}}\,\psi_{\infty}\Big\|_2^2\,.
		\end{align*}		
		Now, we can write
		\bml{
			\QN^\lambda[\psi_N] - Q^\lambda_\infty[\psi_\infty] = \lf( \Q0[\phi_{N,\lambda}] + \lambda^2 \lf\|\phi_{N,\lambda} \ri\|_{2}^2 \,\ri)
				- \lf( \Q0[ \phi_{\infty,\lambda}] + \lambda^2 \lf\|\phi_{\infty,\lambda}\ri\|_2^2 \,\ri)
				    \\ 
				     + \tfrac{1}{N^2}\tx\sum_{i \neq j}  \overline{\pp_i}\, \big[\GaN\big]_{ij}\, \pp_j
				+ \braketr{\UU \pp}{ \Rz\,\UU \pp}	 + \tfrac{1}{N^2}\tx\sum_{j=1}^N  \big[\GaN\big]_{jj}\, |\pp_j|^2
				\\
				- \braketr{\pp}{ a\,\UU \pp}
				+ \Big\| \sqrt{a\, \UU} \pp - \sqrt{\tfrac{\UU}{a}}\,\psi_{\infty}\Big\|_2^2 \,. \label{eq:diffQinfQN}
		}
		Let us discuss separately the addenda appearing on the right-hand side of \eqref{eq:diffQinfQN}.
		Firstly, from the weak convergence of $\{\phi_{N,\lambda}\}_{N \in \N}$ to $\phi_{\infty,\lambda}$ in $\dom[\Q0]$ and the weak lower-semicontinuity of the norm, we readily get
		\begin{equation*}
			\liminf_{N \,\to\,\infty} \lf( \Q0[\phi_{N,\lambda}] + \lambda^2 \lf\|\phi_{N,\lambda} \ri\|_{2}^2 \ri)
				\geqslant \Q0[\phi_{\infty,\lambda}] + \lambda^2 \lf\|\phi_{\infty,\lambda} \ri\|_2^2 \,.
		\end{equation*}
		Secondly, recalling \eqref{eq:G0sing} and noting that $\teA(\xv,\yv) \in C^\infty(\R^d \times \R^d)$ and $\hh(\xv,\yv) \in C^0(\R^d\times \R^d)$, by \cref{lem:consingm} and the weak convergence of $\{\nu_N\}_{N \in \N}$ to $\nuU$ in $\MeC_{\C}(\R^d)$, we infer
		\bml{
			\lim_{N \to +\infty} \tfrac{1}{N^2} {\tx\sum_{i \neq j}}\, \overline{\pp_i}\, \big[\GaN\big]_{ij}\, \pp_j
			= - \lim_{N \to +\infty} \tfrac{1}{N^2} {\tx\sum_{i \neq j}}\,  \overline{\pp_i}\, \lf[ \gs(\xv_i-\xv_j)\,e^{i \teA(\xv_i,\xv_j)} + \hh(\xv_i,\xv_j) \ri] \, \pp_j  \\
			= - \int_{\R^d \times \R^d} \big[\gs(\xv -\yv)\,e^{i \teA(\xv,\yv)} + \hh(\xv,\yv)\big]\,\diff\overline{\nuU}(\xv)\,\diff\nuU(\yv) = -\, \braketr{\UU \pp}{\Rz\, \UU \pp} . \label{eq:convGamma}
		}
		Thirdly, notice that the weak convergence of the complex measures $\nu_N$, together with the continuity and boundedness of $a$ (see \cref{ass:a}), implies
		\begin{equation*}
			 \lim_{N \,\to\, + \infty} \frac{1}{N} \sum_{j = 1}^N \overline{p_j} \sqrt{a(\xv_j)}\,\varphi(\xv_j) =  \int_{\R^d} \sqrt{a}\,\varphi \, \diff \nuU\,,
			\qquad \mbox{for any } \varphi \in C_b(\R^d)\,.
		\end{equation*}
		By Cauchy-Schwartz inequality and the weak convergence of $ \lf\{ \mu_N \ri\}_{N \in \N} $ to $ \mu_\infty$ (see \cref{ass:UU}), it follows that
		\begin{equation*}
			\lf| \braketl{\sqrt{a}\,\pp}{\varphi}_{L^2(\R^d,\,\diff \muU)} \ri|
			\leqslant \left(\lim_{N \,\to\, \infty} \tfrac{1}{N} \tx\sum_{j = 1}^N a(\xv_j)\, |p_j|^2 \right)^{1/2} \lf\|\varphi\ri\|_{L^2(\R^d,\,\diff\muU)}\,.
		\end{equation*}
		Hence,
		\bml{
			\label{eq: 2}
			\braketr{\pp}{\,a\,\UU \pp} = \lf\|\sqrt{a}\,\pp \ri\|_{L^2(\R^d,\, \diff \muU)}^2
			= \sup_{0 \,\neq\, \varphi \in C_b(\R^d)} \frac{\lf| \braketl{\sqrt{a}\,\pp}{\varphi}_{L^2(\R^d,\, \diff \muU)} \ri|^2}{\lf\|\varphi\ri\|_{L^2(\R^d,\,\diff \muU)}} \leqslant \lim_{N \,\to\, \infty} \tfrac{1}{N} \tx\sum_{j = 1}^N a(\xv_j)\, |p_j|^2   \\
			= \disp\lim_{N \,\to\, \infty} \tfrac{1}{N^2} {\tx \sum_{j = 1}^N} \alpha_j\, |p_j|^2
			\leqslant \disp\lim_{N \,\to\, \infty} \tfrac{1}{N^2} \tx\sum_{j = 1}^N \big[\GaN\big]_{jj}\, |p_j|^2 \,,
		}
		by \cref{ass:a} and the explicit form of the diagonal entries of the matrix $\GaN$.
		
		Now, dropping the last addendum on the right-hand side of \eqref{eq:diffQinfQN} thanks to its positivity and exploiting that $\liminf_{N\to \infty}(s_N+t_N)\geqslant \liminf_{N\to \infty} s_N+\liminf_{N\to \infty}t_N$, we combine \eqref{eq:diffQinfQN}-\eqref{eq: 2}, to obtain that
		\begin{equation}
			\liminf_{N \,\to\, \infty}\QN^\lambda[\psi_N] - Q^\lambda_\infty[\psi_\infty] \geqslant 0\,,
		\end{equation}
		which finally proves the claim \eqref{eq:liminfQN}.
	\end{proof}

\subsection{Proof of the $ \Gamma-\limsup $ (in)equality}
	We now turn to the proof of item (ii) in \cref{thm:main}. In this connection, we recall that it suffices to prove the claim for $\psi_{\infty}\in\dom(\Hinf)$, where $\Hinf$ is the selfadjoint operator associated to $\Qinf$ (see condition (ii') in \cref{subsec:Gammaconv} and the related comments).
\medskip

	First of all, keeping in mind that $\dom(\Hinf) \equiv \dom(\H0)$  and recalling the identity \eqref{eq:psiap}, we proceed to construct a so-called {\em recovery sequence} $ \lf\{\psi_N\ri\}_{N \in \N}$ for any given $\psi_{\infty}\in \dom(\H0)$.
	
	\begin{lemma}\label{lem:recovery}
		Let \cref{ass:UU,ass:a} hold. Let $\psi_{\infty} \in \dom(\H0)$ and consider the sequence of functions given by
		\begin{equation}\label{eq:psiNdef}
			\psi_N(\xv) := \phi_{\lambda}(\xv) +\frac{1}{N} \sum_{j=1}^N \pp(\xv_j)\,\G0(\xv\,, \xv_j)\,, 
		\end{equation}
		where
		\beq
			\pp := \tfrac{1}{a}\,\psi_{\infty} \in C^0(\R^d)\,, 
			\qquad \mbox{and}\qquad
			\phi_{\lambda} := \psi_{\infty} - \Rz\, \UU \pp \in \dom(\H0)\,.
			\label{eq:pjphidef}
		\eeq
		Then, the following properties hold:
		\begin{enumerate}[(i)]
			\item $\psi_N\in \dom[\QN^\lambda]$, for all $N \in \N$;
			\item $ \lf\{\psi_N \ri\}_{N \in \N}$ converges in norm to $\psi_\infty$, as $N\to \infty$.
		\end{enumerate}
	\end{lemma}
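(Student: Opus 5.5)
Part (i) is essentially by construction, so the plan is to check the membership conditions directly and then concentrate on (ii). Since $\psi_\infty\in\dom(\H0)\subset C^0(\R^d)$ (see \cref{rem:wellQNHN}) and $a$ is continuous and bounded, $p=\psi_\infty/a$ is continuous wherever $a>0$. Here I will use that $a$ is positive (and, implicitly, bounded below on the compact set $\supp\UU$), so $p$ is well defined and continuous there and the values $p(\xv_j)$ make sense. Moreover $\UU p\in L^2(\R^d)$, because $\UU\in L^\infty$ has compact support and $p$ is bounded on it. Hence $\Rz\,\UU p\in\dom(\H0)$, so $\phi_\lambda\in\dom(\H0)\subset\dom[\Q0]$. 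Each $\G0(\cdot,\xv_j)$ lies in $L^2(\R^d)$ for $d\leq 3$ (\cref{rem:wellQNHN}), so $\psi_N\in L^2$ with the admissible decomposition $\phi_\lambda+\frac1N\sum_j p_j\G0(\cdot,\xv_j)$, $p_j=p(\xv_j)$. This gives (i).

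For (ii), write
\[
\psi_N-\psi_\infty=\tfrac1N\textstyle\sum_j p(\xv_j)\G0(\cdot,\xv_j)-\Rz\,\UU p=\Rz\nu_N-\Rz\nuU,
\]
where $\nu_N=\frac1N\sum_j p(\xv_j)\delta_{\xv_j}$, and I use \eqref{eq:RznuN}. By \cref{ass:UU} and the continuity of $p$ on the compact set $\supp\UU$, $\nu_N\xrightarrow{\mathrm w}\nuU$ with $\diff\nuU=p\,\UU\,\diff\xv$. Weak convergence in $L^2$ then follows from \eqref{eq:wlimproof}. To upgrade to norm convergence, I expand
\[
\|\Rz\nu_N-\Rz\nuU\|_2^2=\langle\Rz\nu_N,\Rz\nu_N\rangle-2\Re\langle\Rz\nu_N,\Rz\nuU\rangle+\|\Rz\nuU\|_2^2 .
\]
The cross term converges to $\|\Rz\nuU\|_2^2$ by weak convergence, because $\Rz\nuU\in L^2$.

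The diagonal term equals $\frac1{N^2}\sum_{i,j}\overline{p_i}p_j\,K(\xv_i,\xv_j)$, where
\[
K(\xv,\yv)=\int\overline{\G0(\zv,\xv)}\,\G0(\zv,\yv)\,\diff\zv
\]
is the kernel of $(\Rz)^2$, i.e.\ minus the $\lambda^2$-derivative of $\G0$. The key point is that for $d\leq 3$ this kernel is much less singular than $\G0$. By \cref{lemma:G0Gfineq}, $|K(\xv,\yv)|\leq(\Gf0*\Gf0)(\xv-\yv)$, which is bounded and continuous for $d\leq 3$: it behaves like $e^{-\lambda|\xv|}/(8\pi\lambda)$ in $d=3$, and in $d=2$ it is also finite at the origin. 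I would argue that $K$ is jointly continuous on $\R^d\times\R^d$. This follows from the continuity of $\zv\mapsto\G0(\zv,\yv)$ in $L^2$ with respect to $\yv$: one can use dominated convergence, with the domination supplied by \cref{lemma:G0Gfineq} and the continuity of $\G0$ off the diagonal. Alternatively, $K=-\partial_{\lambda^2}\G0$, and \cref{lemma: G regularity} shows that the singular part is $\lambda$-independent.

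Given joint continuity and boundedness of $K$ on the compact support, weak convergence of $\overline{\nu_N}\otimes\nu_N$ (the product of weakly convergent measures with uniformly bounded total variation) yields
\[
\frac1{N^2}\sum_{i,j}\overline{p_i}p_jK(\xv_i,\xv_j)\to\int K\,\diff\overline{\nuU}\,\diff\nuU=\|\Rz\nuU\|_2^2 .
\]
The diagonal $i=j$ terms are included harmlessly, since $K$ is bounded and their total contribution is $O(1/N)$. Hence the norm difference tends to $0$.

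The main obstacle is establishing the continuity of $K$ on the diagonal, especially with a magnetic field, where $\G0$ is complex. The route I would try first is the $L^2$-continuity of $\yv\mapsto\G0(\cdot,\yv)$, obtained from the resolvent identity, or from the fact that point evaluation is a continuous map $\R^d\to\HH^{-2}$ (by \cref{lem:HHemb}(ii) applied with $r=2>d/2$) combined with $\Rz:\HH^{-2}\to L^2$ bounded. Indeed $\G0(\cdot,\yv)=\Rz\delta_{\yv}$, and $\yv\mapsto\delta_{\yv}\in\HH^{-2}$ is continuous, at least weak-$*$ continuous. Norm continuity needs the compactness of the embedding on bounded sets, which should follow from local $H^2$ regularity.
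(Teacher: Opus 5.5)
Your proposal is correct and follows the paper's proof: weak convergence of $\Rz\nu_N$ to $\Rz\,\UU\pp$ via \eqref{eq:wlimproof}, plus convergence of $\|\Rz\nu_N\|_2^2$ obtained by testing the weakly convergent product measures against the bounded continuous kernel of $(\Rz)^2$ (your three-term expansion is the same weak-plus-norm argument written out). The only difference is that the paper cites \cite[Thm. 21, item (5)]{BGP07} for the boundedness and continuity of that kernel, whereas you derive it from $|K|\leq \Gf0*\Gf0$ and the $L^2$-continuity of $\yv\mapsto\G0(\cdot,\yv)$; this works either via your $\HH^{-2}$ route or via the generalized dominated convergence theorem (the majorants $\Gf0(\cdot-\yv_n)^2$ move with $n$ but converge in $L^1$).
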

	\begin{proof}
		{\em (i)} Since $\psi_{\infty} \in \dom(\H0) = \HH^2\subset C^0(\R^d)$ by \cref{lem:HHemb}, \cref{ass:a} implies that  $p = \psi_\infty/a$ is a continuous function on $\R^d$. So, the coefficients $p(\xv_j) \in \C$ appearing in \eqref{eq:psiNdef} are well-defined. To say more, considering that $\UU \in L^\infty(\R^d)$ has compact support by \cref{ass:UU}, we have that $\UU p \in L^2(\R^d)$ and, as a consequence, $\phi_{\lambda} = \psi_{\infty} - \Rz\, \UU p \in \dom(\H0) \subset \dom[\Q0]$. In view of \eqref{eq:QNdef} and \eqref{eq:lambdaNform}, the above arguments suffice to infer that $\psi_N\in \dom[\QN^\lambda]$ for any $N\in\N$.

		{\em (ii)} By \eqref{eq:psiNdef} and \eqref{eq:pjphidef},
		\begin{equation*}
			\psi_N(\xv) - \psi_\infty(\xv) = \tfrac{1}{N} \tx\sum_{j=1}^N \pp(\xv_j)\,\G0(\xv\,, \xv_j) - \lf(\Rz\, \UU \pp \ri) (\xv) \,.
		\end{equation*}
		We can then reproduce the argument in the proof of \cref{prop:liminf} (see \eqref{eq:wlimproof} and the related comments) to show that the right-hand side of the above expression converges weakly to $0$ in $L^2(\R^d)$. Then, to conclude the proof we just have to prove that
		\begin{equation}\label{eq:slimeq}
			\lim_{N \to \infty} \lf\|\tfrac{1}{N} \tx\sum_{j=1}^N \pp(\xv_j)\,\G0(\,\cdot\,,\xv_j) \ri\|_2 = \lf\|\Rz\, \UU \pp\ri\|_2\,,
		\end{equation}
		but
		\bdm
			\lf\|\tfrac{1}{N} \tx\sum_{j=1}^N \pp(\xv_j)\,\G0(\,\cdot\,, \xv_j) \ri\|_2^2
			= \tfrac{1}{N^2} {\tx\sum_{i,j=1}^N}\, \overline{\pp}(\xv_i)\, \pp(\xv_j) \int_{\R^d} \G0(\xv_i, \yv)\,\G0(\yv, \xv_j)\, \diff \yv\,
		\edm
		and the integral in the above expression actually coincides with the integral kernel of ${\Rz}^2$ evaluated at $(\xv_i,\xv_j)$. Considering that this kernel is indeed a bounded and continuous function on $\R^d \times \R^d$ \cite[Thm. 21, item (5)]{BGP07}, and that $p$ is continuous on $\R^d$, by weak convergence of product measures \cite[Prop. 2.7.8]{Bo18}, we readily obtain 
		\bmln{
			\lim_{N \to \infty} \lf\|\tfrac{1}{N} {\tx\sum_{j=1}^N} \pp(x_j)\,\G0(\,\cdot\,, \xv_j) \ri\|_2^2
			= \int_{\R^d \times \R^d} \overline{\pp}(\xv)\, \pp(\yv) \left(\int_{\R^d} \G0(\xv, \yv')\,\G0(\yv', \yv)\,\diff \yv'\right) \UU(\xv)\,\UU(\yv)\, \diff \xv \diff \yv \\
			 = \braketr{ \UU \pp}{ {\Rz}^2 \UU \pp}
			= \lf\|\Rz \,\UU \pp \ri\|_2^2\,,
		}
		which proves \eqref{eq:slimeq}, whence the thesis.
    \end{proof}

    We are now in a position to establish the limit superior identity.
     	
	\begin{proposition}\label{prop:limsup}
		\mbox{}		\\
		Let \cref{ass:UU,ass:xixj,ass:a} hold and let $\lambda > 0$. Then, for any $\psi_\infty \in \dom(\H0)$ there exists a sequence $ \lf\{\psi_N \ri\}_{N \in \N}$ such that $\psi_N \in \dom[\QN^\lambda]$ for all $N \in \N$, $\psi_N \xrightarrow[N \to + \infty]{} \psi_\infty$, and
		\begin{equation}\label{eq:limsupQN}
			\lim_{N \,\to\, \infty} \QN^{\lambda}[\psi_N] = Q^\lambda_\infty[\psi_\infty]\,.
		\end{equation}
	\end{proposition}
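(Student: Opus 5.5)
We take as recovery sequence the one built in \cref{lem:recovery}: for $\psi_\infty \in \dom(\H0)$ set $\pp = \psi_\infty/a$, $\phi_\lambda = \psi_\infty - \Rz\,\UU\pp$ and $\psi_N = \phi_\lambda + \frac1N\sum_j \pp(\xv_j)\,\G0(\cdot,\xv_j)$. By \cref{lem:recovery}, $\psi_N \in \dom[\QN^\lambda]$ and $\psi_N \to \psi_\infty$ in norm, so only the energy identity \eqref{eq:limsupQN} remains to be proved. The regular part $\phi_\lambda$ does not depend on $N$, and
\[
	\QN^\lambda[\psi_N] = \Q0[\phi_\lambda] + \lambda^2\|\phi_\lambda\|_2^2 + \tfrac{1}{N^2}\tx\sum_{j}\big[\GaN\big]_{jj}|\pp(\xv_j)|^2 + \tfrac{1}{N^2}\tx\sum_{i\neq j}\overline{\pp(\xv_i)}\big[\GaN\big]_{ij}\pp(\xv_j).
\]
Hence it is enough to compute the limits of the diagonal and off-diagonal sums and compare them with the expansion of $Q^\lambda_\infty[\psi_\infty]$ carried out in the proof of \cref{prop:liminf}. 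There the error term $\|\sqrt{a\UU}\,\pp - \sqrt{\UU/a}\,\psi_\infty\|_2^2$ now vanishes identically, because $\psi_\infty = a\pp$.

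For the diagonal sum we use \cref{ass:a}: $[\GaN]_{jj} = N a(\xv_j) + c_\lambda(\xv_j)$ with $c_\lambda(\xv) = \lim_{\yv\to\xv}(G_0^{\lambda_0}-\G0)(\yv,\xv)$, which is continuous by \cref{rem:wellQNHN}. Consequently
\[
	\tfrac1{N^2}\tx\sum_j [\GaN]_{jj}|\pp(\xv_j)|^2 = \int a|\pp|^2\,\diff\mu_N + \mathcal O(N^{-1}).
\]
Since $a|\pp|^2 = |\psi_\infty|^2/a$ is continuous and $\mu_N \to \muU$ weakly with uniformly compact supports, this sum tends to $\braketr{\pp}{a\,\UU\pp}$. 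We also need $1/a$ to be bounded on $\supp\UU$; this is implicitly assumed already in the proof of \cref{lem:posGamma} through $\min a>0$.

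The off-diagonal sum is the main step. Set $\nu_N = \frac1N\sum_j \pp(\xv_j)\delta_{\xv_j}$. Because $\pp$ is continuous and bounded on the compact set $\supp\UU$, $\nu_N \to \pp\,\muU$ weakly, and \eqref{eq:assaj} holds. We then split $\G0 = \gs(\xv-\yv)e^{i\teA} + \hh$ as in \cref{lemma: G regularity}. The term with $\hh$ converges by weak convergence of product measures, since $\hh$ is continuous and bounded on the compact support. The singular term is handled by \cref{lem:consingm} with $\xi = e^{i\teA}$; this function is smooth, so it is Lipschitz on the relevant compact set after a cutoff that does not change the integrals. Together these give $-\braketr{\UU\pp}{\Rz\,\UU\pp}$, exactly as in \eqref{eq:convGamma}. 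The delicate point is that \cref{lem:consingm} relies on \cref{ass:xixj} to control the near-diagonal contribution, and this is why that assumption enters even though \cref{lem:recovery} did not need it.

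Adding up, $\lim_N \QN^\lambda[\psi_N] = \Q0[\phi_\lambda] + \lambda^2\|\phi_\lambda\|_2^2 + \braketr{\pp}{a\UU\pp} - \braketr{\UU\pp}{\Rz\UU\pp}$. By the algebraic identity derived in the proof of \cref{prop:liminf}, with the last error term equal to zero, the right-hand side equals $Q^\lambda_\infty[\psi_\infty]$, which proves \eqref{eq:limsupQN}.
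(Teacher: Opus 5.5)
Your proposal is correct and follows the paper's proof. You use the same recovery sequence from \cref{lem:recovery} and the same split into diagonal and off-diagonal sums: the diagonal part is handled by weak convergence of $\mu_N$, and the off-diagonal part by \cref{lem:consingm} exactly as in \eqref{eq:convGamma}. The only cosmetic difference is how you recover $Q^\lambda_\infty[\psi_\infty]$: you reuse the algebraic identity from the $\Gamma$-$\liminf$ proof (whose error term vanishes since $\psi_\infty = a\pp$), whereas the paper redoes that integration by parts directly using $(\H0+\lambda^2)\Rz\,\UU\pp = \UU\pp$.
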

		
	\begin{proof}
		Let us consider the recovery sequence $ \lf\{\psi_N \ri\}_{N \in \N}$ introduced in \cref{lem:recovery}. We only have to establish the identity \eqref{eq:limsupQN}. 
		
		Noting that $\psi_\infty, \Rz\, \UU \pp \in \dom(\H0)$, integrating by parts and using the basic identity $(\H0 + \lambda^2) \Rz\, \UU \pp = \UU \pp \in L^2(\R^d)$, we obtain
		\bmln{
			\QN^\lambda[\psi_N] - Q^\lambda_\infty[\infty]
			= \big\| (-i \nabla + \AA) (\psi_{\infty}\! - \Rz\, \UU \pp) \big\|_{2}^{2} 
				+ \big\| \sqrt{\VV}(\psi_{\infty}\! - \Rz\, \UU \pp) \big\|_{2}^{2} 
				+ \lambda^2 \big\|\psi_{\infty}\! - \Rz\, \UU \pp \big\|_{2}^2  \\
				+ \tfrac{1}{N^2}\tx\sum_{i,j=1}^N  \overline{p(\xv_i)}\, \big[\GaN\big]_{ij}\, p(\xv_j)
					 - \left[ \lf\|(- i \nabla + \AA)\psi_\infty \ri\|_{2}^2 + \big\|\sqrt{\VV}\psi_\infty \big\|_{2}^2 + \lambda^2 \lf\|\psi_\infty \ri\|_{2}^2 - \meanlrlr{\psi_\infty}{\tfrac{\UU}{a}}{\psi_\infty} \right]
			\\
			= 	- \braketr{\pp}{a\, \UU \pp}
				+ \braketr{\UU \pp}{\Rz\,\UU \pp} 
				+ \tfrac{1}{N^2}\tx\sum_{j=1}^N  \big[\GaN\big]_{jj}\, |p(\xv_j)|^2
				+ \tfrac{1}{N^2}\tx\sum_{i \neq j} \overline{p(\xv_i)}\, \big[\GaN\big]_{ij}\, p(\xv_j)\,.
		}
		
		Since both $a$ and $p$ are continuous by \cref{ass:a} and \eqref{eq:pjphidef}, the weak convergence of measures in \cref{ass:UU} yields
		\bdm
			\lim_{N \,\to\, \infty} \tfrac{1}{N^2}{\tx\sum_{j = 1}^N \big[\GaN\big]_{jj}\, |p(\xv_j)|^2}
				= \lim_{N \,\to\, \infty} \tfrac{1}{N^2} {\tx\sum_{j = 1}^N \big(N a(\xv_j) + \mathcal{O}(1) \big)\, |p(\xv_j)|^2}
				= \int_{\R^d} a \,|p |^2\, \diff \muU
				= \braketr{p}{a\,\UU p} .
		\edm
		On the other hand, by arguments analogous to those reported in the proof of \cref{prop:liminf} (see, in particular, \eqref{eq:convGamma} and the related comments), we infer that
		\bdm
			\lim_{N \to \infty} \frac{1}{N^2}{\sum_{\substack{i,j=1,\\i\neq j}}^N \overline{p(\xv_i)}\, \big[\GaN\big]_{ij}\, p(\xv_j)}
			= - \braketr{\UU \pp}{ \Rz\,\UU \pp}  ,
		\edm
		which completes the proof.
    \end{proof}

\subsection{Proof of the uniform resolvent convergence}
In this conclusive section we present the proof of \cref{cor:main}. Also in this case, we firstly state an auxiliary result on the compactness properties of the extended resolvent operator $\Rz$ introduced in  \cref{lem:Rzext}.
	
	\begin{lemma}\label{lem:compres}
	Let $\lambda > 0$. If the resolvent $\Rz$ is compact on $L^2(\R^d)$, then it extends to a compact operator
		\begin{equation}
			\Rz : \HH^r \to \HH^{r+2-\varepsilon}\,, \qquad \mbox{for all }\, r \in \R\,,\; \varepsilon > 0\,.
		\end{equation}
	\end{lemma}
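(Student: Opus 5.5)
The plan is to work entirely through the functional calculus of $\H0$, since $\Rz = (\H0+\lambda^2)^{-1}$ and every space $\HH^r$ is defined through powers of $\H0+\one$. Set $T := (\H0+\one)^{1/2}$. For any $r\in\R$, the map $T^{s}$ extends to a unitary isomorphism $\HH^{r}\to\HH^{r-s}$ (isometric by the definition \eqref{eq:innr}, and surjective by density of $\dom\big((\H0+\one)^{r/2}\big)$ in each $\HH^r$). Conjugating by these unitaries, the statement reduces to the following claim: the operator
\[
	T^{r+2-\varepsilon}\,\Rz\,T^{-r} \;=\; (\H0+\one)^{1-\varepsilon/2}(\H0+\lambda^2)^{-1}
\]
is compact on $L^2(\R^d)=\HH^0$. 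Here we use that all these functions of $\H0$ commute, so the power of $r$ cancels.

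To prove the claim, I would write
\[
	(\H0+\one)^{1-\varepsilon/2}(\H0+\lambda^2)^{-1} = f(\H0)\,(\H0+\lambda^2)^{-\varepsilon/2}, \qquad f(t) := \frac{(t+1)^{1-\varepsilon/2}}{(t+\lambda^2)^{1-\varepsilon/2}} .
\]
The function $f$ is bounded on $[0,\infty)$, so $f(\H0)$ is a bounded operator.

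It then remains to show that $(\H0+\lambda^2)^{-\varepsilon/2}$ is compact. Since $\Rz$ is compact and self-adjoint, $\H0$ has purely discrete spectrum, with eigenvalues $e_k\to\infty$ and an orthonormal eigenbasis. In that basis $(\H0+\lambda^2)^{-\varepsilon/2}$ is diagonal with entries $(e_k+\lambda^2)^{-\varepsilon/2}\to 0$, hence it is compact, being a norm limit of finite-rank truncations. Equivalently, $g(\H0)$ is compact for any continuous $g$ vanishing at infinity on $\sigma(\H0)$.

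The product of a bounded operator and a compact one is compact, so the claim follows. Uniqueness of the extension comes from density of $L^2\cap\HH^r$ in $\HH^r$, consistently with \cref{lem:Rzext}.

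The main obstacle is mostly bookkeeping: one has to check that the abstract extensions of $T^s$ to negative-index spaces agree with the functional-calculus operators on dense sets, so that the conjugation identity is legitimate for every $r\in\R$, including $r\le0$. I would verify this first on $\dom\big((\H0+\one)^{N}\big)$ for large $N$, which is dense in every $\HH^r$, and then extend by continuity. The loss of $\varepsilon$ is genuinely needed: for $\varepsilon=0$ the operator becomes $f(\H0)$ with $f\to1$, which is not compact.
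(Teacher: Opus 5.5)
Your proposal is correct and uses the same mechanism as the paper's proof. Compactness of $\Rz$ makes $\sigma(\H0)$ discrete, and in the resulting eigenbasis the multiplier $(\lambda_\ell^2+1)^{1-\varepsilon/2}/(\lambda_\ell^2+\lambda^2)$ tends to zero, so finite-rank spectral truncations converge in norm. The difference is only in packaging: you conjugate by the unitaries $(\H0+\one)^{s/2}:\HH^r\to\HH^{r-s}$ and factor $(\H0+\one)^{1-\varepsilon/2}\Rz$ on $L^2(\R^d)$ as bounded times compact, whereas the paper estimates $\Rz-R^\lambda_L$ directly in $\mathscr{B}(\HH^r,\HH^{r+2-\varepsilon})$. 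As a small bonus, your bounded factor $f(\H0)$ absorbs the $\lambda$- and $\varepsilon$-dependent constant that the paper's one-line estimate leaves implicit.
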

	
	\begin{proof}
		By the compactness of $\Rz$ as an operator acting in $L^2(\R^d)$, we infer that $\H0$ has purely discrete spectrum $\sigma(\H0) = \lf\{\lambda_\ell^2\ri\}_{\ell \in \N}$, with $\lambda_\ell \geqslant \lambda_{\ell +1}$ and $\lambda_\ell \to \infty$ as $\ell \to \infty$ \cite[Prop. 5.12]{Sc12}. In particular, we have the identity
		\begin{equation*}
			\Rz \psi = \tx\sum_{\ell \in \N} \frac{1}{\lambda_\ell^2 + \lambda^2}\, \braket{b_\ell}{\psi}\,b_\ell\,,
		\end{equation*}
	where $\{b_\ell\}_{\ell \in \N}$ is an orthonormal basis of eigenvectors of $\H0$. It is easy to check that $b_\ell \in \HH^r$ for any $r \in \R$.
	Let us now introduce the sequence of truncated operators
		\begin{equation*}
			R^\lambda_L \psi = \tx\sum_{\ell \leqslant L} \frac{1}{\lambda_\ell^2 + \lambda^2}\, \braket{b_\ell}{\psi}\,b_\ell\,\,, \qquad \mbox{for } L \in \N\,.
		\end{equation*}
	These operators have finite-rank, so they are compact. To infer the thesis, it suffices to show that $R^\lambda_L - \Rz \to 0$ as $L \to \infty$, with respect to the topology of bounded operators from $\HH^r$ to $\HH^{r+2-\varepsilon}$, for $\varepsilon > 0$. In this connection, let us point out that
		\bmln{
			\lf\| (\Rz - R^\lambda_L) \psi \ri\|_{\HH^{r + 2 - \varepsilon}}
			= \left\|\tx\sum_{\ell \, \geqslant\, L} \frac{(\lambda_\ell^2 + 1)^{r/2+1-\varepsilon/2}}{\lambda_\ell^2 + \lambda^2}\,\braket{b_\ell}{\psi}\,b_\ell \right\|_2 \\
			\leqslant \tfrac{1}{(\lambda_L^2 + \lambda^2)^\varepsilon}\, \lf\|\tx\sum_{\ell\, \geqslant\, L} (\lambda_\ell^2 + \one)^{r/2}\,\,\braket{b_\ell}{\psi}\,b_\ell \ri\|_2
			\leqslant \tfrac{1}{(\lambda_L^2 + \lambda^2)^\varepsilon}\, \lf\|\psi \ri\|_{\HH^r}.
		}
	 Recalling that $\lambda_L \to \infty$ for $L \to \infty$, this implies that
		\begin{equation*}
			\lf\|\Rz - R^\lambda_L \ri\|_{\mathscr{B}(\HH^{r},\HH^{r + 2 - \varepsilon})} \leqslant \tfrac{1}{(\lambda_L^2 + 1)^\varepsilon} \;\xrightarrow[~L \,\to \,\infty~]\; 0\,,
		\end{equation*}
	ultimately proving the thesis.
	\end{proof}

	We now prove the claim stated in \cref{cor:main}. 
	
    \begin{proof}[Proof of \cref{cor:main}] 
		The first claim follows by \cref{thm:main} and standard arguments of $\Gamma-$convergence theory \cite[\S 13]{DM93}. The lifting of the convergence to norm resolvent sense, whenever the resolvent $\Rz = (\H0 + \lambda^2)^{-1}$ is compact for some $ \lambda > 0  $ is typically taken from granted, but it is in fact a delicate question in general. Indeed, very often, one can exploit a uniform control on the sequence of resolvents $ (\HN + \lambda^2)^{-1} $, like {\it e.g.} when the domains of the operators or of the quadratic forms are independent of $ N $, and approximate them by finite rank operators, which allows to deduce convergence in norm. This, however, is certainly not obvious when, like in our case, the form domains depend on $ N $. Therefore, we provide here a detailed proof of the statement.
		
		 The second claim indeed follows from \cref{thm:normres}, once we assess asymptotic compactness of the forms $\QN$. Namely, we have to show that for any sequence $\lf\{\psi_N \ri\}_{N \in \N} \subset L^2(\R^d)$ satisfying 
		\begin{equation}\label{eq:asycomhyp}
			\liminf_{N \to \infty}\left(\QN[\psi_N] + \lambda^2 \lf\|\psi_N\ri\|_{2}^2\right)< +\infty\,,
		\end{equation}
	for $\lambda > 0$ sufficiently large, one can extract a subsequence which converges strongly in $L^2(\R^d)$ (see \eqref{defn:asymcomp}), that will still be label by $N \in \N$.
	
    Arguing as in the proof of \cref{prop:liminf}, we may infer that the condition \eqref{eq:asycomhyp} ensures that there exists a finite constant $c > 0$ such that
	    \begin{eqnarray*}
			\Q0[\phi_{N,\lambda}] + \lambda^2\, \|\phi_{N,\lambda}\|_{2}^2 \leqslant c \,, \qquad
			\tfrac{1}{N^2}\tx\sum_{j=1}^N  |\pp_j|^2 \leqslant c\,.
		\end{eqnarray*}
 On the one hand, the compactness of $\Rz$ implies that the operator $(\H0 + \lambda^2)^{-1/2}$ is compact, as well. Considering that $\|(\H0 + \lambda^2)^{1/2}\psi_N\|_2^2$ yields an equivalent norm on $\dom[\Q0]$, by classical arguments, we get that $\dom[\Q0]$ is compactly embedded in $L^2(\R^d)$ (see, {\it e.g.}, \cite[Prop. 5.12]{Sc12}).
    	Since $\phi_{N,\lambda}$ is uniformly bounded in $\dom[\Q0]$, we may extract a strongly convergent subsequence.
	
	On the other hand, let us refer once more to the sequence of measures
		\begin{equation*}
			\nu_N := \tfrac{1}{N} \tx \sum_{j =1}^N \pp_j\,\delta_{\xv_j}\,, \qquad \mbox{for } N \in \N\,,
		\end{equation*}
	and recall that, by \eqref{eq:RznuN},
		\begin{equation*}
			\tfrac{1}{N}\tx\sum_{j=1}^N q_j\, \G0(\,\cdot\,, \xv_j)	 = \Rz \nu_N\,.
		\end{equation*}
	Thanks \cref{lem:HHemb}, for any $\varepsilon \in (0,2-d/2)$ and $f \in \HH^{2 - \varepsilon} \subset C^0(\R^d)$, we have
    		\bdm
			\lf| \braket{\nu_N}{f} \ri| 
			\leqslant \tfrac{1}{N} {\tx \sum_{j =1}^N} |\pp_j|\,|f(\xv_j)|
			\leqslant \left(\tfrac{1}{N} {\tx \sum_{j =1}^N} |\pp_j|^2\right)^{\!1/2} \sup_{\xv \in \supp\UU} |f(\xv)|
			\leqslant c^{1/2}\, \lf\|f \ri\|_{\HH^{2-\varepsilon}}\,.
    		\edm
	Hence, the sequence $\nu_N$ is uniformly bounded in $(\HH^{2-\varepsilon})' = \HH^{-(2-\varepsilon)}$ (see \cref{lem:HHdual}) and by \cref{lem:compres}, we infer that $\tfrac{1}{N}\tx\sum_{j=1}^N q_j\, \G0(\,\cdot\,, \xv_j)$ converges strongly in $L^2(\R^d)$, up to the extraction of a subsequence.
	
	Putting together the above results, by a diagonal argument, we obtain that $\psi_N$ admits a subsequence which converges strongly in $L^2(\R^d)$, thus concluding the proof.      
    \end{proof}

\end{document}